\date{}
\title{Nonparametric predictive inference for discrete data via Metropolis-adjusted Dirichlet sequences}
\author[1]{Davide Agnoletto}
\author[2]{Tommaso Rigon}
\author[1]{David B. Dunson}
\affil[1]{Department of Statistical Science, Duke University, Durham, NC, USA}
\affil[2]{Department of Economics, Management, and Statistics, University of Milano--Bicocca, 20126 Milano, Italy}
\providecommand{\keywords}[1]{
  \small 
  \textbf{\textit{Keywords:}} #1
  \normalsize
}
\newtheorem{theorem}{Theorem}
\newtheorem{corollary}{Corollary}
\newtheorem{lemma}{Lemma}
\newtheorem{proposition}{Proposition}
\theoremstyle{definition}
\newtheorem{definition}{Definition}
\newtheorem{remark}{Remark}
\theoremstyle{remark}
\newcommand{\y}{\bm{y}}
\newcommand{\x}{\bm{x}}
\newcommand{\dd}{\mathrm{d}}
\newcommand{\var}{\mathrm{var}}
\newcommand{\E}{\mathds{E}}
\newcommand{\PP}{\mathds{P}}
\newcommand{\R}{\mathds{R}}
\newcommand{\N}{\mathds{N}}
\newcommand{\one}{\mathds{1}}
\begin{document}

\maketitle

\begin{abstract}
This article is motivated by challenges in conducting Bayesian inferences on unknown discrete distributions, with a particular focus on count data. To avoid the
computational disadvantages of traditional mixture models, we develop a novel Bayesian predictive approach. In particular, our Metropolis-adjusted Dirichlet (\textsc{mad}) sequence model characterizes the predictive measure as a mixture of a base measure and Metropolis-Hastings kernels centered on previous data points. The resulting \textsc{mad} sequence is asymptotically exchangeable and the posterior on the data generator takes the form of a martingale posterior. This structure leads to straightforward algorithms for inference on count distributions, with easy extensions to multivariate, regression, and binary data cases. We obtain a useful asymptotic Gaussian approximation and illustrate the methodology on a variety of applications.
\end{abstract}

\keywords{Count data; Martingale posterior; Nonparametric Bayes; Predictive inference; Smoothing}

\section{Introduction}
\label{sec:intro}
Bayesian nonparametric modeling of count distributions is a challenging task, and several solutions have been proposed over the years.
A Bayesian nonparametric mixture of Poisson kernels lacks flexibility since both centering and scaling of the kernels are controlled by a single parameter.
\cite{canale2011bayesian}, \citet{Canale2017} proposed a more flexible class of kernels obtained by rounding continuous distributions. However, estimating these  mixture models using Markov chain Monte Carlo (\textsc{mcmc}) algorithms can be cumbersome in practice.
Alternatively, one can directly specify a Dirichlet process (\textsc{dp}) prior \citep{ferguson1973} on the data generator as
\begin{equation}
Y_i\mid P \overset{\textup{iid}}\sim P, \qquad P\sim\text{DP}(\alpha, P_0),
\label{eqn:dp}
\end{equation}
where $Y_i\in\mathcal{Y}$, $i=1,\ldots,n$, are assumed to be independent and identically distributed (iid) conditionally on $P$, $\mathcal{Y}$ is a countable space, $\alpha$ is the precision parameter, and $P_0$ is a base parametric distribution, such as a Poisson.
Despite the appealing flexibility, it is well known that assuming \eqref{eqn:dp} produces the conjugate posterior distribution
\begin{equation*}
P\mid y_{1:n} \sim \mathrm{DP}\Big(\alpha+n, P_0 + \sum_{i=1}^n\delta_{y_i}\Big),
\end{equation*}
with $y_{1:n}=(y_1,\dots,y_n)$, which lacks smoothing and therefore tends to have poor performance unless the distribution is concentrated on a small number of values.
However, using traditional Bayesian nonparametric machinery, it is not clear how to specify a nonparametric process $P$ having the same simplicity and flexibility as \eqref{eqn:dp} while allowing smoothing. This motivates taking a predictive approach, which bypasses the need to directly specify a prior for $P$.

In a predictive approach, the statistical uncertainty is taken into account by modeling the conditional distribution of new data $Y_{n+1}$ given past observations $y_{1:n}$. \cite{blackwell1973ferguson} described the predictive rule resulting from the \textsc{dp} prior using a Pólya urn scheme for which $Y_1\sim P_0$ and $Y_{n+1}\mid y_{1:n}\sim P_n$ for $n\ge 1$, where
\begin{equation}
P_n(A) = \PP(Y_{n+1}\in A\mid y_{1:n})
= \frac{\alpha}{\alpha+n} P_0(A) + \frac{1}{\alpha+n}\sum_{i=1}^n\mathds{1}(y_i\in A),
\label{eqn:dp_predictive}
\end{equation}
for any measurable set $A \subseteq\mathcal{Y}$ and where $\one(\cdot)$ denotes the indicator function.
The sequence $(Y_n)_{n\ge1}$ generated according to \eqref{eqn:dp_predictive} is known as a Dirichlet sequence and is a conditionally iid~random sample from $P$, 
with ${P\sim\textsc{DP}(\alpha,P_0)}$. Although
predictive laws can be obtained by marginalizing models of the form in \eqref{eqn:dp}, potentially with alternative priors used in place of the \textsc{dp}, we instead start by defining the predictive law $P_n$. Thanks to Ionescu-Tulcea's theorem  one can characterize the law of the joint sequence $(Y_n)_{n \ge 1}$ by specifying a sequence of one-step-ahead predictive distributions. Moreover, if the resulting sequence is exchangeable, de Finetti's theorem implies the existence of a random probability measure $P$, conditionally on which the data are iid. This strategy bypasses prior elicitation; in our case, avoiding the difficult problem of directly specifying a prior for $P$.
Moreover, it highlights how the tasks of inference and prediction are intrinsically connected within the Bayesian paradigm.
There is a growing recent literature discussing the foundations of the Bayesian predictive approach, see for example \citet{fortini2012predictive, fortini2023prediction, fortini2024exchangeability}, \citet{holmes2023statistical}, 
\citet{fong_holmes_2023}, \citet{berti2025probabilistic}.

It is reasonable to think that a better estimator for count data distributions would be obtained by replacing $\mathds{1}(\cdot)$ in \eqref{eqn:dp_predictive} with a more diffuse kernel that allows the borrowing of information between the atoms. Early ideas along these lines can be found in \citet{hjort1994bayesian, hjort1994local}, whereas \citet{berti2023kernel} provide a rigorous recent discussion.
Unfortunately, for a general choice of the kernel, the corresponding sequence of predictive distributions does not generate an exchangeable sequence of random variables \citep{Sariev2024}, which is required for recovering the posterior distribution through de Finetti's theorem.
To address this problem, we consider a generalization of the predictive approach that relaxes the exchangeability assumption.

In this paper, we introduce a novel generalized Bayesian predictive method to model count data distributions.
We propose the recursive predictive rule
\begin{equation}
P_n(A) =(1-w_n) P_{n-1}(A) +  w_nK_n(A\mid y_n),
\label{eqn:kernel_dp}
\end{equation}
for any measurable set $A \subseteq \mathcal{Y}$, where $(w_n)_{n\ge1}$ is  a sequence of decreasing weights $w_n\in(0,1)$, and, for all $n\ge1$ we let $K_n(\;\cdot\mid y_n) = K(\;\cdot\;;y_n,P_{n-1})$
be a sequence of Metropolis-Hastings (\textsc{mh}) kernels, which will be formally defined in Section~\ref{sec:mad}. The special case \eqref{eqn:dp_predictive} is recovered for $w_n=(\alpha+n)^{-1}$ and $K_n(\cdot\mid y_n)=\one(y_n \in\cdot)$. The simple structure \eqref{eqn:kernel_dp} resembles a frequentist kernel density estimator (\textsc{kde}, \citealp{Wand1995}) under a particular choice of weights $w_n$, where \textsc{mh} kernels centered at the observed data points serve a similar role to local kernels in \textsc{kde}. More precisely, if $w_n = (\alpha +n)^{-1}$, then the recursion in equation~\eqref{eqn:kernel_dp} becomes
$$
P_n(A) = \frac{\alpha}{\alpha + n}P_0(A) + \frac{1}{\alpha + n}\sum_{i=1}^nK_i(A\mid y_i).
$$
This is similar in spirit to the recursive predictive algorithms of \citet{hahn2018recursive, fong_holmes_2023}, but, crucially, we specify a kernel $K_n(\cdot \mid y_n)$ tailored for count data. The predictive rule \eqref{eqn:kernel_dp} with \textsc{mh} kernels generates a Metropolis-adjusted Dirichlet (\textsc{mad}) sequence, which we show to be conditionally identically distributed (\textsc{cid}, \citealp{kallenberg1988spreading, berti2004limit}). In particular, \textsc{mad}s are \textsc{cid} sequences admitting a recursive structure \citep{berti2023without}. Consequently, \textsc{mad} sequences are asymptotically exchangeable, which means that, for large $n$, $(Y_n)_{n\ge1}$ can be regarded as an approximately iid sample from a random distribution $P$.
In other words, the predictive \eqref{eqn:kernel_dp} implies the existence of an implicit prior for $P$, which is a novel alternative to a \textsc{dp}. The posterior of $P$ can be regarded as a martingale posterior in the sense of \cite{fong_holmes_2023}, implying that a predictive resampling algorithm can be used to obtain posterior samples for $P$ allowing uncertainty quantification.

The literature about \textsc{cid} sequences \citep[e.g.][]{berti2004limit, Berti2012, Fortini2018, berti2021b, berti2023without, berti2025probabilistic} has strong roots in probability theory, but the statistical properties of such sequences have not been fully investigated, with a few notable exceptions. An approach based on the direct specification of copulas is discussed in \citet{hahn2018recursive, fong_holmes_2023}, whose focus is on continuous data with a Gaussian copula; see also \citet{cui2024martingale, fong2024bayesian, Bissiri2025} for related works. Such an update is not designed for count data, and when modeling multivariate distributions, it is dimension ordering dependent. As an alternative, \cite{fortini_petrone_2020} provided a Bayesian interpretation for the popular Newton recursive density estimator \citep{newton1999recursive, newton2002nonparametric} and proved that the resulting sequence is \textsc{cid}. Although, in principle, the Newton estimator can be adapted to the discrete case, the algorithm requires a numerical evaluation of an integral at each step, which could lead to computational bottlenecks if a closed form is not available;
using \eqref{eqn:kernel_dp} is substantially simpler.

We argue that employing \textsc{mh} kernels is particularly convenient for modeling count distributions for several reasons.
First, these kernels offer a clear interpretation as a mixture of an arbitrary kernel centered on past data and the \textsc{dp} update, which is recovered as a special case. 
This structure enables the borrowing of information between nearby values in the support when updating the predictive rule, resulting in a smoother estimator than the \textsc{dp}.
Second, computing the \textsc{mad} predictive is easy in practice, which is crucial since \eqref{eqn:kernel_dp} represents the posterior mean of $P$.
Consequently, obtaining a point estimate is efficient and does not require predictive resampling nor \textsc{mcmc}.
Third, the properties of the kernel facilitate extensions to the multivariate case and nonparametric regression, as the update remains invariant to the ordering of dimensions. This approach is also suitable for modeling multivariate binary data and proves particularly effective for nonparametric regression with binary covariates, as demonstrated through simulations and real data analyses.
Moreover, we obtain an asymptotic Gaussian approximation for the posterior distribution of $P$, highlighting the roles of the weights $w_n$ and the kernel bandwidth in the learning mechanism.

\section{Metropolis-adjusted Dirichlet sequences}\label{sec:mad}

We introduce a novel nonparametric predictive framework for modeling count data. Our goal is to develop a predictive construction that preserves the simplicity and flexibility of the Pólya urn scheme, while allowing for borrowing of information between atoms. Let $(Y_n)_{n\ge1}$ be a sequence of random variables defined on a countable space $\mathcal{Y}$. In the proposed approach, we assume that
\begin{equation}
Y_1\sim P_0, \qquad \text{and} \qquad Y_{n+1}\mid y_{1:n} \sim P_n,
\label{eqn:sequental_contruction}
\end{equation}
for $n\ge 1$, with $P_n(\cdot)=\PP(Y_{n+1}\in\cdot\mid y_{1:n})$ defined as in \eqref{eqn:kernel_dp} for some sequence of decreasing weights $(w_n)_{n \ge 1}$. In practice, we require $\sum_{n\ge1}w_n = \infty$ and $\sum_{n\ge1}w_n^2<\infty$ as this ensures good frequentist properties \citep{fortini_petrone_2020, fong_holmes_2023}, although this is not necessary to guarantee that $(Y_n)_{n \ge 1}$ is \textsc{cid}. The weights $w_n = (\alpha +n)^{-\lambda}$, with $\alpha>0$ and $\lambda\in(0.5,1]$, satisfy the required conditions. Importantly, this sequential construction leads to a unique and well-defined joint distribution $\mathds{P}$ for the entire sequence $(Y_n)_{n \ge 1}$ thanks to the celebrated Ionescu--Tulcea theorem \citep[e.g.][]{berti2025probabilistic}, which serves as the mathematical foundation of the predictive approach. To streamline the notation, we use capital letters $P_0$, $P_n(\cdot)$, and $K_n(\cdot \mid y_n)$ to denote probability measures, and lowercase $p_0(y)$, $p_n(y)$, and $k_n(y \mid y_n)$ for their corresponding probability mass functions. This convention will be used consistently throughout the paper. 

\begin{definition}
Let $\mathcal{Y}$ be a countable space and consider the recursive predictive rule~\eqref{eqn:kernel_dp} for $n \ge 1$
\begin{equation*}
p_n(y) = (1 - w_n)p_{n-1}(y) + w_n k_n(y\mid y_n), \qquad y \in \mathcal{Y},
\end{equation*}
with initial distribution $p_0(y)$ and weights $w_n \in (0,1)$. The Metropolis-Hastings kernel is defined as
\begin{equation}
    k_n(y\mid y_n) = \gamma_n(y,y_n) k_*(y\mid y_n) + \one(y=y_n)\Big[1 - \sum_{z\in\mathcal{Y}}\gamma_n(z,y_n)k_*(z\mid y_n)\Big],
    \label{eqn:mh_probability}
\end{equation}
where $k_*(y\mid y_n)$ is a discrete base kernel and  $\gamma_n(y, y_n)$ is a probability weight defined as
\begin{equation}
    \gamma_n(y,y_n) = \gamma(y, y_n, p_{n-1}) =
    \min\left\{1,\frac{p_{n-1}(y) k_*(y_n\mid y)}{p_{n-1}(y_n) k_*(y\mid y_n)}\right\}.
    \label{eqn:acc_probability}
\end{equation}
The predictive distribution $p_n$ is referred to as the Metropolis-adjusted Dirichlet (\textsc{mad}) distribution with weights $(w_n)_{n \ge 1}$, base kernel $k_*$, and initial distribution $p_0$. The corresponding sequence of random variables $(Y_n)_{n \ge 1}$, as defined in~\eqref{eqn:sequental_contruction}, is called a \textsc{mad} sequence.

\label{def:mad}
\end{definition}

The kernel probability mass function \eqref{eqn:mh_probability} can be interpreted as a mixture of a re-weighted discrete base kernel and a point mass at $y_n$. Specifically, for each $y\in\mathcal{Y}$, the base kernel probability $k_*(y\mid y_n)$ is weighted by $\gamma_n(y,y_n)$, which reflects how well it aligns with $p_{n-1}(y)$, as defined in \eqref{eqn:acc_probability}.
Since ${\sum_{y\in\mathcal{Y}}\gamma_n(y,y_n)k_*(y\mid y_n)\le 1}$, the remaining probability mass is assigned to $y_n$.
When $\gamma_n(y,y_n)=0$ for every $y\in\mathcal{Y}$, we obtain $k_n(y\mid y_n)=\one(y=y_n)$, and the \textsc{dp} predictive distribution is recovered if $w_n=(\alpha+n)^{-1}$.
At the other extreme, when $\gamma_n(y,y_n)=1$ for every $y\in\mathcal{Y}$, we obtain $k_n(y\mid y_n) = k_*(y\mid y_n)$.
For intermediate values of $\gamma_n(y,y_n)$, the \textsc{mh} kernel adjusts the Polya-urn scheme, introducing a smooth deviation from the \textsc{dp} update. This flexibility makes \textsc{mad} sequences particularly appealing for modeling discrete distributions. 

\subsection{Bayesian properties of MAD sequences}
\label{sec:bayes}

We provide a formal Bayesian justification for using a \textsc{mad} sequence to model discrete data. Although \textsc{mad} sequences are not exchangeable, they are conditionally identically distributed (\textsc{cid}). The notion of \textsc{cid} sequences was introduced by \citet{kallenberg1988spreading} as a relaxation of exchangeability, and it has been further developed by \citet{berti2004limit}. A sequence $(Y_n)_{n \ge 1}$ of random variables is said to be \textsc{cid} if $\mathds{P}(Y_n \in \cdot) = P_0(\cdot)$ for every $n \ge 1$, meaning that each $Y_n$ has the same marginal distribution $P_0$ a priori, and
\begin{equation}
    \PP(Y_{n+k} \in \cdot \mid y_{1:n}) = \PP(Y_{n+1} \in \cdot \mid y_{1:n}) = P_n(\cdot),
    \qquad \text{for all } k \ge 1, \ n \ge 1.
    \label{eqn:cid}
\end{equation}
It is sufficient to verify this condition for $k = 1$ in order to ensure its validity for all $k \ge 1$. Equivalently, \citet{fong_holmes_2023} express condition~\eqref{eqn:cid} in a way that emphasizes the martingale property of the predictive distributions $(P_n)_{n \ge 1}$:
\begin{equation}
    \mathds{E}\{P_{n+1}(\cdot) \mid y_{1:n}\} = P_n(\cdot), \qquad n \ge 1.
    \label{eqn:martingale}
\end{equation}
Note that $\mathds{E}\{P_{n+1}(\cdot) \mid y_{1:n}\} = \mathds{P}(Y_{n+2} \in \cdot \mid y_{1:n})$, which highlights the equivalence between the martingale and \textsc{cid} conditions. See \citet[][Theorem 3.1]{Berti2012} for a formal proof.

\begin{theorem}
\label{th:cid}
Let $(Y_n)_{n \ge 1}$ be a \textsc{mad} sequence. Then, for every choice of weight sequence $(w_n)_{n \ge 1}$, discrete base kernel $k_*$, and initial distribution $p_0$, the sequence $(Y_n)_{n \ge 1}$ is \textsc{cid}.
\end{theorem}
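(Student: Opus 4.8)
The plan is to use the equivalence, recalled just before the statement, between the CID property and the martingale condition \eqref{eqn:martingale}. Since $P_n$ is by construction the conditional law of $Y_{n+1}$ given $y_{1:n}$, the prediction part of \eqref{eqn:cid} for $k=1$ holds automatically, and by the result of \citet[Theorem 3.1]{Berti2012} it suffices to establish $\mathds{E}\{P_{n+1}(\cdot) \mid y_{1:n}\} = P_n(\cdot)$ for every $n \ge 1$, which upgrades to \eqref{eqn:cid} for all $k \ge 1$. I would work at the level of mass functions, fixing $y \in \mathcal{Y}$.

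First I would apply the recursion to $p_{n+1}$ and take the conditional expectation given $y_{1:n}$. Because $p_n(y)$ and the weight $w_{n+1}$ are measurable with respect to the natural filtration $\mathcal{F}_n = \sigma(Y_1,\ldots,Y_n)$, and the acceptance probabilities $\gamma_{n+1}(\cdot,\cdot)$ in \eqref{eqn:acc_probability} depend on the data only through $p_n$ (hence are also $\mathcal{F}_n$-measurable), the only remaining randomness is the proposal point $Y_{n+1}$, whose law is $P_n$. This yields
$$\mathds{E}\{p_{n+1}(y) \mid y_{1:n}\} = (1 - w_{n+1}) p_n(y) + w_{n+1} \sum_{z \in \mathcal{Y}} k_{n+1}(y \mid z)\, p_n(z).$$
The martingale identity then reduces to the single requirement $\sum_{z \in \mathcal{Y}} k_{n+1}(y \mid z)\, p_n(z) = p_n(y)$; that is, $p_n$ must be a stationary distribution of the transition kernel $k_{n+1}(\cdot \mid \cdot)$.

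This stationarity is exactly what the Metropolis-Hastings construction delivers, and verifying it is the crux of the argument. The role of the acceptance probability \eqref{eqn:acc_probability} is to make $k_{n+1}$ satisfy detailed balance with respect to $p_n$: for $y \ne z$ the off-diagonal mass from \eqref{eqn:mh_probability} is $k_{n+1}(y \mid z) = \gamma_{n+1}(y,z)\, k_*(y \mid z)$, and substituting the $\min\{\cdot,\cdot\}$ form of $\gamma_{n+1}$ shows that both $p_n(z)\, k_{n+1}(y \mid z)$ and $p_n(y)\, k_{n+1}(z \mid y)$ collapse to the symmetric quantity $\min\{p_n(z)\, k_*(y \mid z),\, p_n(y)\, k_*(z \mid y)\}$, while the diagonal $y = z$ is balanced trivially. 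Summing detailed balance over $z$ and using that $k_{n+1}(\cdot \mid y)$ is a probability mass function gives $\sum_z p_n(z)\, k_{n+1}(y \mid z) = p_n(y) \sum_z k_{n+1}(z \mid y) = p_n(y)$, which is the stationarity identity required above.

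Plugging this back gives $\mathds{E}\{p_{n+1}(y) \mid y_{1:n}\} = (1 - w_{n+1}) p_n(y) + w_{n+1} p_n(y) = p_n(y)$, establishing the martingale property and hence the prediction part of CID. It then remains to confirm the marginal requirement $\mathds{P}(Y_n \in \cdot) = P_0(\cdot)$: taking unconditional expectations shows $(P_n)_{n \ge 0}$ is a martingale, so $\mathds{E}\{P_n(\cdot)\} = P_0(\cdot)$ for all $n$ by the tower property (as $P_0$ is deterministic), which combined with $Y_1 \sim P_0$ gives $\mathds{P}(Y_n \in \cdot) = \mathds{E}\{P_{n-1}(\cdot)\} = P_0(\cdot)$. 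The two points to handle carefully are the measurability claim, namely that $\gamma_{n+1}$ and $k_{n+1}$ depend on the past only through $p_n$, and the detailed-balance verification for the $\min$-type ratio; everything else is routine. I expect the detailed-balance computation to be the main obstacle, since it is precisely there that the Metropolis correction, rather than an arbitrary kernel, is essential for preserving the CID structure.
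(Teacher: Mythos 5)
Your proposal is correct and follows essentially the same route as the paper's own proof: reduce the \textsc{cid} property to the martingale condition $\mathds{E}\{P_{n+1}(\cdot)\mid y_{1:n}\}=P_n(\cdot)$, expand the recursion so that everything hinges on the stationarity identity $\sum_{z\in\mathcal{Y}}k_{n+1}(y\mid z)p_n(z)=p_n(y)$, and obtain that identity from the detailed balance property of the Metropolis--Hastings kernel with respect to $p_n$. The only differences are that you spell out the detailed-balance computation (the collapse to $\min\{p_n(z)k_*(y\mid z),\,p_n(y)k_*(z\mid y)\}$) and the marginal condition $\mathds{P}(Y_n\in\cdot)=P_0(\cdot)$, which the paper cites or leaves implicit.
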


Theorem~\ref{th:cid} follows as a special case of \citet[][Theorem 2]{berti2023without}, since \textsc{mad} sequences fall within the broad class of predictive updates based on stationary kernels.  Let $\theta = P(f) = \sum_{y \in \mathcal{Y}} f(y) p(y)$ denote any functional of interest, and analogously let $\theta_n = P_n(f) = \sum_{y \in \mathcal{Y}} f(y) p_n(y)$ denote its predictive counterpart.

\begin{corollary}[\citealp{berti2004limit}]
\label{cor:cid}
Consider a \textsc{mad} sequence $(Y_n)_{n\ge1}$. Then, $\PP$-a.s.,
\begin{itemize}
\item[$(a)$] The sequence is asymptotically exchangeable, that is 
\begin{equation*}
    (Y_{n+1}, Y_{n+2}, \ldots) \overset{\textup{d}}{\longrightarrow} (Z_1, Z_2, \ldots), \qquad n \rightarrow \infty,
\end{equation*}
where $(Z_1,Z_2,\ldots)$ is an exchangeable sequence with directing random probability measure $P$;
\item[$(b)$] The corresponding sequence predictive distributions $(P_n)_{n\ge1}$ and empirical distributions $(\Hat{P}_n)_{n\ge1}$, where $\Hat{P}_n := n^{-1}\sum_{i=1}^n \delta_{Y_i}$, converge weakly to the random probability measure $P$;
\item[$(c)$] For every $n\ge 1$ and every integrable function $f:\mathcal{Y}\rightarrow\R$, we have $\mathds{E}(\theta) = \theta_0$ and $\mathds{E}(\theta \mid y_{1:n}) = \theta_n$.
\end{itemize}
\end{corollary}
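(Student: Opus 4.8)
The three claims are standard consequences of the \textsc{cid} property established in Theorem~\ref{th:cid}, so the plan is to reduce each one to a martingale argument and then invoke the limit theory of \citet{berti2004limit}. I would dispatch part $(c)$ first, since it underpins the rest. For a fixed measurable $A\subseteq\mathcal{Y}$, the \textsc{cid} condition in the martingale form~\eqref{eqn:martingale} says that $(P_n(A))_{n\ge0}$ is a martingale with respect to the natural filtration $\mathcal{F}_n=\sigma(y_{1:n})$, and it is bounded in $[0,1]$. By Doob's martingale convergence theorem it converges $\PP$-a.s.\ and in $L^1$ to a limit $P(A)$; since $\mathcal{Y}$ is countable, carrying this out simultaneously for every singleton $\{y\}$ produces a limiting mass function $p(y)=\lim_n p_n(y)$ and hence a random probability measure $P$. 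The tower property together with the $L^1$ convergence gives $\E\{P(A)\mid y_{1:n}\}=P_n(A)$, and taking $n=0$ yields $\E\{P(A)\}=P_0(A)$. Writing $\theta=P(f)=\sum_y f(y)p(y)$ and $\theta_n=P_n(f)$, passing the conditional expectation through the dominated sum over the countable support extends both identities from indicators to arbitrary integrable $f$, which is exactly $\E(\theta)=\theta_0$ and $\E(\theta\mid y_{1:n})=\theta_n$.

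Part $(b)$ follows from the same convergence. The a.s.\ convergence $p_n(y)\to p(y)$ for every $y$ in the countable space $\mathcal{Y}$ is precisely weak convergence $P_n\Rightarrow P$. For the empirical measures, fix $A$ and note that by the \textsc{cid} property $X_i:=\one(Y_i\in A)$ satisfies $\E(X_i\mid y_{1:i-1})=P_{i-1}(A)$, so the $X_i-P_{i-1}(A)$ are bounded martingale increments; an $L^2$-martingale-convergence plus Kronecker's lemma argument then gives $\hat P_n(A)-\frac1n\sum_{i=1}^n P_{i-1}(A)\to0$ $\PP$-a.s., and since $P_{i-1}(A)\to P(A)$ its Ces\`aro average has the same limit. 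Hence $\hat P_n(A)\to P(A)$ for each $A$, so $\hat P_n$ shares the weak limit $P$; this is the content of \citet{berti2004limit} specialized to a countable state space.

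Part $(a)$ is the substantive step and the one I expect to be the main obstacle, because it concerns the joint law of the entire tail $(Y_{n+1},Y_{n+2},\dots)$ rather than one-dimensional marginals. The strategy is to verify the hypotheses of the asymptotic-exchangeability theorem for \textsc{cid} sequences in \citet{berti2004limit}: once $P_n\Rightarrow P$ a.s.\ has been established in part $(b)$, their result shows that the shifted sequence converges in distribution, in the product topology, to an exchangeable sequence $(Z_1,Z_2,\dots)$ whose directing random measure is exactly this limit $P$. The delicate point is upgrading the marginal predictive convergence to convergence of all finite-dimensional distributions of the tail, which is handled in \citet{berti2004limit} by combining the \textsc{cid} property with the martingale limit $P$ and does not follow from $(b)$ and $(c)$ by a soft argument. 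Since our \textsc{mad} sequence is \textsc{cid} by Theorem~\ref{th:cid} and takes values in a countable, hence Polish, space, all regularity assumptions of that theorem are met and the conclusion applies directly.
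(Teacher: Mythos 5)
Your proposal is correct and takes essentially the same route as the paper: parts (a) and (b) are reduced to the \textsc{cid} limit theory of \citet{berti2004limit} (the paper cites their Theorem 2.5 for part (a) and Lemma 2.4 together with Theorem 2.2 for part (b)), while part (c) rests on the martingale property of the predictives---your Doob-plus-tower argument is the paper's identity $\int_{\mathcal{P}} p(y)\,\Pi(\dd P\mid y_{1:n}) = p_n(y)$ made explicit. The only step you gloss over is that the singleton limits $p(y)=\lim_n p_n(y)$ do not automatically define a probability mass function (mass could escape to infinity on a countable space); the paper secures the existence of the limit measure $P$ by citing \citet{berti2004limit} and strengthens the convergence in Lemma~\ref{lemma_1}, whereas in your construction this would need the extra observation that $\E\{p(y)\}=p_0(y)$ for every $y$, so that Tonelli and Fatou force $\sum_{y\in\mathcal{Y}}p(y)=1$ almost surely.
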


These results ensure that \textsc{mad} sequences are a valid Bayesian method for nonparametric inference, bypassing direct prior specification on the space of probability distributions.  The results of Corollary~\ref{cor:cid} are based on the general results provided by \cite{berti2004limit} for \textsc{cid} sequences.  Although the probability law implied by a \textsc{mad} sequence is not ordering-invariant, part~(a) of Corollary~\ref{cor:cid} guarantees that this dependence vanishes asymptotically. As a result, for large $n$, exchangeability is recovered and, informally, we will say $Y_i\mid P \overset{\textup{iid}}{\sim} P$ for large $n$. Thus, an asymptotic equivalent of de Finetti’s theorem holds: each \textsc{mad} sequence has a corresponding unique prior on $P$ and part $(b)$ guarantees that $P$ is defined as the limit of the predictive distributions. This convergence holds in the weak sense, but it can be strengthened to total variation convergence (see Lemma~\ref{lemma_1} in Appendix~\ref{appendix_proofs}). Finally, by part $(c)$ we have $\theta = \E\{P(f)\}=P_0(f) = \theta_0$ for every integrable function $f$, so that $P_0$ retains the role of a base measure as for standard Dirichlet sequences, providing an initial guess at $P$. In particular, if $f(y) = \mathds{1}_A(y)$ is the indicator function for some measurable $A \subseteq \mathcal{Y}$, then we get $\mathds{E}\{P(A)\} = P_0(A)$. Consequently, prior or external information can still be incorporated into the model through $P_0$. Theorem~\ref{th:cid} and Corollary~\ref{cor:cid} are valid for a general sequence of weights $(w_n)_{n\ge1}$. When $w_n=(\alpha+n)^{-\lambda}$, the parameter $\alpha$ retains the same interpretation as the precision parameter in the \textsc{dp}, controlling the degree of shrinkage of $P$ toward $P_0$. Setting large values for $\alpha$ denotes a strong confidence in the base measure, while $\alpha=1$ provides a useful weakly informative default value. For a discussion of the role of $\lambda$ based on asymptotic considerations, we refer to Section~\ref{sec:asymptotics}. 

Given the data $y_{1:n}$, the predictive distributions $P_n$ converge to the posterior distribution for $P$, which can be considered a martingale posterior in the sense of \citet{fong_holmes_2023}. Prior and posterior sampling of $P$ can be performed using a simple Monte Carlo scheme (see Section~\ref{sec:martingale}). To illustrate, if the sequence of predictive distributions is 
equation \eqref{eqn:dp_predictive}, then the output of the posterior sampling algorithm for a given set $A \subseteq \mathcal{Y}$ would be an approximate iid sample from $P(A)\mid y_{1:n}$, whose distribution is
\begin{equation*}
\mathrm{Beta}\Big(
\alpha P_0(A)+\sum_{i=1}^n \mathds{1}(y_i\in A),\;
\alpha + n - \alpha P_0(A)-\sum_{i=1}^n \mathds{1}(y_i\in A)
\Big),
\end{equation*}
that is, the distribution of $P(A)\mid y_{1:n}$ when assuming a \textsc{dp} prior.  However, to obtain a point estimate for any functional of interest $\theta = P(f)$, sampling is not necessary. By part~(c), the posterior mean is $\mathds{E}(\theta \mid y_{1:n}) = \mathds{E}\{P(f) \mid y_{1:n}\} = P_n(f) = \theta_n$, which is a consequence of the martingale property. 

Although the posterior variance of $P(f)$ depends on the unknown limit distribution $P$, examining the variance of $P_{n+1}(f)$ given the observed data provides useful insights into \textsc{mad} sequences. The expected variation in the predictive distribution after observing a new data point is expressed as
\begin{equation}
    \var\{P_{n+1}(f)\mid y_{1:n}\} = \E\left\{[P_{n+1}(f)-P_n(f)]^2\mid y_{1:n}\right\}
    = w_{n+1}^2 \bigg\{\sum_{y\in\mathcal{Y}}K_n(f\mid y)p_n(y)-P_n(f)^2\bigg\}.
    \label{eqn:var_1step}
\end{equation}
When no data are observed, \eqref{eqn:var_1step} represents the expected prior-to-posterior variability after the first observation. We will show in Section~\ref{sec:asymptotics} that the variance of $P_{n+1}$ is also closely related to the asymptotic variance of $P$. Moreover, the correlation between two measurable functions $f_1, f_2$ induced by $P_{n+1}$ given $y_{1:n}$ is 
 \begin{equation}
 \begin{aligned}
    &\mathrm{cor}\{P_{n+1}(f_1),P_{n+1}(f_2)\mid y_{1:n}\} 
    = \\
    &\qquad=\frac{\sum_{y\in\mathcal{Y}}K_n(f_1\mid y)K_n(f_2\mid y)p_n(y) - P_n(f_1)P_n(f_2)}{\sqrt{\big[\sum_{y\in \mathcal{Y}}K_n(f_1\mid y)p_n(y)-P_n(f_1)^2\big]\big[\sum_{y \in \mathcal{Y}}K_n(f_2\mid y)p_n(y)-P_n(f_2)^2\big]}}.
    \label{eqn:cor_1step}
\end{aligned}
\end{equation}

\begin{remark}
In general, the predictive law $P_n$ of a \textsc{cid} sequence depends on the order of the observations~$y_{1:n}$, which is typically unappealing. In contrast, the predictive law $P_n$ of an exchangeable sequence does not depend on the order of $y_{1:n}$~\citep{fortini2000exchangeability}. This raises the question of whether a \textsc{cid} sequence whose predictive distribution is order-invariant even exists. A profound and elegant answer is provided in \citet[Corollary 3.7]{Fortini2018}, which establishes that a \textsc{cid} sequence is exchangeable if and only if its predictive law does not depend on the ordering of the observations. In other words, order dependence in martingale posteriors is unavoidable---otherwise, we would revert to the classical exchangeable framework, which is often too restrictive.

\end{remark}

\subsection{On the choice of the base kernel} 
\label{sec:rounded_gaussian}

In this section, we consider the univariate space $\mathcal{Y} = \{0,1,\dots\}$, whereas in Section~\ref{sec:multiv_counts} we will address the multivariate case. Although the results in Section~\ref{sec:bayes} hold for an arbitrary discrete base kernel $k_*$, the choice of kernel has important consequences for the \textsc{mh} update. For example, using Poisson base kernels—where a single parameter governs both the location and the scale—limits the flexibility of the \textsc{mh} kernel, as the same parameter simultaneously controls both the bandwidth and the center of the distribution. A simple alternative, resembling common choices in \textsc{kde}, is to consider a uniform kernel over the interval $\{y_n - m, \dots, y_n + m\}$ for some $m \ge 0$, that is, $k_*(y \mid y_n, m) =  (2m+1)^{-1}$ if $y \in \{y_n - m, \dots, y_n + m\}$ and $0$ otherwise. Then, the resulting \textsc{mh} kernel is centered at $y_n$, with bandwidth controlled by $m$. The corresponding probability mass function becomes
\begin{equation*}
k_n(y \mid y_n, m) = \min\left\{1, \frac{p_n(y)}{p_n(y_n)}\right\} k_*(y \mid y_n, m) + \one(y = y_n)\left[1 - \frac{1}{2m+1} \sum_{z = y_n - m}^{y_n + m} \min\left\{1, \frac{p_n(z)}{p_n(y_n)}\right\}\right],
\end{equation*}
for every $n \ge 1$, which has finite support $\{y_n - m, \dots, y_n + m\}$. When $m = 0$, the \textsc{dp} predictive distribution is recovered. Another simple possibility, paralleling classical choices in \textsc{mcmc}, is to choose a base kernel $k_*(y\mid y_n) = k_*(y)$ that does not depend on past observations, with $k_*(y)$ being a known distribution such as a Poisson or a negative binomial. This specification enforces a strong parametric structure and is suitable only when small deviations from a known model are expected. Although these two proposals lead to substantial computational simplifications, using more flexible kernels can improve predictive accuracy.

\begin{figure}[t]
\centering
\includegraphics[width=0.9\textwidth]{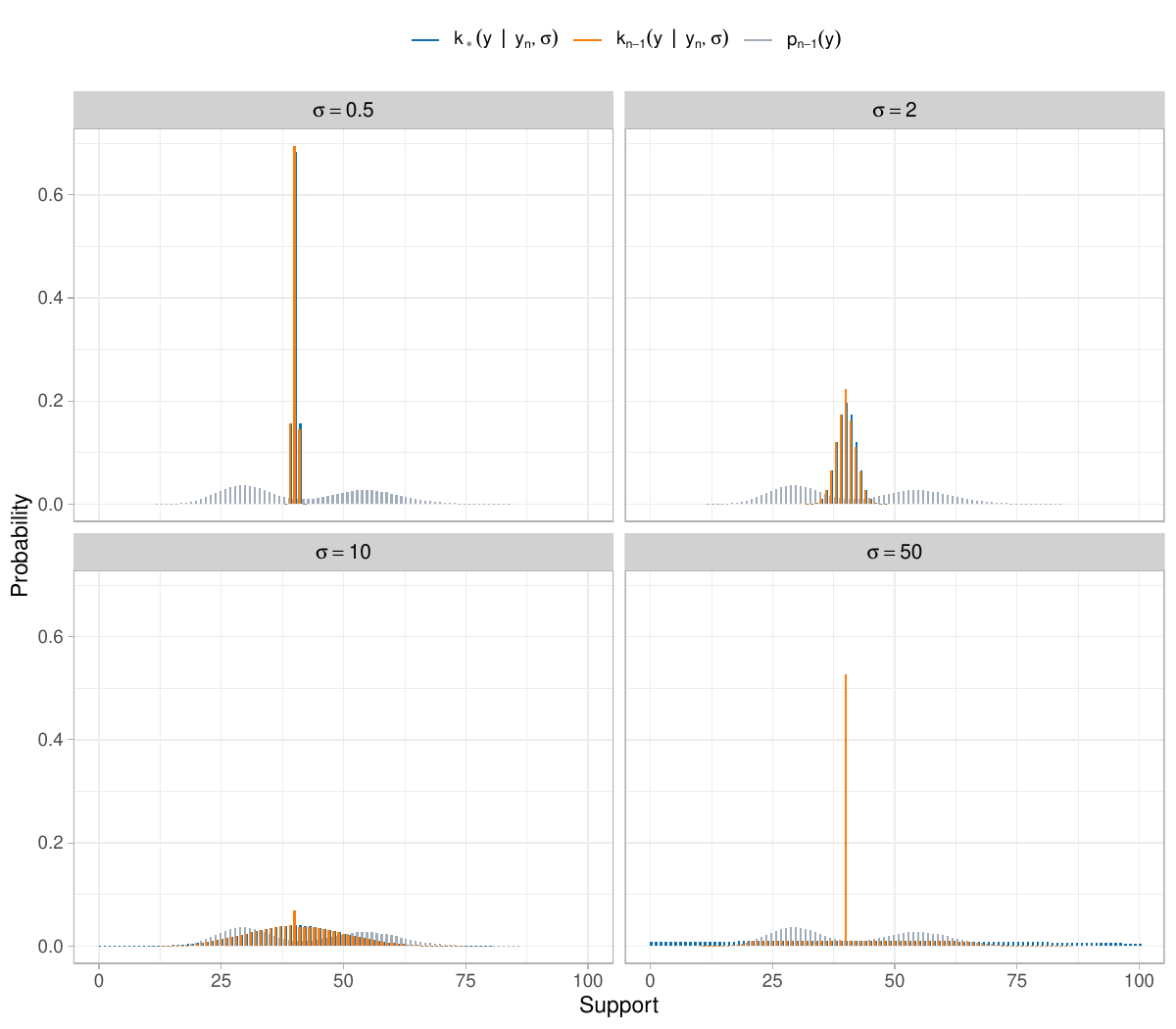}
\caption{\small Example of the probability mass function of a \textsc{mh} kernel using a rounded Gaussian base kernel with $\sigma\in\{0.5, 2, 10, 50\}$ after observing $y_{n}=40$.}
\label{fig:rg_kernel}
\end{figure}

\begin{figure}[t]
\centering
\includegraphics[width=0.95\textwidth]{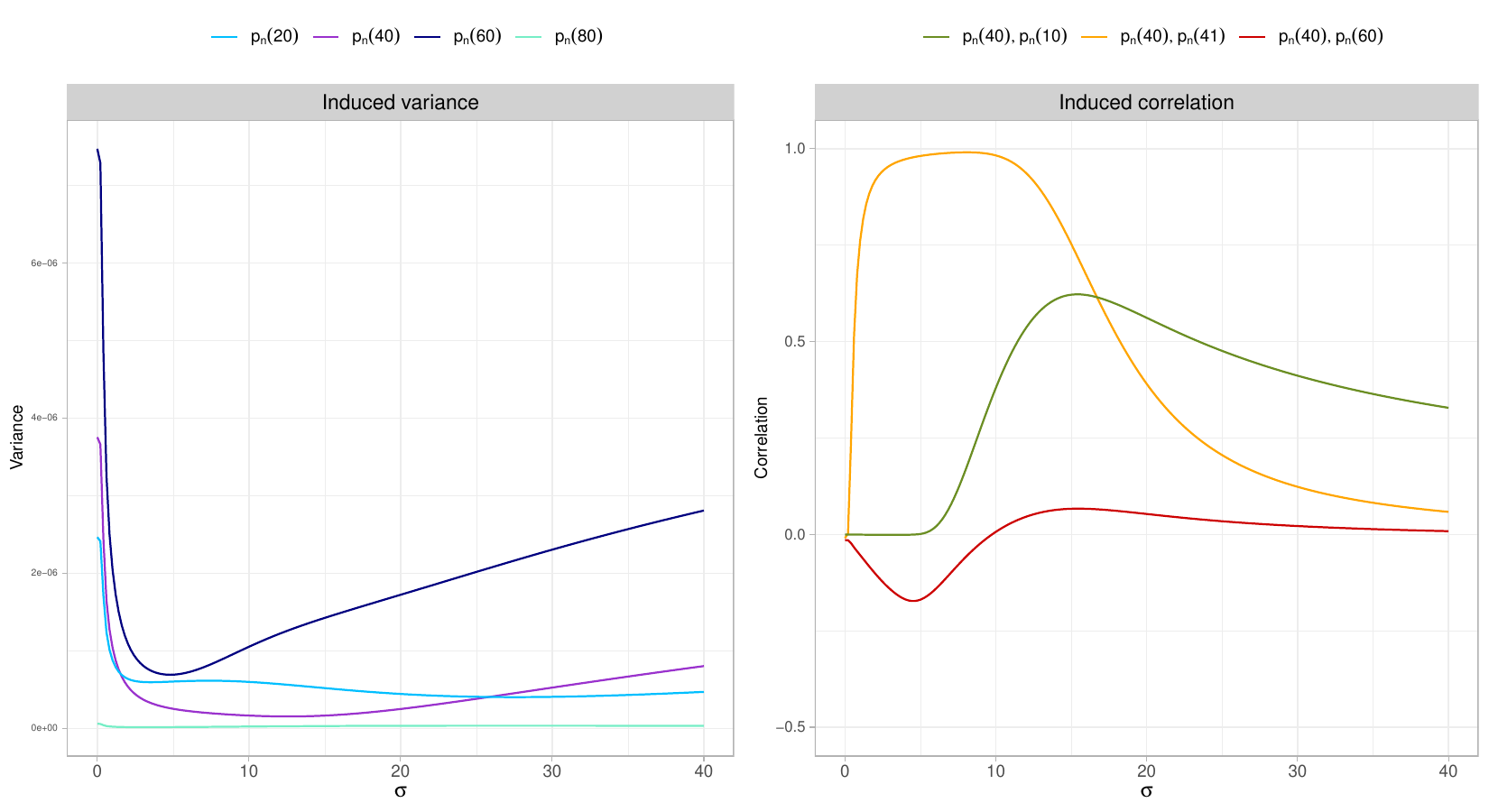}
\caption{\small Left: posterior variance of $p_{n}(z)$ given $y_{1:{n-1}}$, with $z\in\{20,40,60,80\}$.
Right: correlation between $p_{n}(40)$ and $p_{n}(x)$ given $y_{1:n-1}$, with $x\in\{10,41,60\}$.
For both panels, $P_{n-1}$ is the same as in Figure~\ref{fig:rg_kernel}.}
\label{fig:var_cor}
\end{figure}

As an alternative, we recommend using a rounded Gaussian base kernel as in \citet{canale2011bayesian}, whose probability mass function is given by
\begin{equation*}
    k_*(y\mid y_n, \sigma) = \frac{\int_{y-1/2}^{y+1/2}\mathcal{N}(t\mid y_n, \sigma^2) \mathrm{d}t}{\sum_{z\in\mathcal{Y}} \int_{z-1/2}^{z+1/2}\mathcal{N}(t\mid y_n, \sigma^2) \mathrm{d}t},
\end{equation*}
for $n\ge1$, where $\mathcal{N}(\cdot\mid y_n,\sigma^2)$ denotes a normal density function with mean $y_n$ and variance $\sigma^2$. Although $\sigma$ controls the bandwidth of the rounded Gaussian base kernel $k_*$, this role changes slightly after the Metropolis adjustment, as illustrated in Figure~\ref{fig:rg_kernel}. When $\sigma\rightarrow0$, the \textsc{mh} kernel collapses to a point mass at $y_n$, recovering the \textsc{dp} update $k_n(y\mid y_n,\sigma)=\one(y=y_n)$. A similar behavior occurs as $\sigma \to \infty$, when the base kernel becomes very diffuse: for values of $y$ such that $p_{n-1}(y)$ is close to zero $\gamma_n(y,y_n)\rightarrow 0$ and consequently the
probability mass on $y_n$, $1 - \sum_{z\in\mathcal{Y}}\gamma_n(z,y_n)k_*(z\mid y_n,\sigma)$, increases, recovering again the \textsc{dp} update.
For intermediate values of $\sigma$, the \textsc{mh} kernel is smoother, with $k_n(y\mid y_n,\sigma) = k_*(y\mid y_n,\sigma)$ when $\gamma_n(y,y_n) = 1$ for all $y\in\mathcal{Y}$. The bandwidth $\sigma$ also impacts the posterior variability. The left panel of Figure~\ref{fig:var_cor} illustrates how $\sigma$ influences the variance of $P_{n+1}\mid y_{1:n}$.
For $\sigma=0$, the \textsc{dp} is recovered, and the variance is maximum.
In this case, \eqref{eqn:var_1step} is proportional to the posterior variance induced by the \textsc{dp}.
As $\sigma$ grows, the kernel becomes more diffuse and the variance dramatically decreases. The variance increases again for large values of $\sigma$, corresponding to concentrated \textsc{mh} kernels as discussed above. The bandwidth also has important consequences on the correlation. Unlike in the \textsc{dp} case, the induced correlation between any pair of disjoint sets $A_1, A_2$ for \textsc{mad} sequences is not restricted to be negative, as illustrated in the right panel of Figure~\ref{fig:var_cor}. This is a very appealing feature, because positive correlation enables borrowing of information between nearby values.

\subsection{Uncertainty quantification via predictive resampling}
\label{sec:martingale}

A predictive resampling algorithm \citep{fortini_petrone_2020,fong_holmes_2023}, described in Algorithm~\ref{alg:predictive_resampling}, can be used for approximate posterior sampling of $P$, whose law, unfortunately, is generally not available in closed form. Algorithm~\ref{alg:predictive_resampling} can also be used for prior sampling of $P$ by setting $n = 0$. Starting from $P_n$, the algorithm performs forward simulations by recursively sampling and updating the \textsc{mad} predictive distribution. 
Each forward simulation generates a random dataset $(Y_{n+1},\dots,Y_{N})$ which can be used to compute $P_N(\cdot)$ or any other statistic of interest, such as the population mean. By part $(b)$ of Corollary~\ref{cor:cid}, we can regard $P_N$ as a finite-sample approximation of the limit distribution $P$ for sufficiently large $N$. Consequently, the output of Algorithm~\ref{alg:predictive_resampling} provides an approximate sample from the distribution of
$P(\cdot)\mid y_{1:n}$. 

Clearly, it is also possible to perform posterior inference on any functional $\theta = P(f)$.
For example, if one is interested in the population mean $\theta = P(f) = \sum_{y\in\mathcal{Y}} y \, p(y)$, then
the corresponding martingale posterior distribution can be approximated by considering its finite-sample version $\theta_N = P_N(f) = \sum_{y\in\mathcal{Y}} y\, p_N(y)$ which can be easily simulated using Algorithm~\ref{alg:predictive_resampling}. However, as previously mentioned, the posterior mean does not require approximations nor sampling, as we have $\mathbb{E}(\theta \mid y_{1:n}) = \theta_n$ thanks to the martingale property.

\begin{remark} The discreteness of the \textsc{mad} predictive rule is computationally advantageous compared to the continuous case \citep[e.g.][]{fong_holmes_2023}. First, despite what the Metropolis--Hastings terminology might suggest, the recursive update of $p_n$ does not involve any sampling, as $p_n$ can be directly updated using the definition of the kernel $k_n(y \mid y_n)$. This update is straightforward when both $p_0$ and the base kernel $k_*$ have finite support, since $p_n$ can then be represented by a finite set of probability weights. Sampling from $p_n$ is also straightforward, again because it involves a finite set of probabilities. In the countable case, it suffices to truncate $p_n$ by removing support points with negligible probability, which is effectively equivalent to using a base kernel with finite support. 
\end{remark}

\begin{algorithm}[t]
\caption{Predictive resampling \citep{fortini_petrone_2020, fong_holmes_2023}}
\label{alg:predictive_resampling}
\begin{algorithmic}
\State Compute $p_n$ as in \eqref{eqn:kernel_dp} from the observed data $y_{1:n}$
\State Set $N\gg n$
\For{$j = 1,\ldots,B$}
	\For{$i=n+1,\ldots,N$}
		\State Sample $Y_i\mid y_{1:i-1}\sim p_{i-1}$
		\State Update $p_i(y) = (1-w_i)p_{i-1}(y) + w_i k_i(y \mid y_i)$
	\EndFor
\EndFor
\State \Return $p_N^{(1)}(y),\ldots,p_N^{(B)}(y)$, an iid sample of size $B$ from the distribution of $p_N(\cdot)\mid y_{1:n}$
\end{algorithmic}
\end{algorithm}

\subsection{Asymptotic behaviour}
\label{sec:asymptotics}

In this section, we establish a predictive central limit theorem for \textsc{mad} sequences. Specifically, for any $H\ge1$ and any measurable sets $A_1,\ldots,A_H$, we derive a Gaussian approximation for the posterior distribution of $P(A_1),\ldots,P(A_H)\mid y_{1:n}$ as $n\rightarrow\infty$.
Intuitively, our result quantifies the uncertainty that arises from observing a finite sample of size $n$ instead of the infinite population. Within the Bayesian predictive literature, \cite{fortini_petrone_2020} provide a predictive central limit theorem for the Newton estimator, while \cite{fong2024asymptotics} obtain a similar result for parametric martingale posteriors. The following results specialize those of \cite{fortini2024exchangeability} for general \textsc{cid} sequences; however, in the context of \textsc{mad} sequences, some of their assumptions are not required, and therefore the proof is provided. Consistent with \cite{fortini_petrone_2020, fortini2024exchangeability} and \cite{fong2024asymptotics}, the proof relies on a martingale central limit theorem formulated in terms of almost sure conditional convergence \citep{crimaldi2009almost}, a notion of convergence that implies stable convergence \citep{renyi1963stable, hausler2015stable}.
Stable convergence, which is stronger than convergence in distribution but weaker than convergence in probability, plays a central role in martingale central limit theorems, especially when the limiting variance is random.

\begin{proposition}
\label{prop:asy_1}
Let $(Y_n)_{n\ge1}$ be a \textsc{mad} sequence and let $(r_n)_{n\ge1}$ be a monotone sequence of positive numbers such that $r_n\overset{\cdot}{=}(\sum_{k>n}w_k^2)^{-1}$ as $n\rightarrow\infty$.
Assume $\sqrt{r_n}\sup_{k\ge n}w_k\rightarrow 0$, $\sum_{k\ge1}r_k^2w_{k+1}^4<\infty$, and that, for every $H\ge1$ and all measurable sets $A_1,\ldots,A_H$, the random matrix $\bm\Sigma = \bm\Sigma(A_1,\ldots,A_H) = [\Sigma_{jt}]$, $j,t=1,\ldots,H$, with 
\begin{equation}
    \Sigma_{jt} = \sum_{y\in\mathcal{Y}}K(A_j\mid y)K(A_t\mid y)p(y) - P(A_j)P(A_t),
    \label{eqn:asy_variance}
\end{equation}
where $K(A\mid y) = \sum_{x\in A}k(x\mid y)$ and
\begin{equation*}
    k(x\mid y) = \gamma(x,y)k_*(x\mid y) + \one(x = y)\left[1-\sum_{z\in\mathcal{Y}}\gamma(z,y)k_*(z,y)\right],\quad
    \gamma(x,y) = \min\left\{1,\frac{p(x)k_*(y\mid x)}{p(y)k_*(x\mid y)}\right\},
\end{equation*}
is positive definite.
Then, for every $H\ge1$ and every $A_1,\ldots,A_H$, 
\begin{equation*}
    \sqrt{r_n}\left[\begin{array}{c}
     P(A_1)-P_n(A_1)\\
     \cdots\\
     P(A_H)-P_n(A_H)
    \end{array}\right] \Big|\; Y_{1:n}
    \overset{\textup{d}}{\longrightarrow} \bm{Z}, \qquad \bm{Z} \sim \mathcal{N}_H(\bm0,\bm\Sigma),
\end{equation*}
$\PP$-a.s. for $n\rightarrow\infty$,
where $\mathcal{N}_H(\bm0,\bm\Sigma)$ denotes a $H$-dimensional Gaussian distribution with mean $\bm0$ and covariance $\bm\Sigma$.
\end{proposition}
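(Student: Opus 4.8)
The plan is to represent the centered, rescaled error as a tail sum of martingale increments and then invoke an almost sure conditional central limit theorem of the type in \citet{crimaldi2009almost}, following the strategy of \citet{fortini_petrone_2020} and \citet{fong2024asymptotics}. First I would establish the martingale representation. By Corollary~\ref{cor:cid}(b), strengthened to total variation convergence by Lemma~\ref{lemma_1}, $P_m(A_j)\to P(A_j)$ $\PP$-a.s. for each $j$, so telescoping the recursion \eqref{eqn:kernel_dp} gives
$$
P(A_j)-P_n(A_j)=\sum_{k\ge n} D_{k+1}(A_j),\qquad D_{k+1}(A_j)=w_{k+1}\big[K_{k+1}(A_j\mid Y_{k+1})-P_k(A_j)\big].
$$
The crucial structural fact is that each Metropolis--Hastings kernel $K_{k+1}(\cdot\mid\cdot)=K(\cdot\,;\cdot,P_k)$ satisfies detailed balance with respect to $P_k$, hence is $P_k$-stationary; since $Y_{k+1}\mid\mathcal{G}_k\sim P_k$ with $\mathcal{G}_k=\sigma(Y_{1:k})$, this gives $\E\{K_{k+1}(A_j\mid Y_{k+1})\mid\mathcal{G}_k\}=P_k(A_j)$, so $(D_{k+1}(A_j))_k$ is a martingale difference sequence, recovering \eqref{eqn:martingale}.

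Next I would reduce the $H$-dimensional statement to a scalar one via the Cram\'er--Wold device: for fixed $\bm\lambda\in\R^H$ it suffices to prove $\sqrt{r_n}\sum_{k\ge n}\bm\lambda^\top\bm D_{k+1}\mid\mathcal{G}_n\overset{\textup{d}}{\longrightarrow}\mathcal{N}(0,\bm\lambda^\top\bm\Sigma\bm\lambda)$ $\PP$-a.s., where $\bm D_{k+1}=(D_{k+1}(A_1),\dots,D_{k+1}(A_H))^\top$. I would then check the hypotheses of the a.s. conditional martingale CLT. By the computation behind \eqref{eqn:var_1step}, the conditional variance of a single increment is $\E\{(\bm\lambda^\top\bm D_{k+1})^2\mid\mathcal{G}_k\}=w_{k+1}^2\,\bm\lambda^\top\bm\Sigma^{(k)}\bm\lambda$, where $\bm\Sigma^{(k)}$ is the matrix \eqref{eqn:asy_variance} with $P$ replaced by $P_k$. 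The conditional negligibility (Lindeberg) requirement follows from the crude bound $|\bm\lambda^\top\bm D_{k+1}|\le C\,w_{k+1}$, which yields $\sqrt{r_n}\sup_{k\ge n}|\bm\lambda^\top\bm D_{k+1}|\le C\sqrt{r_n}\sup_{k\ge n}w_k\to0$ by assumption.

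For the variance condition I would show $r_n\sum_{k\ge n}w_{k+1}^2\,\bm\lambda^\top\bm\Sigma^{(k)}\bm\lambda\to\bm\lambda^\top\bm\Sigma\bm\lambda$ $\PP$-a.s. Since $\sum_{k\ge n}w_{k+1}^2=\sum_{k>n}w_k^2$ and $r_n\overset{\cdot}{=}(\sum_{k>n}w_k^2)^{-1}$, the coefficients $a_{n,k}=r_nw_{k+1}^2$ are nonnegative, sum to a limit of $1$, and satisfy $\sup_{k\ge n}a_{n,k}=(\sqrt{r_n}\sup_{k\ge n}w_{k+1})^2\to0$; a tail Toeplitz (Ces\`aro) lemma applied pathwise then transfers the convergence $\bm\Sigma^{(k)}\to\bm\Sigma$ into the desired limit. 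The remaining summability assumption $\sum_k r_k^2 w_{k+1}^4<\infty$ is what upgrades convergence in probability to the almost sure conditional convergence required by \citet{crimaldi2009almost}, through a Borel--Cantelli/maximal-inequality control of the predictable quadratic variation. Positive definiteness of $\bm\Sigma$ ensures a non-degenerate Gaussian limit, and Cram\'er--Wold delivers the joint statement.

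The main obstacle I anticipate is the almost sure convergence $\bm\Sigma^{(k)}\to\bm\Sigma$ and its passage through the weighted average. The entries of $\bm\Sigma^{(k)}$ are nonlinear functionals of $p_k$ through the acceptance ratio $\gamma$ in \eqref{eqn:acc_probability}, which contains both the ratio $p_k(y)/p_k(y_n)$ and a minimum, so establishing continuity of $p\mapsto\sum_{y}K(A_j\mid y)K(A_t\mid y)p(y)$ at the limit $P$ under total variation convergence demands care where the mass of $p$ is small, and the dominated-convergence arguments over the countable support $\mathcal{Y}$ must be justified uniformly in $k$. Verifying this, together with matching the precise array hypotheses of \citet{crimaldi2009almost} so that some assumptions of \citet{fortini2024exchangeability} can be dispensed with, is where the real work lies.
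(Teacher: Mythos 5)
Your proposal takes essentially the same route as the paper's proof: Cram\'er--Wold reduction, the telescoping martingale representation with increments $w_{k+1}\bigl[K_{k+1}(A_j\mid Y_{k+1})-P_k(A_j)\bigr]$ justified by detailed balance of the \textsc{mh} kernel, and the almost-sure conditional CLT of Crimaldi (2009), with negligibility supplied by $\sqrt{r_n}\sup_{k\ge n}w_k\rightarrow 0$ and the quadratic-variation limit obtained from $\bm\Sigma^{(k)}\rightarrow\bm\Sigma$ (which the paper establishes via total-variation convergence of $p_n$ in Lemmas~\ref{lemma_1}--\ref{lemma_3}) together with the condition $\sum_k r_k^2w_{k+1}^4<\infty$. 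The only packaging difference is that where you sketch a Toeplitz argument on conditional variances plus a Borel--Cantelli upgrade from predictable to realized quadratic variation, the paper applies Theorem~\ref{th:crimaldi_2016} (Crimaldi's 2016 Lemma 4.1), which performs exactly that combined step in one stroke.
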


The proposition shows that the posterior distribution of $P$ concentrates around its expectation $P_n$ at a convergence rate of
$1/\sqrt{r_n}$. The asymptotic variance $\bm\Sigma$ is random, as it depends on the entire sequence $(Y_1,Y_2,\ldots)$. Having a random limit for the covariance is common in martingale central limit theorems and Bayesian predictive inference (see \citealp{fortini_petrone_2020, fortini2024exchangeability, berti2021b, fong2024asymptotics}).
This result should not be regarded as a Bernstein–von Mises asymptotic Gaussian approximation for the posterior distribution, because Proposition~\ref{prop:asy_1} is formulated under the joint law $\PP$ and does not assume the $Y_i$ are iid from a true underlying distribution.

Proposition~\ref{prop:asy_1} provides insight into the prior variance induced by \textsc{mad} sequences: if $n=0$ then $\bm\Sigma$ corresponds to the implicit prior variance, which depends on the limit distribution $P$. 
For $n\ge1$, the interpretation is less straightforward, as the asymptotic variance now depends on both past and future observations.
However, each element of the random covariance in \eqref{eqn:asy_variance} is proportional to the $\PP$-a.s. limit of the previously considered finite-sample covariance
\begin{equation} 
\mathrm{cov}\{P_{n+1}(A_j),P_{n+1}(A_t)\mid y_{1:n}\} = w_{n+1}^{2} \sum_{y\in\mathcal{Y}}K_n(A_j\mid y)K_n(A_t\mid y)p_n(y) - P_n(A_j)P_n(A_t), 
\label{eqn:cov_1step} 
\end{equation}
for any $j,t=1,\ldots,H$.
Consequently, we can employ \eqref{eqn:cov_1step} to approximate $\bm\Sigma$.

\begin{proposition}
\label{prop:asy_2}
For every $n\ge1$ and all measurable sets $A_1,\ldots,A_H$, let $\bm\Sigma_n=\bm\Sigma_n(A_1,\ldots,A_H) = [\Sigma_{n,jt}]$, $j,t=1,\ldots,H$, with
\begin{equation*}
\Sigma_{n,jt} = \sum_{y\in\mathcal{Y}}K_n(A_j\mid y)K_n(A_t\mid y)p_n(y) - P_n(A_j)P_n(A_t).
\end{equation*}
Then, under the assumptions of Proposition~\ref{prop:asy_1}, $\PP$-a.s., $\bm\Sigma_n\rightarrow\bm\Sigma$ and
\begin{equation*}
    \sqrt{r_n}\bm\Sigma_n^{-1/2}\left[\begin{array}{c}
     P(A_1)-P_n(A_1)\\
     \cdots\\
     P(A_H)-P_n(A_H)
    \end{array}\right] \Big|\; y_{1:n}
    \overset{\textup{d}}{\longrightarrow} \bm{Z}, \qquad \bm{Z} \sim \mathcal{N}_H(\bm0,\bm I),
\end{equation*}
for $n\rightarrow\infty$, where $\bm I$ denotes the identity matrix.
\end{proposition}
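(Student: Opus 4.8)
The plan is to derive Proposition~\ref{prop:asy_2} from Proposition~\ref{prop:asy_1} together with the almost sure convergence $\bm\Sigma_n\to\bm\Sigma$. Since Proposition~\ref{prop:asy_1} already supplies the conditional (hence stable) Gaussian limit with random covariance $\bm\Sigma$, the studentized statement is essentially a Slutsky argument once I establish that the data-dependent normalizer $\bm\Sigma_n^{-1/2}$ converges to $\bm\Sigma^{-1/2}$. Writing $\bm V_n = [P(A_1)-P_n(A_1),\ldots,P(A_H)-P_n(A_H)]^\top$, the target is the conditional convergence of $\sqrt{r_n}\,\bm\Sigma_n^{-1/2}\bm V_n$ to $\mathcal N_H(\bm0,\bm I)$, $\PP$-a.s. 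Thus the whole argument reduces to two ingredients: (i) $\bm\Sigma_n\to\bm\Sigma$, $\PP$-a.s., and (ii) a stable-convergence version of Slutsky's lemma.

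The first and main step is the entrywise convergence $\Sigma_{n,jt}\to\Sigma_{jt}$. Here I would invoke Lemma~\ref{lemma_1}, which upgrades the weak convergence of part~(b) of Corollary~\ref{cor:cid} to convergence of $p_n$ to $p$ in total variation, and in particular pointwise, $\PP$-a.s. From $p_{n-1}\to p$ pointwise and the continuity of the map $\gamma_n(x,y)=\min\{1,\,p_{n-1}(x)k_*(y\mid x)/[p_{n-1}(y)k_*(x\mid y)]\}$ at points with $p(y)>0$ (a set carrying full mass under $P$), one obtains $\gamma_n(x,y)\to\gamma(x,y)$, and then, applying dominated convergence to the bounded inner sum in \eqref{eqn:mh_probability}, the pointwise convergence $K_n(A\mid y)\to K(A\mid y)$ for every $y$. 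To pass from this to convergence of the infinite sums, I would split the difference of $\sum_{y}K_n(A_j\mid y)K_n(A_t\mid y)p_n(y)$ from its limit as $\sum_y K_n(A_j\mid y)K_n(A_t\mid y)\,[p_n(y)-p(y)]$ plus $\sum_y [K_n(A_j\mid y)K_n(A_t\mid y)-K(A_j\mid y)K(A_t\mid y)]\,p(y)$. The first term is bounded by $\sum_y|p_n(y)-p(y)|\to0$ because the kernel products lie in $[0,1]$, which is exactly the total-variation control from Lemma~\ref{lemma_1}; the second term vanishes by dominated convergence against the fixed summable weight $p$. Combining this with $P_n(A_j)\to P(A_j)$ (again from total variation) yields $\Sigma_{n,jt}\to\Sigma_{jt}$, and hence $\bm\Sigma_n\to\bm\Sigma$, $\PP$-a.s.

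With $\bm\Sigma_n\to\bm\Sigma$ in hand and $\bm\Sigma$ positive definite $\PP$-a.s. by assumption, continuity of the inverse matrix square root on the cone of positive definite matrices gives $\bm\Sigma_n^{-1/2}\to\bm\Sigma^{-1/2}$, $\PP$-a.s. (with $\bm\Sigma_n$ positive definite for all large $n$). It then remains to combine this with Proposition~\ref{prop:asy_1}. Because $\bm\Sigma_n^{-1/2}$ is $\sigma(Y_{1:n})$-measurable and converges almost surely to the $\mathcal F_\infty$-measurable limit $\bm\Sigma^{-1/2}$, it acts as an admissible coefficient within the almost sure conditional convergence framework of \citet{crimaldi2009almost}. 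Pairing the almost sure conditional (equivalently, stable) convergence of $\sqrt{r_n}\bm V_n$ to $\mathcal N_H(\bm0,\bm\Sigma)$ from Proposition~\ref{prop:asy_1} with $\bm\Sigma_n^{-1/2}\to\bm\Sigma^{-1/2}$ and applying the continuous map $(\bm x,\bm M)\mapsto \bm M\bm x$, I obtain that $\sqrt{r_n}\bm\Sigma_n^{-1/2}\bm V_n$ converges conditionally to $\bm\Sigma^{-1/2}\bm Z$ with $\bm Z\sim\mathcal N_H(\bm0,\bm\Sigma)$, and the latter is exactly $\mathcal N_H(\bm0,\bm\Sigma^{-1/2}\bm\Sigma\bm\Sigma^{-1/2})=\mathcal N_H(\bm0,\bm I)$; crucially, the limiting covariance is now deterministic, which is the whole point of the studentization.

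I expect the main obstacle to be the rigorous justification of the interchange of limit and infinite sum over the countable space $\mathcal Y$ in the second step, especially controlling the acceptance ratios $\gamma_n(x,y)$ where $p(y)$ is small. The total-variation strengthening provided by Lemma~\ref{lemma_1} is what makes the first term of the decomposition tractable, and it is essential that the functionals $K_n(\cdot\mid y)$ are uniformly bounded by one, so that no additional moment conditions are needed. A secondary, more delicate point is making the Slutsky step fully rigorous within the stable-convergence framework, since the normalizer $\bm\Sigma_n^{-1/2}$ depends on $n$ while the limiting covariance in Proposition~\ref{prop:asy_1} is itself random; here one must verify that multiplication by an almost surely convergent adapted matrix sequence preserves almost sure conditional convergence, which follows from the joint-convergence characterization of stable convergence.
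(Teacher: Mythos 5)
Your proposal is correct and follows essentially the same route as the paper: the paper likewise reduces the statement to the $\PP$-a.s.\ convergence $\bm\Sigma_n\rightarrow\bm\Sigma$ (verified via quadratic forms $\bm c^\top\bm\Sigma_n\bm c$, with the sum convergence handled exactly by your decomposition --- total-variation control of $\sum_y K_nK_n\,[p_n-p]$ via Lemma~\ref{lemma_1} plus dominated convergence against $p$ using the pointwise kernel limit $K_n(A\mid y)\rightarrow K(A\mid y)$, which are the paper's Lemmas~\ref{lemma_3} and~\ref{lemma_2}), and then concludes by the stable-convergence analogues of Slutsky and Cram\'er--Wold. The only cosmetic difference is that you argue entrywise convergence and explicitly re-derive the two auxiliary lemmas inline, whereas the paper cites them and works with operator-norm convergence, which for fixed $H$ is equivalent.
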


\begin{remark}
The results of Propositions~\ref{prop:asy_1} and~\ref{prop:asy_2} hold when $w_n = (\alpha+n)^{-\lambda}$ with $r_n = (2\lambda-1)n^{2\lambda-1}$.
\end{remark}

Replacement of the random matrix $\bm\Sigma$
with its estimate $\bm\Sigma_n$ is possible due to stable convergence.
Proposition~\ref{prop:asy_2} highlights the relationship between the posterior variance and one-step-ahead predictive updates.
This connection becomes particularly evident in the case $H=1$, where, for large $n$, the distribution of $P(A)\mid y_{1:n}$ can be approximated by $\mathcal{N}(P_n(A), \Sigma_nr_n^{-1})$, with $\Sigma_n=\sum_{y\in\mathcal{Y}}K_n(A\mid y)^2p_n(y) - P_n(A)^2$.
Therefore, as noted also by \cite{fortini_petrone_2020, fortini2024exchangeability} and \cite{fong2024asymptotics}, the asymptotic posterior variance of $P(A)$ is connected to the expected squared predictive update from $P_n(A)$ to $P_{n+1}(A)$ given $y_{1:n}$ with
\begin{equation}
    \Sigma_n r_n^{-1} 
    \approx \Sigma_n\sum_{k>n}w_k^2
    = \E\{[P_{n+1}(A)-P_n(A)]^2\mid y_{1:n}\}\sum_{k>n+1}w_k^2
    \label{eqn:asy_covariance}
\end{equation}
for $n$ large.
Equation~\eqref{eqn:asy_covariance} provides insights on the relationship between the learning rate of the predictive rule and the posterior variability, highlighting the crucial role played by the weights $w_n$: while weights that decay to zero quickly induce fast learning and convergence to the asymptotic exchangeability regime, small values of $w_n$ may result in poor learning and underestimation of the posterior variability.

Although the standard choice for the \textsc{dp} is 
$w_n=(\alpha+n)^{-1}$, it has been observed that such weights may decay to zero too quickly according to frequentist criteria, making them unappealing for \textsc{cid} modelling (see \citealp{fortini_petrone_2020}; \citealp{fong_holmes_2023}).
To address this, the literature has proposed alternatives with slower decay rates, such as 
$w_n=(\alpha+n)^{-2/3}$ \citep{martin2009asymptotic}, and 
$w_n=(2 - n^{-1})(n+1)^{-1}$ \citep{fong_holmes_2023}.
Alternatively, \cite{fortini_petrone_2020} suggest using adaptive weights of the form $w_n=(\alpha+n)^{-\lambda_n}$, where $\lambda_n=1$ for $n\le N_*$ to allow fast convergence, and $\lambda_n=3/4$ when $n>N_*$ induces slower decay once an approximate exchangeability regime is reached.
Although this split-sample approach is appealing, a smoother transition between the two regimes is desirable. For this reason, we consider the adaptive weights with
\begin{equation}
    \lambda_n = \lambda + (1 - \lambda)\exp\bigg\{-\frac{1}{N_*} n\bigg\},
    \label{eqn:adaptive}
\end{equation}
with $\lambda\in(1/2, 1]$ and $N_*>0$.
Using \eqref{eqn:adaptive}, the weights behave as $w_n \approx (\alpha+n)^{-1}$ for small $n$, smoothly transitioning to $w_n\approx(\alpha+n)^{-\lambda}$ as $n$ increases.
The transition rate is governed by $N_*$, which should be chosen such that the sequence is approximately in an exchangeable regime beyond that point.
A practical default is to set $N_*=N/2$, where $N$ denotes the length of each forward simulation in the predictive resampling algorithm.
Following \citet{fortini_petrone_2020}, we adopt $\lambda=3/4$ in our analyses.
The implications of these different weighting options for the \textsc{mad} sequences are explored in Section~\ref{sec:simulations}.

\subsection{Univariate illustrative example}
\label{sec:illustration}

We provide a simulated example to illustrate the \textsc{mad} sequence and compare it with the \textsc{dp}.
We generate $n=50$ data points from the mixture $0.4 \,\mathrm{Poisson}(y; 25) + 0.6\, \mathrm{Poisson}(y; 60)$. 
We consider two variants of \textsc{mad} sequences, having $w_n=(\alpha+n)^{-1}$ and $w_n= (\alpha+n)^{-\lambda_n}$, with $\lambda_n$ defined in \eqref{eqn:adaptive}, $\lambda = 3/4$ and $N_*=500$.
For updating the \textsc{mad} sequences, we choose a uniform base measure $P_0$ over the support, which we restrict to $\{0,\ldots,100\}$, and set $\alpha=1$.
We employ rounded Gaussian base kernels, and we recommend selecting $\sigma$ using a data-driven approach. Following \cite{fong_holmes_2023}, we rely on the notion of prequential log-likelihood introduced by \cite{dawid1984present}. Specifically, under the predictive construction \eqref{eqn:sequental_contruction}, the quantity $\sum_{i=1}^n \log p_{i-1}(y_i)$ can be interpreted as a log-likelihood function, making its maximization a natural criterion for selecting $\sigma$.
Alternatively, one can tune $\sigma$ using the methods available for \textsc{kde}, such as minimizing the prediction error in a validation set or by cross-validation.

\begin{figure}[t]
\centering
\includegraphics[width=0.95\textwidth]{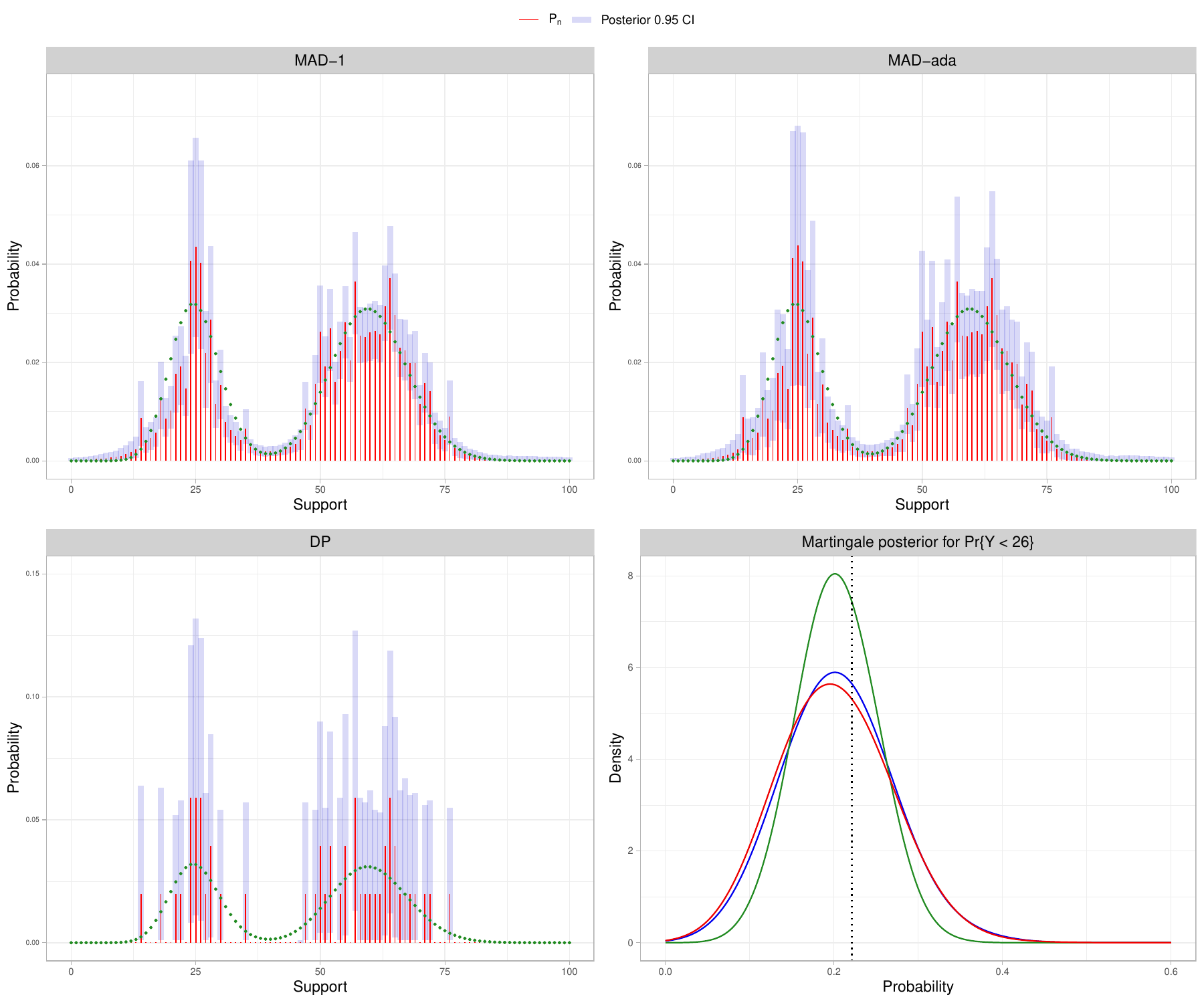}
\caption{\small Predictive distributions obtained using a \textsc{mad} sequence with $w_n=(\alpha+n)^{-1}$ (\textsc{mad}-1), with adaptive weights $w_n=(\alpha+n)^{-\lambda_n}$ (\textsc{mad}-ada), and with the \textsc{dp}.
The dotted line represents the probability mass function of the true data generator.
The bottom-right panel shows the martingale posterior distribution for the probability that $Y\le25$ using \textsc{mad}-1 (green), \textsc{mad}-ada (red) and \textsc{dp} (blue), with the dotted line representing the true value.}
\label{fig:univariate_example}
\end{figure}

Uncertainty quantification is carried out by predictive resampling $N=1000$ future observations starting from $P_n$ and obtaining $B=1000$ samples from the corresponding martingale posterior. By replacing exchangeability with the \textsc{cid} condition, \textsc{mad} predictive distributions depend on the ordering of $y_{1:n}$. To ensure permutation invariance, we compute $P_n$ averaging over $S$ permutations of the data.
Consistently with \cite{fong_holmes_2023}, we found that $S=10$ is sufficient in practice. Computations for different permutations can be performed in parallel. 
The prequential log-likelihood for selecting the optimal $\sigma$ is also averaged over the $S$ permutations, whereas the predictive resampling algorithm starts from the permutation averaged $P_n$.

Figure~\ref{fig:univariate_example} shows the predictive distributions obtained using the two variants of \textsc{mad} sequences and \textsc{dp}.
Both weight options provide similar \textsc{mad} predictive distributions and the improvement over the \textsc{dp} is substantial. The \textsc{mad} sequences effectively capture the two components of the mixture, assigning lower probabilities in the tails and in the region between the modes.
For each value $y\in\mathcal{Y}$, Figure~\ref{fig:univariate_example} also includes the 95\% credible interval for $p_N(y)$.
Additionally, the bottom-left panel of Figure~\ref{fig:univariate_example} presents the martingale posterior distribution for the quantity $\theta_N= P_N((-\infty, 25]) = \sum_{y\in\mathcal{Y}}\one(y \le 25)p_N(y)$, computed as described in Section~\ref{sec:martingale}.
As discussed in Section~\ref{sec:asymptotics}, the adaptive weights yield a more conservative uncertainty quantification compared to the standard choice $w_n=(\alpha+n)^{-1}$.
This results in wider 95\% credible intervals for $P_N(y)$ and a less concentrated martingale posterior for $\theta$.

\section{Multivariate modeling}
\label{sec:multiv_counts}

In this section, we consider \textsc{mad} sequences for modeling multivariate count data. Let $\bm{y}_n = (y_{n1},\ldots,y_{nd})\in \mathcal{Y} = \{0,1,\dots\}^d$ be a realization of the random vector $\bm{Y}_n = (Y_{n1},\dots,Y_{nd})$, for $n\ge1$ and $d\ge2$, and denote $\bm{y}_{1:n}=(\bm{y}_1^\top,\ldots,\bm{y}_n^\top)$.
Assuming the sequential generating mechanism \eqref{eqn:sequental_contruction} for $(\bm{Y}_n)_{n \ge 1}$, the multivariate \textsc{mad} predictive distribution has probability mass function
\begin{equation*}
    p_n(\y) = \mathds{P}(Y_{n+1,1}=y_1,\ldots,Y_{n+1,d}=y_d\mid \bm{y}_{1:n})
    = (1-w_n) p_{n-1}(\bm{y}) + w_n k_n(\bm{y}\mid\bm{y}_{n}),
\end{equation*}
where the multivariate \textsc{mh} kernel retains the same structure as in the univariate case, since the corresponding probability mass function is
\begin{equation*}
    k_n(\bm{y}\mid\bm{y}_n) = \gamma_n(\bm{y},\bm{y}_n)k_*(\bm{y}\mid\bm{y}_n) + \mathds{1}(\bm{y}=\bm{y}_n)\bigg[\sum_{\bm{z}\in\mathcal{Y}}\Big\{1-\gamma_n(\bm{z},\bm{y}_n)\Big\}k_*(\bm{z},\bm{y}_n)\bigg],
\end{equation*}
with
\begin{equation*}
    \gamma_n(\bm{y},\bm{y}_n) = \min\left\{ 1,\frac{p_{n-1}(\bm{y})k_*(\bm{y}_n\mid\bm{y})}{p_{n-1}(\bm{y}_n)k_*(\bm{y}\mid\bm{y}_n)}\right\}.
\end{equation*}
The above equations can be formally regarded as special cases of the general definition in \eqref{eqn:kernel_dp}, \eqref{eqn:mh_probability}, and \eqref{eqn:acc_probability}, but the multivariate nature of the data is emphasized using the bold notation. Any discrete multivariate distribution can be chosen as base kernel.
A simple and reasonable choice is a factorized base kernel having rounded Gaussian distribution components,
\begin{equation*}
    k_*(\bm{y}\mid\bm{y}_n,\bm\sigma) = \prod_{j=1}^d k_*(y_j\mid y_{nj},\sigma_j),
\end{equation*}
with $\bm\sigma = (\sigma_1,\ldots,\sigma_d)$. This factorized structure again parallels common practice in standard multivariate \textsc{kde}. More complex kernels that induce dependence between components can also be adopted.

\begin{remark}
\label{rem:multivariate} \textsc{mad} sequences are defined on a general countable space $\mathcal{Y}$, which includes $\mathcal{Y} = \{0, 1, \dots\}^d$ as a special case. Hence, the results of Theorem~\ref{th:cid} and Corollary~\ref{cor:cid}, as well as those of Propositions~\ref{prop:asy_1}--\ref{prop:asy_2}, hold for multivariate \textsc{mad} sequences. The multivariate case requires special attention, as the choice of the base kernel is less straightforward, and the practical implications deserve careful discussion.

\end{remark}

There are several advantages of multivariate \textsc{mad} sequences compared to existing approaches. First, although \textsc{dp} mixtures of multivariate rounded Gaussian kernels with non-diagonal covariance matrices \citep{canale2011bayesian} can be used to model multivariate count probability mass functions, \textsc{mcmc} computation can be inefficient. Conceptually, the predictive mixture model of \citet{fortini_petrone_2020} can address this issue, but in practice, the associated Newton algorithm requires the numerical evaluation of a multidimensional integral at each iteration. In contrast, employing \textsc{mh} kernels makes the extension to multivariate \textsc{mad} sequences both straightforward and computationally tractable, as only univariate Gaussian integrals are involved, for which efficient numerical methods are available. Furthermore, unlike the multivariate method proposed by \citet{fong_holmes_2023}, which depends on the order in which dimensions are updated, multivariate \textsc{mad} sequences are invariant to the dimension ordering. For the copula update, the \textsc{cid} assumption holds only under a fixed ordering of the variables, and different orderings lead to distinct predictive distributions. This sensitivity is unappealing in practice. In contrast, \textsc{mad} sequences assume that $(\bm Y_n)_{n \geq 1}$ is a \textsc{cid} sequence regardless of the variable ordering. In Appendix~\ref{appendix_copula}, we examine the dimension-ordering dependence of the copula update in a simulation study and compare the results with those obtained from \textsc{mad} sequences.

\subsection{Multivariate binary data}
\label{sec:multiv_binary}
Multivariate \textsc{mad} sequences can also be adapted to model multivariate binary data. Due to the flexibility of \textsc{mh} kernels, this extension is straightforward, requiring only the selection of an appropriate base kernel.
For $\bm{y}\in\{0,1\}^d$, we propose to employ the base kernel that has probability mass function
\begin{equation*}
    k_*(\y\mid\y_n) = \prod_{j=1}^d k_*(y_j\mid y_{nj}),
\end{equation*}
where
\begin{equation}
k_*(y_j\mid y_{nj}) = |y_{nj} - \delta|^{y_j}(1-|y_{nj}-\delta|)^{(1-y_j)} = 
\begin{cases}
\delta^{y_j}(1-\delta)^{1-y_j}\qquad\text{if}\;y_{nj} = 0\\
(1-\delta)^{y_j}\delta^{1-y_j}\qquad\text{if}\;y_{nj} = 1\\
\end{cases}
\label{eqn:kernel_binary}
\end{equation}
for $j=1,\ldots,d$ and $\delta\in[0,0.5]$.
An example of \eqref{eqn:kernel_binary} for different values of $\delta$ is shown in Appendix~\ref{appendix_figures}.
Roughly speaking, for each dimension $j$, if $y_{nj}=1$ the base kernel assigns probability $(1-\delta)\ge0.5$ to the event $Y_{n+1,j}=1$ and non-zero probability $\delta\le0.5$ to the event $Y_{n+1,j}=0$, and vice versa. For $\delta=0$, the \textsc{dp} update is recovered, while for $\delta=0.5$, a uniform base kernel is obtained. A practical default is to set $\delta=0.25$, or $\delta$ can be estimated using the prequential log-likelihood technique in Section~\ref{sec:rounded_gaussian}.

\subsection{Nonparametric regression and classification}

An important consequence of the multivariate approach is the possibility of using \textsc{mad} sequences for nonparametric regression and classification.  Let each covariate vector $\bm{x}_n \in \mathcal{X} \subseteq \{0,1,\dots\}^d$ be a realization of the random vector $\bm{X}_n$.  Then, we can first estimate the joint probability mass function $p_n(y,\bm{x})$ based on the joint \textsc{cid} sequence $(Y_1,\bm{X}_1),\dots,(Y_n, \bm{X}_n)$, and then compute the conditional predictive distribution $p_n(y \mid \bm{x})$, with $p_n(y \mid \bm{x}) = p_n(y,\bm{x}) / p_n(\bm{x})$. In this case, predictive resampling requires jointly sampling $(Y_{n+1}, \bm{X}_{n+1}) \sim p_n(y,\bm{x})$.  The response variable $Y_n$ can also be multivariate, possibly including both binary and count-valued components.  Moreover, the framework can incorporate continuous covariates by combining discrete \textsc{mh} kernels with the bivariate Gaussian copula update proposed by \citet{fong_holmes_2023}.  We provide simulation examples of count data regression and classification with binary covariates in Section~\ref{sec:simulations}.

\section{Simulation studies}
\label{sec:simulations}

In this section, we present two simulation studies comparing \textsc{mad} sequences with established machine learning methods for count data regression and classification. The studies assess both the predictive accuracy and the quality of uncertainty quantification. A separate simulation comparing \textsc{mad} sequences with the copula updating approach of \cite{fong_holmes_2023} is provided in Appendix~\ref{appendix_copula}.

\subsection{Count data regression with binary covariates}
\label{sec:sim_counts_regr}

Nonparametric methods are especially useful for capturing nonlinear relationships between variables.
To demonstrate the effectiveness of \textsc{mad} sequences in nonlinear regression settings, we conducted a simulation study comparing our approach with well-established machine learning methods, such as random forests and Bayesian additive regression trees (\textsc{bart}; \citealp{chipman2010bart}). 
Although machine learning methods typically excel with large sample sizes, we expect that \textsc{mad} sequences offer better performance in small-sample scenarios.
As additional benchmarks, we include the \textsc{dp} and the Poisson generalized linear model (\textsc{glm}). 
For each sample size $n \in \{40, 80\}$, we generate 50 simulated datasets. 
For each simulation, we include $10$ binary covariates sampled from independent binomial distributions with parameter $0.5$, and $Y_i\mid\x_i\sim\mathrm{Poi(e^{\eta_i})}$, where
\begin{equation*}
    \eta_i = 1 + \sqrt{\big|-0.5x_{i,1} + 1.5x_{i,2} +x_{i,3} +0.5x_{i,4} - 0.5x_{i,5}\big|} + \big(-0.7x_{i,6} +0.5x_{i,7} +0.7x_{i,8} -0.3x_{i,9} -0.3x_{i,10}\big)^2.
\end{equation*}
The study assesses the performance of each method by comparing the accuracy of out-of-sample predictions and uncertainty quantification.
For predictive accuracy, we compute the mean squared error (\textsc{mse}) obtained on $10^4$ new observations.
To evaluate uncertainty quantification accuracy, we consider the frequentist coverage of the 95\% credible intervals for $\E(Y\mid\x)$, calculated across $50$ simulations for each of the $2^{10} = 1024$ possible values for the covariate vector $\x$.
For random forests, we consider the predictive confidence intervals proposed by \citet{mentch2016quantifying}.

Since the posterior variability of \textsc{mad} sequences depends on the choice of the weights $w_n$, we compare four variants of \textsc{mad} sequences using the options discussed in Section~\ref{sec:asymptotics}:
$w_n=(\alpha+n)^{-1}$ (\textsc{mad-1}), $w_n = (\alpha+n)^{-2/3}$ (\textsc{mad-2/3}), the weights $w_n=(2-n^{-1})(n+1)^{-1}$ proposed by \cite{fong_holmes_2023} (\textsc{mad-dpm}), and the adaptive solution $w_n=(\alpha+n)^{\lambda_n}$, with $\lambda_n$ defined in Equation~\eqref{eqn:adaptive} (\textsc{mad-ada}).
We use a weakly informative uniform base measure over the support and set 
$\alpha=1$. For the adaptive weights, we set $\lambda=3/4$ and $N_*=500$.
Each \textsc{mad} sequence employs a rounded Gaussian base kernel for the response, with bandwidth selected by maximizing the prequential log-likelihood.
The hyperparameters for the binary base kernels for the covariates are set to the default value $0.25$.
To ensure permutation invariance, we average the predictive distribution and the prequential log-likelihood over $10$ random permutations of the data.
To compute the martingale posterior distributions we draw $B=200$ samples from $P_N$ using a predictive resampling algorithm with $N=n+1000$.
For point prediction with \textsc{mad} sequences, we use the population mean $\theta_n = \sum_{y\in\mathcal{Y}}y\, p_n(y\mid \x)$, while for uncertainty quantification, we employ the credible intervals obtained from the martingale posterior for the population mean, as described in Section~\ref{sec:martingale}.

\begin{table}[t]
\centering
\caption{\small Out-of-sample \textsc{mse} evaluated on $10^4$ new observations and median frequentist coverage for the 95\% credible intervals for $\E(Y\mid\x)$ for each of the 1024 possible values of the covariate vector $\x$.
The results are obtained by averaging $50$ simulated datasets. 
Bold values represent the lowest \textsc{mse} among the proposed methods and the lowest \textsc{mse} among the competitors, for each scenario.
Standard error and interquartile range are reported in brackets for \textsc{mse} and median coverage, respectively.}
\label{tab:sim_regr}
\vspace{0.2cm}
\footnotesize
\begin{tabular}{l c rl c rl cc rl c rl}
\toprule
Regression & & \multicolumn{5}{c}{$n=40$} & & & \multicolumn{5}{c}{$n=80$}\\
& & \multicolumn{2}{c}{\textsc{mse}} & & \multicolumn{2}{c}{Median coverage} & & & \multicolumn{2}{c}{\textsc{mse}} & & \multicolumn{2}{c}{Median coverage}\\
\midrule
\textsc{glm} & & $120.77$ & $[51.51]$ & &  $0.84$ & $[0.70, 0.90]$ & & & $94.93$ & $[8.37]$ & & $0.90$ & $[0.74, 0.96]$\\
\textsc{bart} & & $101.17$ & $[12.69]$ & &  $0.92$ & $[0.82, 0.96]$ & & & $\bm{74.17}$ & $[10.00]$ & & $0.92$ & $[0.84,0.96]$\\
\textsc{rf} & & $\bm{99.98}$ & $[7.45]$ & &  $1.00$ & $[0.92,1.00]$ & & & $87.75$ & $[6.53]$ & & $1.00$ & $[1.00,1.00]$\\
\textsc{dp} & & $1450.21$ & $[5.53]$ & &  $0.00$ & $[0.00, 0.00]$ & & & $1395.61$ & $[8.72]$ & & $0.00$ & $[0.00, 0.00]$\\
\midrule
\textsc{mad-1} & & $91.07$ & $[10.35]$ & &  $0.92$ & $[0.68,0.98]$ & & & $73.96$ & $[7.60]$ & & $0.88$ & $[0.58,0.98]$\\
\textsc{mad-2/3} & & $88.83$ & $[13.00]$ & &  $0.92$ & $[0.50,0.98]$ & & & $73.18$ & $[9.58]$ & & $0.94$ & $[0.56,1.00]$\\
\textsc{mad-dpm} & & $\bm{87.41}$ & $[12.36]$ & &  $0.90$ & $[0.52,0.98]$ & & & $\bm{72.07}$ & $[9.48]$ & & $0.92$ & $[0.54,1.00]$\\
\textsc{mad-ada} & & $90.61$ & $[10.28]$ & &  $0.96$ & $[0.77,1.00]$ & & & $73.45$ & $[7.69]$ & & $0.96$ & $[0.64,1.00]$\\
\bottomrule
\end{tabular}
\end{table}

\begin{table}[t]
\centering
\caption{\small \small Out-of-sample \textsc{auc} evaluated on $10^4$ new observations and median frequentist coverage for the 95\% credible intervals for $\E(Y\mid\x)$ for each of the 1024 possible values of the covariate vector $\x$.
The results are obtained by averaging $50$ simulated datasets. 
Bold values represent the highest \textsc{auc} among the proposed methods and the highest \textsc{auc} among the competitors, for each scenario.
Standard error and interquartile range are reported in brackets for \textsc{auc} and median coverage, respectively.}
\label{tab:sim_class}
\vspace{0.2cm}
\footnotesize
\begin{tabular}{l c rl c rl cc rl c rl}
\toprule
Classification & & \multicolumn{5}{c}{$n=150$} & & & \multicolumn{5}{c}{$n=300$}\\
& & \multicolumn{2}{c}{\textsc{auc}} & & \multicolumn{2}{c}{Median coverage} & & & \multicolumn{2}{c}{\textsc{auc}} & & \multicolumn{2}{c}{Median coverage}\\
\midrule
\textsc{glm} & & $0.796$ & $[0.014]$ & &  $0.18$ & $[0.06, 0.46]$ & & & $0.809$ & $[0.007]$ & & $0.00$ & $[0.00, 0.20]$\\
\textsc{bart} & & $0.863$ & $[0.026]$ & &  $0.18$ & $[0.00, 0.71]$ & & & $\bm{0.932}$ & $[0.009]$ & & $0.22$ & $[0.00,0.80]$\\
\textsc{rf} & & $\bm{0.882}$ & $[0.025]$ & &  $0.18$ & $[0.10,0.38]$ & & & $0.913$ & $[0.015]$ & & $0.70$ & $[0.60,0.84]$\\
\textsc{dp} & & $0.644$ & $[0.011]$ & &  $0.00$ & $[0.00,0.00]$ & & & $0.724$ & $[0.012]$ & & $0.00$ & $[0.00,0.02]$\\
\midrule
\textsc{mad-1} & & $0.873$ & $[0.014]$ & &  $0.44$ & $[0.08,0.73]$ & & & $0.899$ & $[0.008]$ & & $0.34$ & $[0.02,0.68]$\\
\textsc{mad-2/3} & & $0.869$ & $[0.015]$ & &  $0.86$ & $[0.72,0.96]$ & & & $0.899$ & $[0.009]$ & & $0.98$ & $[0.92,1.00]$\\
\textsc{mad-dpm} & & $0.872$ & $[0.014]$ & & $0.72$ & $[0.52,0.89]$ & & & $\bm{0.901}$ & $[0.008]$ & & $0.74$ & $[0.48,0.90]$\\
\textsc{mad-ada} & & $\bm{0.874}$ & $[0.014]$ & &  $0.82$ & $[0.62,0.94]$ & & & $0.900$ & $[0.008]$ & & $0.90$ & $[0.72,0.98]$\\
\bottomrule
\end{tabular}
\end{table}

The \textsc{mse} and coverage results obtained across the $50$ simulated datasets are reported in Table~\ref{tab:sim_regr}.
Since it does not allow the borrowing of information between nearby locations, the \textsc{dp} overfits the observed data, consequently leading to poor out-of-sample point predictions and inaccurate credible intervals.
In terms of predictive accuracy, as expected, \textsc{mad} sequences perform substantially better than 
the Poisson \textsc{glm}, which fails to capture nonlinear effects.
The four variants of \textsc{mad} sequences exhibit similar predictive performance, indicating that the choice of weights has little impact on the point estimate.
Furthermore, all \textsc{mad} sequences provide a better prediction accuracy than random forest and \textsc{bart} for both $n=40$ and $n=80$.

In terms of uncertainty quantification, the credible intervals produced by the \textsc{mad} sequence with adaptive weights are, on average, well calibrated for both sample sizes considered. 
In contrast, alternative weighting strategies tend to result in undercoverage. 
However, we observe substantial variability in empirical coverage in $2^{10} = 1024$ possible covariate configurations.
Although the Poisson \textsc{glm} and \textsc{bart} models exhibit less variability in coverage across covariate settings, they consistently fall below the nominal level, indicating systematic undercoverage.
Random forests produce overly conservative credible intervals, with coverage levels substantially exceeding $0.95$.

\subsection{Classification with binary covariates}
\label{sec:sim_class}

We also conducted a simulation study focused on binary classification from $10$ binary covariates. We generated $50$ simulated datasets for each sample size $n=\{150,300\}$. We sample $y_i\mid\bm{x}_i$ from ${\mathrm{Bin}\{1,\mathrm{logit}^{-1}(\eta_i)\}}$, with
\begin{equation*}
    \eta_i = -3 +2x_{i,1} -4x_{i,2} +3x_{i,3} -3x_{i,4} -3x_{i,5}x_{i,6} + \sqrt{\big|2x_{i,7} -3x_{i,8}\big|} + \big(3x_{i,9}-2x_{i,10}\big)^2,
\end{equation*}
$i=1,\ldots,n$, and draw $x_{i,j}\sim \mbox{Bernoulli}(\tau_j)$, $j=1,\ldots,10$, with $\bm\tau=(0.45, 0.65, 0.7, 0.4, 0.4, 0.6, 0.7, 0.3,\\0.55, 0.55)$ independently. We considered the variants of \textsc{mad} sequences described in Section~\ref{sec:sim_counts_regr} for count data regression. The hyperparameters for the binary base kernels for the covariates are set to the default value of $0.25$, while the response hyperparameter is selected by maximizing the prequential log-likelihood. We compare with \textsc{dp}, logistic \textsc{glm}, \textsc{bart}, and random forest. 
We evaluated out-of-sample predictive accuracy using the \textsc{roc} curve computed on $10^4$ new observations, and compared the corresponding area under the curve (\textsc{auc}).
Uncertainty quantification performance is evaluated as in Section~\ref{sec:sim_counts_regr}.
The results are reported in Table~\ref{tab:sim_class}. 

In terms of predictive accuracy, \textsc{mad} sequences outperform both the \textsc{dp} and logistic \textsc{glm} models, and are competitive with \textsc{bart} and random forests, particularly when $n = 150$.
For $n = 300$, \textsc{bart} and random forests produce slightly more accurate point predictions; however, the credible intervals produced by \textsc{mad} sequences are significantly better calibrated in terms of frequentist coverage.
In particular, although we do not claim optimal calibration guarantees, \textsc{mad} sequences with weights $w_n = (\alpha + n)^{-2/3}$, as well as those using adaptive weights, produce 95\% credible intervals with empirical coverage close to the nominal level, especially when $n = 300$.
In contrast, while providing good point predictions, the standard choice $w_n = (\alpha + n)^{-1}$ results in poorly calibrated credible intervals.

\section{Application: corvids abundance in Finland}
\label{sec:corvids}

We analyze data from an ecological monitoring study of birds in Finland \citep{lindstrom2015large}, focusing on the occurrence rates of corvids in year 2009.
Specifically, four species of the survey belong to the corvid family (\textit{Corvidae}): carrion crow (\textit{Corvus corone}), Eurasian magpie (\textit{Pica pica}), common raven (\textit{Corvus corax}), and Eurasian jay (\textit{Garrulus glandarius}). Corvids are non-migratory birds that occur throughout Finland. Monitoring their abundance is particularly important, as they can respond to habitat and climate changes more rapidly than most other species \citep{jokimaki2022long}. Moreover, their presence influences ecological balance, given their efficiency as predators and their ability to live in close associations with humans \citep{madden2015review}.
The study includes observations from $n=76$ different sites. For each observation, the data set includes environmental covariates that include habitat type (broad leaf forest, coniferous
forest, open habitat, urban habitat, or wetlands) and temperature in April-May (classified as cold, mild or warm if the registered average temperature is below $5$°C, between $5$° and $7.5$°C, or above $7.5$°C, respectively).

\begin{figure}[t]
\centering
\includegraphics[width=0.95\textwidth]{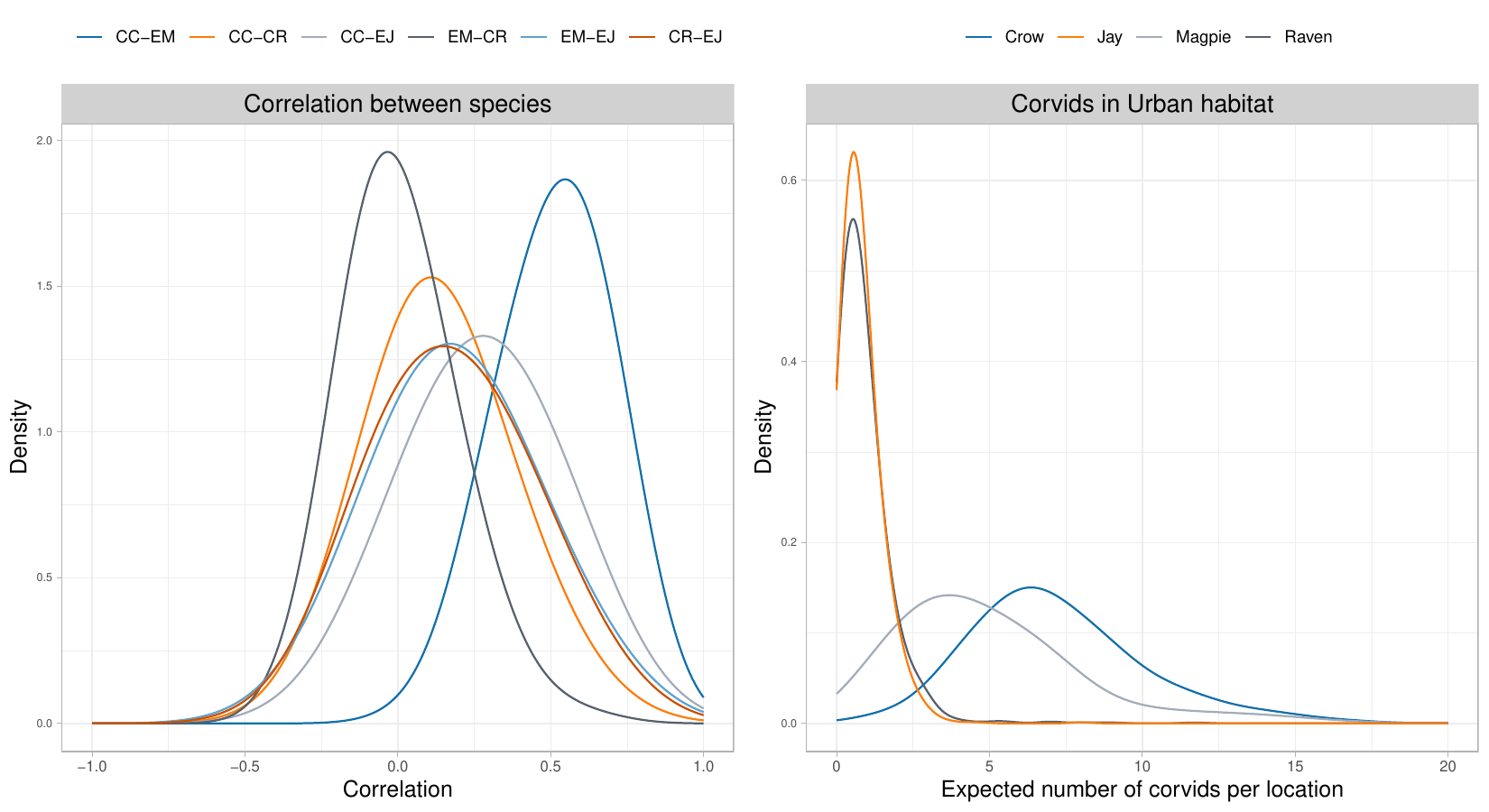}
\caption{\small
Left: posterior distribution for the correlation across species, with \textsc{cc}, \textsc{em}, \textsc{cr}, and \textsc{ej} denoting common crow, Eurasian magpie, common raven, and Eurasian jay, respectively.
Right: posterior distribution of the expected number of corvids observed at each urban habitat location.
}
\label{fig:corvids_1}
\end{figure}

Modeling bird species abundance data presents several challenges: count data are often zero inflated and overdispersed relative to the Poisson distribution, and the dependence structure between species is unknown a priori. \textsc{mad} sequences are particularly well suited for modeling such data, as their flexibility allows us to effectively address zero inflation and overdispersion while capturing across species dependence through the multivariate \textsc{mh} kernels described in Section~\ref{sec:multiv_counts}. We compare \textsc{mad} sequences with three widely used methods in ecological modeling: hierarchical modeling of species communities (\textsc{hmsc}; \citealp{ovaskainen2016using, tikhonov2017using, ovaskainen2020joint}), generalized joint attribute models (\textsc{gjam}s; \citealp{clark2017generalized}), and generalized linear latent variable models (\textsc{gllvm}s; \citealp{skrondal2004generalized}).
These models capture species dependence by incorporating latent variables in the linear predictor, and \textsc{hmsc} additionally includes site-specific random intercepts. To account for overdispersion and zero inflation, we consider negative binomial (\textsc{gllvm-nb}) and zero-inflated negative binomial (\textsc{gllvm-zinb}) distributions for the response in \textsc{gllvm}s.

For the \textsc{mad} sequence, we choose the adaptive weights $w_n=(\alpha+n)^{-\lambda_n}$, with $\lambda_n$ defined in Equation~\eqref{eqn:adaptive}. We use a weakly informative uniform base measure setting $\alpha=1$. As default choice we set $\lambda=3/4$ and $N_*=500$, as it also proved to be effective in the simulation study of Section~\ref{sec:sim_counts_regr}.
Then, we select hyperparameters by minimizing the prequential log-likelihood, obtaining $\bm\delta = (0.45, 0.41)$ for temperature and habitat, respectively, and $\bm\sigma = (3.81, 1.53, 1.12, 1.06)$ for the response. For permutation invariance, we average the \textsc{mad} predictive and the corresponding prequential log-likelihood over $10$ permutations. 
Computing the \textsc{mad} predictive distribution averaged over permutations takes $25.68$ seconds on a 3.2 GHz 8-Core Apple M1 CPU.
Uncertainty quantification is carried out using a predictive resampling algorithm with $N=n+1000$ and $B=1000$. To compare methods, we evaluate out-of-sample predictive accuracy using predictive mean squared error (\textsc{mse}) computed with respect to the data obtained at the same locations for year 2010. 
As a point estimate for \textsc{mad} sequences we consider the population mean computed at time $n$.
The \textsc{mad} sequence had the best predictive performance ($\textsc{mse}= 3.40$), followed by \textsc{gjam} ($3.78$), \textsc{gllvm-zinb} ($4.69$), \textsc{hmsc} ($4.70$), \textsc{gllvm-nb} ($4.72$).
Moreover, as shown in Appendix~\ref{appendix_figures}, the \textsc{mad} sequence performs well in capturing the dependence structure between species. In contrast, popular parametric latent variable models built from elaborations of hierarchical \textsc{glm}s struggle to recover the empirical dependence structure. 

\begin{figure}[t]
\centering
\includegraphics[width=\textwidth]{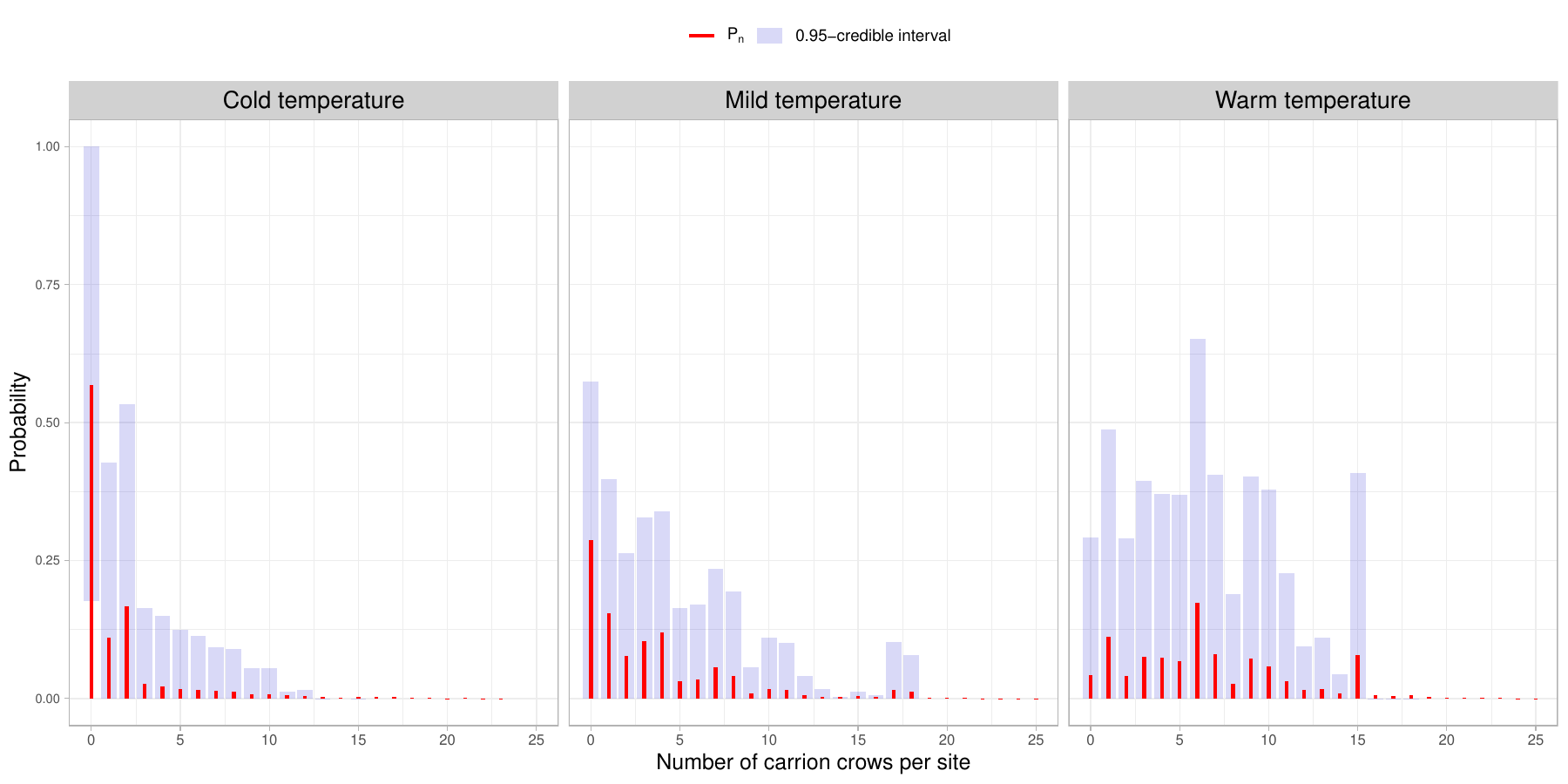}
\caption{\small \textsc{mad} predictive distribution of carrion crows across temperatures (red) with corresponding 0.95 credible interval (blue).}
\label{fig:corvids_2}
\end{figure}

The results obtained using \textsc{mad} sequences provide valuable insights. 
The right panel of Figure~\ref{fig:corvids_1} presents the posterior distribution of the population mean for each species in each urban location, obtained as discussed in Section~\ref{sec:martingale}.
The figure suggests that there are more crows and magpies (95\% credible intervals $[3.15, 11.28]$ and $[0.70, 8.81]$, respectively) in urban habitats compared to ravens and jays (95\% c.i. $[0.06, 1.61]$ and $[0.11, 1.57]$, respectively).
This aligns with \cite{matsyura2016spatial} and \cite{jokimaki2022long}, which indicate that urban settlements serve as stable wintering environments for magpies and crows.
As shown in the left panel of Figure~\ref{fig:corvids_1}, there is a positive dependence between these two species (posterior mean for correlation is $0.45$, with 95\% c.i. $[0.24,0.69]$), probably due to shared habitat preferences. 
A similar result was observed by \cite{dupak2023interactions} for hooded crows (a subspecies of carrion crows) and magpies in Ukraine.
Moreover, urban environments are also the preferred habitat for magpies (as reported in Appendix~\ref{appendix_figures}), probably because urbanization buffers seasonal variations in climate conditions and food availability \citep{benmazouz2021corvids}, facilitating magpies adaptation.
However, within the urban environment, Eurasian magpies avoid densely populated areas \citep{jokimaki2017urbanization}.
The differences in the estimated distributions at different temperatures represented in Figure~\ref{fig:corvids_2} suggest that carrion crows prefer areas with higher temperature, which correspond to the Southern Finland region (see Appendix~\ref{appendix_figures}).
Finally, consistent with \cite{madge1994crows}, our results indicate that the distributions of common ravens and Eurasian jays are homogeneous across different habitats and temperature conditions. 

\section{Discussion}

Our proposed \textsc{mad} sequences provide a flexible and effective approach for nonparametric modeling of discrete data distributions. We have shown that these sequences implicitly define a prior over the data-generating mechanism which, unlike the \textsc{dp}, can induce positive correlations between nearby values in the support, making \textsc{mad} sequences appealing in practical applications.

As discussed in Section~\ref{sec:asymptotics}, the weight sequence $(w_n)_{n \ge 1}$ plays a dual role, governing both the posterior variability induced by the \textsc{mad} prior and the learning rate of the associated predictive rule. Although various choices for $w_n$ have been proposed---some of which perform well empirically, as illustrated in Section~\ref{sec:simulations}---none currently guarantees well-calibrated frequentist coverage of posterior credible intervals. While this issue can be addressed in parametric settings \citep{fong2021conformal, fong2024asymptotics}, it remains an open challenge in the context of Bayesian nonparametric predictive inference, representing an important area for future research \citep{fortini2024exchangeability}.

Although \textsc{mad} sequences have proven effective for modeling multivariate count data, including in regression settings, their performance is likely to degrade in high-dimensional cases due to the curse of dimensionality. A promising direction for future work is to incorporate \textsc{mad} sequences into more structured models---for example, by using them to define priors over latent parameters. See \citet{Airoldi2014} for a precursor to these ideas. In particular, \textsc{cid} sequences could be employed to model low-dimensional latent variables underlying high-dimensional observations. Interesting next steps include establishing theoretical guarantees for these approaches and ensuring their computational scalability.

\section*{Acknowledgements}
This research was partially supported by the National Institutes of Health (grant ID R01ES035625), by the European Research Council under the European Union’s Horizon 2020 research and 
innovation programme (grant agreement No 856506), by the National Science Foundation (NSF IIS-2426762), and by the Office of Naval Research (N00014-21-1-2510). T.R. is supported by the European Union – Next Generation EU funds, component M4C2, investment 1.1., PRIN-PNRR 2022 (P2022H5WZ9).

\clearpage
\appendix

\section{Proofs}
\label{appendix_proofs}

\subsection{Preliminary results}

In this Section, we provide useful technical results for some of the later proofs, in particular for the asymptotic results. The following theorems from \cite{berti2013exchangeable} are employed in Lemma~\ref{lemma_1} for proving the convergence in total variation distance of the sequence of \textsc{mad} predictive distributions and they are included to make the paper self-contained.

\begin{theorem}[Theorem 1 in \citealp{berti2013exchangeable}]
\label{th:berti_2013_1}
Let $\lambda$ be a $\sigma$-finite measure on a Polish space $S$.
Let the sequence $(Y_n)_{n\ge1}$ be \textsc{cid} with
\begin{equation*}
    P_n(\cdot) = \PP\{Y_{n+1}\in\cdot\mid Y_1,\ldots,Y_n\}\longrightarrow P(\cdot)
\end{equation*}
weakly $\PP$-almost surely.
Then, $P\ll\lambda$ $\PP$-a.s. if and only if $||P_n-P||_{TV}\rightarrow0$ $\PP$-a.s. and $P_n\ll\lambda$ for all $n\ge1$.
\end{theorem}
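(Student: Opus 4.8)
The plan is to prove the two implications separately, using as the central tool the closed martingale representation of the predictive distributions of a \textsc{cid} sequence. Writing $\mathcal{F}_n = \sigma(Y_1,\dots,Y_n)$, the first step I would carry out is to establish that $P_n(A) = \E\{P(A)\mid\mathcal{F}_n\}$ for every measurable $A$. This follows because the \textsc{cid} property makes $(P_n(A))_{n\ge1}$ a bounded, hence uniformly integrable, martingale for each fixed $A$; it therefore converges $\PP$-a.s.\ and in $L^1$ to a limit that one identifies with the assumed weak limit $P(A)$, and letting $m\to\infty$ in $P_n(A)=\E\{P_m(A)\mid\mathcal{F}_n\}$ gives the representation. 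I would also record that $P$, being the a.s.\ limit of the $\mathcal{F}_n$-measurable $P_n$, is $\mathcal{F}_\infty$-measurable as a random measure.

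The implication ``$\Leftarrow$'' is the routine one. Assuming $P_n\ll\lambda$ for all $n$ and $\|P_n-P\|_{TV}\to0$ $\PP$-a.s., for any $A$ with $\lambda(A)=0$ we have $P_n(A)=0$ and $|P_n(A)-P(A)|\le\|P_n-P\|_{TV}\to0$, so $P(A)=0$; reducing to a countable generating $\pi$-system makes the null-set statement uniform in $\omega$, giving $P\ll\lambda$ $\PP$-a.s. For the harder direction ``$\Rightarrow$'', assume $P\ll\lambda$ $\PP$-a.s.\ and let $f=\dd P/\dd\lambda$ be a jointly measurable version of the random density. Combining the representation with a conditional Fubini argument yields $P_n(A)=\E\{\int_A f\,\dd\lambda\mid\mathcal{F}_n\}=\int_A \E\{f\mid\mathcal{F}_n\}\,\dd\lambda$, so $f_n:=\E\{f\mid\mathcal{F}_n\}$ is a $\lambda$-density of $P_n$, proving $P_n\ll\lambda$. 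Since $\int\E\{f(y)\}\,\lambda(\dd y)=1$ by Tonelli, $f(y)$ is $\PP$-integrable for $\lambda$-a.e.\ $y$, so for such $y$ the sequence $(f_n(y))_{n\ge1}$ is a Doob martingale converging $\PP$-a.s.\ to $\E\{f(y)\mid\mathcal{F}_\infty\}=f(y)$, the last equality using $\mathcal{F}_\infty$-measurability of $f$. Finally, since $f_n$ and $f$ are nonnegative densities with $\int f_n\,\dd\lambda=\int f\,\dd\lambda=1$ and $f_n\to f$ pointwise $\lambda$-a.e., Scheff\'e's lemma gives $\int|f_n-f|\,\dd\lambda\to0$, that is $\|P_n-P\|_{TV}\to0$, $\PP$-a.s.

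The hard part will be the measurability bookkeeping in the ``$\Rightarrow$'' direction: I must select versions of $f$ and of $f_n$ that are jointly measurable in $(y,\omega)$, justify interchanging conditional expectation with $\lambda$-integration, and reconcile the two distinct null-set quantifiers---the $\lambda$-a.e.\ set of $y$ on which the fibrewise martingale converges against the $\PP$-a.s.\ event on which all the limits hold simultaneously. The Polish-space hypothesis is precisely what enables these constructions (regular conditional distributions and measurable disintegration), and the most delicate point is verifying that the fibrewise martingale limit is genuinely $f$ rather than a smaller $\mathcal{F}_\infty$-measurable modification, which is where the $\mathcal{F}_\infty$-measurability of $P$ established in the first step is used.
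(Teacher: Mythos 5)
A preliminary point: the paper does \emph{not} prove this statement. It is Theorem~1 of Berti, Pratelli and Rigo (2013), quoted verbatim in Appendix~A.1 ``to make the paper self-contained'' and then invoked in the proof of Lemma~A.1; the paper's own treatment is simply the citation. So your proposal can only be measured against the standard argument (essentially that of the original source), and in structure it matches it: a closed-martingale representation $P_n(\cdot)=\E\{P(\cdot)\mid\mathcal{F}_n\}$, the easy total-variation estimate for the ``$\Leftarrow$'' implication, and, for ``$\Rightarrow$'', a conditional Fubini step exhibiting $f_n=\E\{f\mid\mathcal{F}_n\}$ as a $\lambda$-density of $P_n$, fibrewise Doob martingale convergence, an exchange of the $\lambda$-a.e.\ and $\PP$-a.s.\ quantifiers via joint measurability and Fubini, and Scheff\'e's lemma. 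All of that is sound, including your diagnosis of where the Polish hypothesis and the $\mathcal{F}_\infty$-measurability of $P$ are needed; the ``$\Leftarrow$'' direction in fact needs no $\pi$-system reduction at all, since on the almost-sure event where $P_n\ll\lambda$ for all $n$ and $\|P_n-P\|_{TV}\to 0$, the argument is already pointwise in $\omega$ and uniform over all $\lambda$-null sets $A$.

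There is, however, one genuine gap, and it sits at your very first step rather than in the bookkeeping you flag as the hard part. You assert that the bounded martingale $(P_n(A))_{n\ge1}$ converges to ``the assumed weak limit $P(A)$'' and then deduce $P_n(A)=\E\{P(A)\mid\mathcal{F}_n\}$ by letting $m\to\infty$ in $P_n(A)=\E\{P_m(A)\mid\mathcal{F}_n\}$. But weak convergence $P_n\to P$ only yields $P_n(f)\to P(f)$ for bounded continuous $f$ (equivalently, $P_n(A)\to P(A)$ for $P$-continuity sets $A$); for a general Borel set the martingale limit $L_A:=\lim_n P_n(A)$ certainly exists, but its identification with $P(A)$ is precisely what your step presupposes, so as written it is circular. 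The fix is standard but is a real argument: let $\mathcal{H}$ be the class of bounded Borel $f$ with $P_n(f)\to P(f)$ $\PP$-a.s.; it contains all bounded continuous $f$ (a single null set serves them all), is a vector space containing constants, and is closed under bounded increasing limits --- for $f_k\uparrow f$ one has $L_f\ge\sup_k P(f_k)=P(f)$ a.s., while $\E(L_f)=\E\{f(Y_1)\}=\E\{P(f)\}$ by the identical marginals of a \textsc{cid} sequence, bounded convergence, and monotone convergence, which forces $L_f=P(f)$ a.s. --- so the functional Dynkin class theorem gives $\mathcal{H}=$ all bounded Borel functions. Alternatively, you may simply cite Lemma~2.4 of Berti, Pratelli and Rigo (2004), on which the paper already relies for Corollary~1. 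With that representation secured, the remainder of your argument goes through exactly as you describe.
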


\begin{theorem}[Theorem 4 in \citealp{berti2013exchangeable}]
\label{th:berti_2013_4}
Suppose $(Y_n)_{n\ge1}$ is \textsc{cid} and $P_n\ll\lambda$ for every $n\ge1$, with $p_n$ the density of $P_n$ with respect to $\lambda$.
Then, $P\ll\lambda$ if and only if, for every compact $K$ such that $\lambda(K)<\infty$, $p_n$ is a function on $S$ uniformly integrable with respect to $\lambda_K$ $\PP$-almost surely, where $\lambda_K(\cdot)=\lambda(\cdot\cap K)$ is the restriction of $\lambda$ to $K$.

In particular, $P\ll\lambda$ $\PP$-a.s. if, for every compact $K$ such that $\lambda(K)<\infty$, there exists $d>1$ such that
\begin{equation}
\label{eqn:condition_berti}
\sup_n\int_K p_n(y)^d\dd\lambda(y) < \infty
\end{equation}
$\PP$-a.s.
Moreover, for condition \eqref{eqn:condition_berti} to be true, it suffices that
\begin{equation*}
    \sup_n\E\left\{\int_Kp_n(y)^d\dd\lambda(y)\right\} < \infty
\end{equation*}
$\PP$-a.s.
\end{theorem}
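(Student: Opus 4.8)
The plan is to prove the equivalence by combining the martingale structure of \textsc{cid} sequences with standard $L^1$-compactness theory, and then to read off the two sufficient conditions as corollaries. Throughout, write $\mathcal{F}_n=\sigma(Y_1,\ldots,Y_n)$ and recall that for a \textsc{cid} sequence $(P_n(A))_{n\ge1}$ is a bounded martingale for every fixed measurable $A$, by the martingale property \eqref{eqn:martingale}; hence $P_n(A)\to P(A)$ $\PP$-a.s.\ and in $L^1$, and moreover every \textsc{cid} sequence satisfies $P_n\to P$ weakly $\PP$-a.s.

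For the \emph{only if} direction, suppose $P\ll\lambda$ $\PP$-a.s. First I would invoke Theorem~\ref{th:berti_2013_1} to upgrade weak convergence to $\|P_n-P\|_{TV}\to0$ $\PP$-a.s. Since every $P_n$ and $P$ is dominated by $\lambda$, total variation convergence is exactly $L^1(\lambda)$ convergence of the densities, $\int|p_n-p|\,\dd\lambda\to0$. Restricting to a compact $K$ with $\lambda(K)<\infty$ gives $p_n\to p$ in $L^1(\lambda_K)$, and convergence in $L^1$ on a finite measure space forces $(p_n)$ to be uniformly integrable with respect to $\lambda_K$, which is the claim.

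For the \emph{if} direction, fix a compact $K$ with $\lambda(K)<\infty$ and work on the $\PP$-probability-one event where $P_n\to P$ weakly. Uniform integrability of $(p_n)$ with respect to the finite measure $\lambda_K$ yields, by the Dunford--Pettis theorem, relative weak sequential compactness in $L^1(\lambda_K)$, so I can extract a subsequence $p_{n_k}\rightharpoonup g$ in $L^1(\lambda_K)$. Testing this weak limit against indicators of $P$-continuity sets contained in $K$ and matching the two limits $\int_A p_{n_k}\,\dd\lambda\to\int_A g\,\dd\lambda$ and $P_{n_k}(A)\to P(A)$ identifies $g\,\lambda_K$ with the restriction of $P$ to $\mathrm{int}(K)$, so $P$ is absolutely continuous there; exhausting the Polish space $S$ by such compacts (using $\sigma$-finiteness of $\lambda$ and inner regularity) then gives $P\ll\lambda$ $\PP$-a.s. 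I expect this identification to be the \emph{main obstacle}: one must reconcile the for-each-fixed-$A$ martingale convergence with the weak functional limit on a single probability-one event, and control the boundary effects of the exhausting compacts, which is exactly where the Polish/regularity hypotheses enter.

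Finally, the two sufficient conditions follow by specialization. The \emph{In particular} claim is immediate from the de la Vall\'ee--Poussin criterion: the bound \eqref{eqn:condition_berti} with $d>1$ exhibits the superlinear test function $\Phi(t)=t^d$, certifying uniform integrability of $(p_n)$ on each $K$, so the equivalence just proved applies. For the \emph{Moreover} claim I would show that $Z_n:=\int_K p_n(y)^d\,\dd\lambda(y)$ is a nonnegative submartingale. The density-level martingale property $\E\{p_{n+1}(y)\mid\mathcal{F}_n\}=p_n(y)$ for $\lambda$-a.e.\ $y$ follows from $\E\{P_{n+1}(A)\mid\mathcal{F}_n\}=P_n(A)$ by a monotone-class argument, after selecting a jointly measurable version of the densities; combined with Jensen's inequality for the convex map $t\mapsto t^d$ and conditional Tonelli it gives $\E\{Z_{n+1}\mid\mathcal{F}_n\}=\int_K\E\{p_{n+1}(y)^d\mid\mathcal{F}_n\}\,\dd\lambda(y)\ge\int_K p_n(y)^d\,\dd\lambda(y)=Z_n$. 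A bound $\sup_n\E(Z_n)<\infty$ then makes $(Z_n)$ an $L^1$-bounded nonnegative submartingale, so Doob's convergence theorem yields a $\PP$-a.s.\ finite limit; since a convergent sequence of reals is bounded, $\sup_n Z_n<\infty$ $\PP$-a.s., which is precisely \eqref{eqn:condition_berti}.
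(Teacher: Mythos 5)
A preliminary remark on the comparison itself: the paper never proves this statement. It is Theorem~4 of \citet{berti2013exchangeable}, reproduced in Appendix~\ref{appendix_proofs} solely to keep the paper self-contained, and it is then invoked (through its ``In particular'' and ``Moreover'' clauses) inside the proof of Lemma~\ref{lemma_1}. So there is no in-paper proof to measure you against; your attempt has to stand on its own.

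Much of it does. The ``only if'' direction is correct: given $P\ll\lambda$ $\PP$-a.s., Theorem~\ref{th:berti_2013_1} upgrades weak convergence to total variation convergence, total variation convergence of $\lambda$-dominated measures is $L^1(\lambda)$ convergence of the densities, and an $L^1$-convergent sequence on a finite measure space is uniformly integrable. The two closing reductions are also sound: de la Vall\'ee--Poussin with $\Phi(t)=t^d$ gives the ``In particular'' claim, and your submartingale argument for the ``Moreover'' claim --- the density-level martingale property $\E\{p_{n+1}(y)\mid\mathcal{F}_n\}=p_n(y)$ for $\lambda$-a.e.\ $y$, then Jensen, Tonelli, and Doob applied to the $L^1$-bounded nonnegative submartingale $Z_n=\int_K p_n(y)^d\,\dd\lambda(y)$ --- is exactly right.

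The genuine gap is in the ``if'' direction, which is the substantive half and the one the paper actually needs. Your plan is: Dunford--Pettis on a fixed compact $K$ with $\lambda(K)<\infty$, identification of the weak $L^1(\lambda_K)$ limit $g$ with $P$ on $P$-continuity subsets of $K$, hence $P\ll\lambda$ on $\mathrm{int}(K)$, and finally ``exhausting the Polish space $S$ by such compacts.'' That exhaustion does not exist in general. A Polish space need not be $\sigma$-compact; worse, in many Polish spaces (any infinite-dimensional separable Banach space, or the Baire space $\N^{\N}$) every compact set has empty interior, so the conclusion ``$P\ll\lambda$ on $\mathrm{int}(K)$'' can be vacuous for every admissible $K$. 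In addition, the compacts supplied by tightness of $P$ carry large $P$-mass but come with no guarantee that $\lambda(K)<\infty$, and the uniform integrability hypothesis says nothing about compacts of infinite $\lambda$-measure; this cannot be patched by intersecting with the sets of a $\sigma$-finite decomposition of $\lambda$, since those intersections need not be compact. Keep in mind what makes the theorem nontrivial in the first place: for each \emph{fixed} Borel $N$ with $\lambda(N)=0$ one already has $P_n(N)=0$ and $P_n(N)=\E\{P(N)\mid\mathcal{F}_n\}\rightarrow P(N)$ $\PP$-a.s., so the entire content is upgrading ``for every null set, a.s.''\ to ``a.s., for every null set,'' and any correct proof must produce a single almost-sure event on which absolute continuity holds simultaneously for all null sets. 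A repair must therefore avoid topological exhaustion altogether; the natural route is through the density-level martingale limit $q(y)=\lim_n p_n(y)$, the random measure $Q(\cdot)=\int_{\cdot}q\,\dd\lambda$, the a.s.\ inequality $P\ge Q$ holding simultaneously for all Borel sets (via a countable generating algebra), and an argument that uniform integrability prevents escape of mass, forcing $Q(S)=1$ and hence $P=Q\ll\lambda$. Your sketch contains no substitute for that last step, which is precisely where the hypothesis bites.
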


\begin{lemma}
\label{lemma_1}
Let $(P_n)_{n\ge1}$ be a sequence of \textsc{mad} predictive distributions and $P$ its limit distribution.
Then, $P_n\rightarrow P$ in total variation $\PP$-almost surely as $n\rightarrow\infty$.
\end{lemma}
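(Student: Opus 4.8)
The plan is to upgrade the weak convergence of Corollary~\ref{cor:cid}(b) to convergence in total variation by invoking the two preliminary results of \cite{berti2013exchangeable}. Throughout, take the underlying space to be $\mathcal{Y}$ and let $\lambda$ be the counting measure on $\mathcal{Y}$; since $\mathcal{Y}$ is countable, it is a Polish space under the discrete topology and $\lambda$ is $\sigma$-finite, so the standing hypotheses of Theorems~\ref{th:berti_2013_1} and~\ref{th:berti_2013_4} are met. The two facts I am allowed to use are that $(Y_n)_{n\ge1}$ is \textsc{cid} (Theorem~\ref{th:cid}) and that $P_n\to P$ weakly $\PP$-a.s., where $P$ is a random \emph{probability} measure (Corollary~\ref{cor:cid}(b)). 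By the forward implication of Theorem~\ref{th:berti_2013_1}, it then suffices to establish that $P\ll\lambda$ $\PP$-a.s.; the companion condition $P_n\ll\lambda$ for every $n$ holds automatically for the counting measure.

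First I would verify $P\ll\lambda$. On a discrete space every compact set $K$ is finite, and each predictive mass function satisfies $p_n(y)\le 1$; hence, for any $d>1$,
\begin{equation*}
\sup_n\int_K p_n(y)^d\,\dd\lambda(y)=\sup_n\sum_{y\in K}p_n(y)^d\le |K|<\infty
\end{equation*}
$\PP$-a.s., so condition~\eqref{eqn:condition_berti} of Theorem~\ref{th:berti_2013_4} is satisfied and $P\ll\lambda$ $\PP$-a.s. (Alternatively, $P\ll\lambda$ is immediate because the only $\lambda$-null set is the empty set.) Feeding this back into Theorem~\ref{th:berti_2013_1} yields $\|P_n-P\|_{TV}\to 0$ $\PP$-a.s., which is the claim.

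The argument is short because, on a countable space, the absolute-continuity conditions are essentially free; the genuine content sits in the cited inputs rather than in the lemma itself. The only real subtlety—and the step I would be most careful about—is that total-variation (equivalently $\ell_1$) convergence is strictly stronger than the pointwise convergence $p_n(y)\to p(y)$ that weak convergence delivers on a discrete space, since mass could a priori escape to infinity. What rules this out is precisely that the weak limit $P$ in Corollary~\ref{cor:cid}(b) is $\PP$-a.s. a probability measure with $\sum_{y\in\mathcal{Y}}p(y)=1$; given this, a direct Scheffé-type argument already forces $\sum_{y\in\mathcal{Y}}|p_n(y)-p(y)|\to 0$, and Theorem~\ref{th:berti_2013_1} packages exactly this reasoning within the \textsc{cid} framework.
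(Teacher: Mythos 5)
Your proposal is correct and follows essentially the same route as the paper's own proof: it reduces total-variation convergence to absolute continuity of $P$ with respect to the counting measure via Theorem~\ref{th:berti_2013_1}, and then verifies condition~\eqref{eqn:condition_berti} of Theorem~\ref{th:berti_2013_4} using $p_n(y)\le 1$ and finiteness of compact sets, exactly as the paper does (the paper checks the expectation form of the condition, you check the pathwise form—an immaterial difference). Your parenthetical observation that $P\ll\lambda$ is automatic for the counting measure is a nice additional remark, but otherwise the two arguments coincide.
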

\begin{proof}
From Theorem~\ref{th:berti_2013_1}, $P_n\rightarrow P$ in total variation $\PP$-a.s. if $P$ is absolutely continuous with respect to the counting measure $\lambda$.
From Theorem~\ref{th:berti_2013_4}, $P$ is absolutely continuous with respect to $\lambda$ if there exists a $d>1$ such that
\begin{equation*}
    \sup_n \E\left\{\int_Kp_n(y)^d\dd\lambda(y)\right\}<\infty
\end{equation*}
for every compact $K$ such that $\lambda(K)<\infty$, where the expectation is taken with respect to $Y_1,\ldots,Y_n$.

In our case, since $p_n(y)\in[0,1]$ for every $y\in\mathcal{Y}$ and every $n\ge0$, we have that
\begin{equation*}
    \sup_n\E\left\{\int_Kp_n(y)^d\dd\lambda(y)\right\} 
    = \sup_n\E\left\{\sum_{y\in\mathcal{Y}}p_n(y)^d\delta_K(y)\right\}
    \le |K|
    < \infty
\end{equation*}
by definition of $K$.
\end{proof}

Lemma~\ref{lemma_2} and~\ref{lemma_3} provide the convergence of the \textsc{mh} kernel and its squared expectation, respectively. These results are useful for the proofs of Proposition~\ref{prop:asy_1} and~\ref{prop:asy_2}. 

\begin{lemma}
\label{lemma_2}
For every $y\in\mathcal{Y}$ and every measurable set $A$, 
\begin{equation*}
    K_n(A\mid y) \longrightarrow K(A\mid y)
\end{equation*}
$\PP$-a.s. as $n\rightarrow\infty$, where $K(A\mid y) = \sum_{x\in A}k(x\mid y)$, with
\begin{align*}
    k(x\mid y) & = \gamma(x,y)k_*(x\mid y) + \one(x = y)\left[1-\sum_{z\in\mathcal{Y}}\gamma(z,y)k_*(z,y)\right],\\
    \gamma(x,y) & = \min\left\{1,\frac{p(x)k_*(y\mid x)}{p(y)k_*(x\mid y)}\right\}.
\end{align*}
\end{lemma}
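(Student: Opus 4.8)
The plan is to exploit that $K_n(\cdot \mid y)$ depends on the predictive mass function only through the acceptance ratios $\gamma_n(\cdot, y)$ of \eqref{eqn:acc_probability}, and that each such ratio is a continuous function of the two relevant values $p_{n-1}(x)$ and $p_{n-1}(y)$. Since Lemma~\ref{lemma_1} already gives $p_n \to p$ in total variation $\PP$-a.s., one has, on a single $\PP$-a.s.\ event, $p_n(z) \to p(z)$ simultaneously for every $z \in \mathcal{Y}$. Pointwise convergence of the predictive masses will therefore transfer to convergence of the acceptance ratios, and the only remaining task is to pass the limit through the (possibly infinite) sums defining $K_n(A \mid y)$.

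First I would fix a realization in the $\PP$-a.s.\ event on which $p_n(z)\to p(z)$ for all $z$, and fix $y$ with $p(y)>0$. Writing $K_n(A\mid y) = \sum_{x\in A}\gamma_n(x,y)k_*(x\mid y) + \one(y\in A)\big[1 - \sum_{z\in\mathcal{Y}}\gamma_n(z,y)k_*(z\mid y)\big]$, the claim reduces to the convergence of these two series. For each fixed $x$ I would establish $\gamma_n(x,y)\to\gamma(x,y)$: when $k_*(x\mid y)=0$ the corresponding summand is identically zero and can be dropped; when $k_*(x\mid y)>0$, the denominator $p_{n-1}(y)k_*(x\mid y)$ tends to $p(y)k_*(x\mid y)>0$, so the ratio inside the minimum converges and $t\mapsto\min\{1,t\}$ is continuous. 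To interchange limit and summation I would invoke dominated convergence for series, using $0\le\gamma_n(x,y)k_*(x\mid y)\le k_*(x\mid y)$ together with $\sum_x k_*(x\mid y)=1<\infty$ as an $n$-independent summable dominating function. This yields both $\sum_{x\in A}\gamma_n(x,y)k_*(x\mid y)\to\sum_{x\in A}\gamma(x,y)k_*(x\mid y)$ and $\sum_z \gamma_n(z,y)k_*(z\mid y)\to\sum_z \gamma(z,y)k_*(z\mid y)$, and recombining the two pieces gives $K_n(A\mid y)\to K(A\mid y)$.

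The main obstacle is the boundary behaviour when $p(y)=0$. There the ratio in \eqref{eqn:acc_probability} has a vanishing denominator, so convergence no longer follows from continuity alone: for $x$ with $p(x)>0$ one still obtains $\gamma_n(x,y)\to 1=\gamma(x,y)$, but for $x$ with $p(x)=p(y)=0$ the limit is an indeterminate $0/0$ form and need not exist. I would resolve this by noting that such $y$ are immaterial for the intended applications in Propositions~\ref{prop:asy_1}--\ref{prop:asy_2}, where $K(A\mid y)$ enters only through sums weighted by $p(y)$, so that the ambiguous set is $p$-null; alternatively one simply restricts the statement to the support of $p$. Away from this edge case the result is a routine continuity-and-dominated-convergence argument whose only substantive input is the total-variation convergence supplied by Lemma~\ref{lemma_1}.
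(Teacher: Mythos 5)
Your proof follows essentially the same route as the paper's: Lemma~\ref{lemma_1} gives pointwise convergence of $p_n$, continuity of $t\mapsto\min\{1,t\}$ gives convergence of the acceptance ratios $\gamma_n(x,y)\to\gamma(x,y)$, and a limit--sum interchange yields $K_n(A\mid y)\to K(A\mid y)$. If anything, you are more careful than the paper, whose proof swaps the limit and the (possibly infinite) sums without justification and silently ignores the degenerate case $p(y)=0$ --- where, if also $p(x)=0$, the ratio in \eqref{eqn:acc_probability} is an indeterminate $0/0$ form and $\gamma_n(x,y)$ need not converge; your dominated-convergence argument with dominating function $k_*(\cdot\mid y)$, and your observation that the exceptional $y$'s are harmless because in Lemma~\ref{lemma_3} and Propositions~\ref{prop:asy_1}--\ref{prop:asy_2} the quantity $K(A\mid y)$ only ever appears weighted by $p(y)$, are exactly the right way to close both gaps.
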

\begin{proof}
By Lemma~\ref{lemma_1}, $\lim_{n\rightarrow\infty}p_n(y) = p(y)$, $\PP$-a.s., for every $y\in\mathcal{Y}$.
Then, for every $x,y\in\mathcal{Y}$, 
\begin{equation*}
\lim_{n\rightarrow\infty}\gamma_n(x,y) 
= \lim_{n\rightarrow\infty}\min\left\{1,\frac{p_n(x)k_*(y\mid x)}{p_n(y)k_*(x\mid y)}\right\}
= \min\left\{1,\frac{p(x)k_*(y\mid x)}{p(y)k_*(x\mid y)}\right\}
=  \gamma(x,y)
\end{equation*}
$\PP$-almost surely.
It follows that
\begin{align*}
\lim_{n\rightarrow\infty}K_n(A\mid y) & = \lim_{n\rightarrow\infty}\sum_{x\in A}k_n(x\mid y)\\
& = \sum_{x\in A} \lim_{n\rightarrow\infty}\left\{\gamma_n(x,y)k_*(x\mid y) + \one(x=y)\left[1-\sum_{z\in\mathcal{Y}}\gamma_n(z,y)k_*(z\mid y)\right]\right\}\\
& = \sum_{x\in A} \left\{\gamma(x,y)k_*(x\mid y) + \one(x=y)\left[1-\sum_{z\in\mathcal{Y}}\gamma(z,y)k_*(z\mid y)\right]\right\}\\
& = K(A\mid y).
\end{align*}
\end{proof}

\begin{lemma}
\label{lemma_3}
For every $H\ge1$, all measurable sets $A_1,\ldots,A_H$, and every vector $\bm c = (c_1,\ldots,c_H)$ such that $||\bm c|| = 1$,
\begin{equation*}
    \E\Bigg\{\Bigg[\sum_{j=1}^Hc_jK_n(A_j\mid Y_{n+1})\Bigg]\Bigg[\sum_{r=1}^Hc_rK_n(A_r\mid Y_{n+1})\Bigg]\mid Y_{1:n}\Bigg\}
    \longrightarrow
    \sum_{j,r=1}^H c_j c_r\Bigg[\sum_{y\in\mathcal{Y}}K(A_j\mid y)K(A_r\mid y)p(y)\Bigg]
\end{equation*}
$\PP$-almost surely for $n\rightarrow\infty$.
\end{lemma}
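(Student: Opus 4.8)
The plan is to rewrite the conditional expectation as an explicit sum against the mass function $p_n$ and then pass to the limit using Lemmas~\ref{lemma_1} and~\ref{lemma_2} together with a dominated-convergence argument. First I would note that, conditionally on $Y_{1:n}$, the variable $Y_{n+1}$ has mass function $p_n$ while the kernels $K_n(A_j\mid\cdot)$ are fixed (they are determined by $p_{n-1}$, hence by $Y_{1:n}$). Therefore the left-hand side equals $\sum_{y\in\mathcal{Y}}f_n(y)\,p_n(y)$, where I set $f_n(y)=\big[\sum_{j}c_jK_n(A_j\mid y)\big]\big[\sum_{r}c_rK_n(A_r\mid y)\big]$. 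Introducing the limiting integrand $f(y)=\big[\sum_{j}c_jK(A_j\mid y)\big]\big[\sum_{r}c_rK(A_r\mid y)\big]$, the claimed right-hand side is exactly $\sum_{y\in\mathcal{Y}}f(y)\,p(y)$ after expanding the product, so it suffices to prove $\sum_y f_n(y)p_n(y)\to\sum_y f(y)p(y)$.

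The key uniform bound is that $K_n(A\mid y)\in[0,1]$, so Cauchy--Schwarz with $\|\bm c\|=1$ gives $\big|\sum_j c_jK_n(A_j\mid y)\big|\le\sqrt{H}$, whence $|f_n(y)|\le H$ uniformly in $n$ and $y$ (and likewise $|f(y)|\le H$). I would then split the error as
\begin{equation*}
\sum_y f_n(y)p_n(y)-\sum_y f(y)p(y)=\sum_y f_n(y)\big[p_n(y)-p(y)\big]+\sum_y\big[f_n(y)-f(y)\big]p(y).
\end{equation*}
For the first term, the uniform bound yields $\big|\sum_y f_n(y)[p_n(y)-p(y)]\big|\le H\sum_y|p_n(y)-p(y)|$, which vanishes $\PP$-a.s. by the total-variation convergence $P_n\to P$ established in Lemma~\ref{lemma_1}. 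For the second term, Lemma~\ref{lemma_2} gives $K_n(A_j\mid y)\to K(A_j\mid y)$ $\PP$-a.s. for each fixed $y$, hence $f_n(y)\to f(y)$ pointwise by continuity of finite sums and products; since $|f_n(y)-f(y)|\le 2H$ and $\sum_y p(y)=1$, dominated convergence with dominating function $2H\,p(y)$ forces this term to vanish as well.

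The main obstacle is precisely the interchange of the limit with the infinite sum over the countable support $\mathcal{Y}$: pointwise convergence of the summands $f_n(y)\to f(y)$ is not by itself sufficient, because the measures $p_n$ against which we integrate are themselves moving with $n$. This is exactly why I would separate off the $p_n-p$ contribution and control it through the \emph{total-variation} strengthening in Lemma~\ref{lemma_1} (weak convergence of $P_n$ would not suffice here), while the remaining term is handled by ordinary dominated convergence against the fixed summable measure $p$. Combining the two vanishing contributions yields the limit $\sum_y f(y)p(y)=\sum_{j,r=1}^H c_jc_r\sum_{y\in\mathcal{Y}}K(A_j\mid y)K(A_r\mid y)p(y)$, as stated.
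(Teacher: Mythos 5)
Your proof is correct and follows essentially the same route as the paper's: the same add-and-subtract decomposition into a $\sum_y f_n\,[p_n-p]$ term controlled by the total-variation convergence of Lemma~\ref{lemma_1} and a $\sum_y [f_n-f]\,p$ term handled by Lemma~\ref{lemma_2} plus dominated convergence. The only (cosmetic) difference is that the paper expands the quadratic form into the individual $(j,r)$ pairs before bounding, while you keep it intact and bound it by $H$ via Cauchy--Schwarz.
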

\begin{proof}
We can write
\begin{align*}
\E\Bigg\{\Bigg[\sum_{j=1}^Hc_jK_n(A_j\mid Y_{n+1})\Bigg] & \left[\sum_{r=1}^Hc_rK_n(A_r\mid Y_{n+1})\right]\mid Y_{1:n}\Bigg\} \\
& = \sum_{y\in\mathcal{Y}}\Bigg[\sum_{j=1}^Hc_jK_n(A_j\mid y)\Bigg]\left[\sum_{r=1}^Hc_rK_n(A_r\mid y)\right]p_n(y)\\
& = \sum_{j=1}^H\sum_{r=1}^H c_j c_r \Bigg[\sum_{y\in\mathcal{Y}}K_n(A_j\mid y)K_n(A_r\mid y)p_n(y)\Bigg].
\end{align*}
Then, we have that
\begin{align*}
\Bigg|&\sum_{y\in\mathcal{Y}}K_n(A_j\mid y)K_n(A_r\mid y)p_n(y) - \sum_{y\in\mathcal{Y}}K(A_j\mid y)K(A_r\mid y)p(y)\Bigg|\\
& = \Bigg|\sum_{y\in\mathcal{Y}}K_n(A_j\mid y)K_n(A_r\mid y)p_n(y) - \sum_{y\in\mathcal{Y}}K_n(A_j\mid y)K_n(A_r\mid y)p(y)\\
& \hspace{5.5cm}+\sum_{y\in\mathcal{Y}}K_n(A_j\mid y)K_n(A_r\mid y)p(y) - \sum_{y\in\mathcal{Y}}K(A_j\mid y)K(A_r\mid y)p(y)\Bigg|\\
& \le \Bigg|\sum_{y\in\mathcal{Y}}K_n(A_j\mid y)K_n(A_r\mid y)p_n(y) - \sum_{y\in\mathcal{Y}}K_n(A_j\mid y)K_n(A_r\mid y)p(y)\Bigg|\\
& \hspace{5.5cm}+\Bigg|\sum_{y\in\mathcal{Y}}K_n(A_j\mid y)K_n(A_r\mid y)p(y) - \sum_{y\in\mathcal{Y}}K(A_j\mid y)K(A_r\mid y)p(y)\Bigg|\\
& \le \sum_{y\in\mathcal{Y}}K_n(A_j\mid y)K_n(A_r\mid y)\big|p_n(y) - p(y)\big|+\sum_{y\in\mathcal{Y}}\bigg|K_n(A_j\mid y)K_n(A_r\mid y) - K(A_j\mid y)K(A_r\mid y)\bigg|\;p(y)\\
& \le \sum_{y\in\mathcal{Y}}\big|p_n(y) - p(y)\big|+\sum_{y\in\mathcal{Y}}\bigg|K_n(A_j\mid y)K_n(A_r\mid y) - K(A_j\mid y)K(A_r\mid y)\bigg|\;p(y)\\
& \longrightarrow 0 \qquad \PP\mathrm{-a.s}\;\;\mathrm{for}\;\;n\rightarrow\infty,
\end{align*}
since the first term converges to zero $\PP$-a.s. because $P_n\rightarrow P$ in $L^1$ by Lemma~\ref{lemma_1} and the second term converges to zero by the dominated convergence theorem because $K_n(A\mid y)\longrightarrow K(A\mid y)$ by Lemma~\ref{lemma_2}.
\end{proof}

Finally, the following Theorems provide the key results for the proof of Proposition~\ref{prop:asy_1}. 

\begin{theorem}[Theorem A1 in \citealp{crimaldi2009almost}]
\label{th:crimaldi_2009}
On $(\Omega, \mathcal{F},\PP)$, for each $n\ge1$, let $(\mathcal{F}_{n,j})_{j\ge0}$ be a filtration and $(M_{n,j})_{j\ge0}$ a real martingale with respect to $(\mathcal{F}_{n,j})_{j\ge0}$, with $M_{n,0} = 0$, which converges in $L^1$ to a random variable $M_{n,\infty}$.
Set
\begin{equation*}
Z_{n,j}:=M_{n,j}-M_{n,j-1}\quad\mathrm{for}\;j\ge1,\qquad
U_n := \sum_{j\ge1}Z_{n,j}^2,\qquad
Z_n^* := \sup_{j\ge1}|Z_{n,j}|.
\end{equation*}
Further, let $(c_n)_{n\ge1}$ be a sequence of strictly positive integers such that $c_nZ_n^*\rightarrow0$ $\PP$-a.s. and let $\mathcal{U}$ be a sub-sigma field which is asymptotic for the conditioning system $\mathcal{G}$ defined by $\mathcal{G}_n=\mathcal{F}_{n,c_n}$.

Assume that the sequence $(Z_n^*)_{n\ge1}$ is dominated in $L^1$ and that the sequence $(U_n)_{n\ge1}$ converges almost surely to a positive real random variable $U$ which is measurable with respect to $\mathcal{U}$.
Then, with respect to the conditioning system $\mathcal{G}$, the sequence $(M_{n,\infty})_{n\ge1}$ converges to a Gaussian kernel $\mathcal{N}(0,U)$ in the sense of almost sure conditional convergence.
\end{theorem}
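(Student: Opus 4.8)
The plan is to reduce the assertion of almost sure conditional convergence to a statement about conditional characteristic functions, which is the kernel-valued analogue of Lévy's continuity theorem. By the definition of almost sure conditional convergence with respect to the conditioning system $\mathcal{G}$, it suffices to establish that for every fixed $\lambda \in \R$,
\begin{equation*}
\E\left[e^{i\lambda M_{n,\infty}} \mid \mathcal{G}_n\right] \longrightarrow \exp\left(-\frac{\lambda^2}{2} U\right), \qquad \PP\text{-a.s.},
\end{equation*}
since the right-hand side is exactly the characteristic function of the Gaussian kernel $\mathcal{N}(0,U)$ evaluated at the $\mathcal{U}$-measurable random variance $U$. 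Because the exceptional null set in such a convergence depends on $\lambda$, I would run the argument over a countable dense set of values of $\lambda$ and then invoke continuity of the characteristic functions to recover the full kernel convergence on a single almost sure event.

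First I would exploit the martingale structure through the complex exponential martingale. For fixed $\lambda$ and each $n$, set the predictable normalizer $\phi_{n,j} = \E[e^{i\lambda Z_{n,j}} \mid \mathcal{F}_{n,j-1}]$ and define $W_{n,j} = e^{i\lambda M_{n,j}} \prod_{k=1}^{j} \phi_{n,k}^{-1}$, which is a complex martingale with $W_{n,0}=1$. Letting $j\to\infty$ yields $e^{i\lambda M_{n,\infty}} = W_{n,\infty}\prod_{k\ge1}\phi_{n,k}$, so the problem separates into controlling the deterministic-looking product $\prod_{k\ge1}\phi_{n,k}$ and showing that the remainder martingale is conditionally negligible. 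A second-order Taylor expansion of each factor gives
\begin{equation*}
\phi_{n,j} = 1 - \frac{\lambda^2}{2}\E[Z_{n,j}^2 \mid \mathcal{F}_{n,j-1}] + R_{n,j},
\end{equation*}
with the remainder bounded by a constant times $\E[\min\{|Z_{n,j}|^3, Z_{n,j}^2\}\mid \mathcal{F}_{n,j-1}]$.

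The three hypotheses then enter as follows. The negligibility condition $c_n Z_n^*\to 0$ together with the $L^1$-domination of $(Z_n^*)_{n\ge1}$ forces the aggregate remainder $\sum_j R_{n,j}$ to vanish, so that $\log\prod_k \phi_{n,k}$ is well approximated by $-\tfrac{\lambda^2}{2}\sum_j \E[Z_{n,j}^2\mid\mathcal{F}_{n,j-1}]$; the almost sure convergence $U_n\to U$, combined with the uniform smallness of the increments, lets me replace the observed quadratic variation $\sum_j Z_{n,j}^2$ by its predictable counterpart, driving the product to $\exp(-\lambda^2 U/2)$. Finally, the hypothesis that $U$ is measurable with respect to the asymptotic $\sigma$-field $\mathcal{U}$ supplies the genuinely \emph{conditional} conclusion: since $\mathcal{U}$ is asymptotic for $\mathcal{G}$, one has $\E[g\mid\mathcal{G}_n]\to g$ a.s. for bounded $\mathcal{U}$-measurable $g$, which both extracts the random limit variance from the conditional expectation and converts in-probability control of $W_{n,\infty}$ into almost sure conditional convergence.

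The hard part will be upgrading the classical in-probability martingale central limit estimates to almost sure conditional convergence. Two points demand care. First, the remainder $W_{n,\infty}-1$ must be shown to vanish in the conditional sense, which cannot be read off from ordinary $L^2$ bounds and instead needs the domination hypotheses to provide the uniform integrability that turns convergence of conditional second moments into almost sure conditional convergence. Second, the interplay between $\mathcal{U}$ and the conditioning system $\mathcal{G}$ is delicate: one must verify that conditioning on $\mathcal{G}_n$ asymptotically freezes the random variance $U$ while leaving the fluctuations Gaussian, and it is precisely here that the use of an asymptotic $\sigma$-field, rather than a fixed one, becomes indispensable.
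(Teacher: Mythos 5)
You should first note that the paper contains no proof of this statement: it is quoted verbatim (as Theorem A1 of \citet{crimaldi2009almost}) in the appendix of preliminary results, precisely so that it can be invoked as a black box in the proof of Proposition~\ref{prop:asy_1}. There is therefore no internal proof to compare against; the relevant benchmark is Crimaldi's original argument, which---like your sketch---reduces almost sure conditional convergence to the a.s. convergence of the conditional characteristic functions $\E\{e^{i\lambda M_{n,\infty}}\mid\mathcal{G}_n\}\rightarrow\exp(-\lambda^2U/2)$ over a countable dense set of $\lambda$, and then controls these by exponential-martingale and Taylor-expansion estimates. In that sense your overall architecture is sound and close to the actual proof.

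The genuine gap is your treatment of the hypothesis $c_nZ_n^*\rightarrow0$ and, with it, of the conditioning system itself. You assign that condition to the control of the Taylor remainders, but those are handled by $Z_n^*\rightarrow0$ together with the $L^1$-domination (a Lindeberg-type argument). The factor $c_n$ has a different and essential purpose: since $\mathcal{G}_n=\mathcal{F}_{n,c_n}$, the first $c_n$ increments are $\mathcal{G}_n$-measurable, so the conditional characteristic function factors as
\begin{equation*}
\E\big\{e^{i\lambda M_{n,\infty}}\mid\mathcal{G}_n\big\}
= e^{i\lambda M_{n,c_n}}\,\E\big\{e^{i\lambda (M_{n,\infty}-M_{n,c_n})}\mid\mathcal{G}_n\big\},
\end{equation*}
and one needs $|M_{n,c_n}|\le c_nZ_n^*\rightarrow0$ a.s. precisely to discard the first factor; the martingale CLT machinery is then applied to the tail $(M_{n,c_n+j}-M_{n,c_n})_{j\ge0}$, which is a martingale for the filtration $(\mathcal{F}_{n,c_n+j})_{j\ge0}$ started at $\mathcal{G}_n$. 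Your sketch never performs this head/tail split, and without it the computation of the conditional expectation given $\mathcal{G}_n$ cannot get started, because your exponential martingale $W_{n,j}$ begins at $j=0$ rather than at $j=c_n$. Two smaller points you acknowledge only implicitly also need real work in a complete proof: the normalizers $\phi_{n,k}$ can vanish, so the product $\prod_k\phi_{n,k}^{-1}$ must be defined after a truncation of the increments; and the hypothesis gives convergence of the \emph{observed} quadratic variation $U_n$, so passing between $U_n$ and the predictable version $\sum_j\E\{Z_{n,j}^2\mid\mathcal{F}_{n,j-1}\}$ requires a maximal-inequality argument that again leans on the domination of $(Z_n^*)_{n\ge1}$.
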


\begin{theorem}[Lemma 4.1 in \citealp{crimaldi2016fluctuation}]
\label{th:crimaldi_2016}
Let $\mathcal{G}=(\mathcal{G}_n)_{n\ge1}$ be a filtration and $(Z_n)_{n\ge1}$ be a $\mathcal{G}$-adapted sequence of real random variables such that $\E\{Z_n\mid\mathcal{G}_{n-1}\}\rightarrow Z$, $\PP$-a.s. for some random variable $Z$.
Moreover, let $(a_n)_{n\ge1}$ and $(b_n)_{n\ge1}$ be two positive sequences of strictly positive real numbers such that
\begin{equation*}
    b_n \uparrow +\infty, \qquad
    \sum_{n\ge1}(a_n^2b_n^2)^{-1}\E\{Z_n\} < \infty.
\end{equation*}
Then we have:
\begin{itemize}
\item[i)] if $b_n^{-1}\sum_{m=1}^n a_m^{-1}\rightarrow \zeta$ for some constant $\zeta$, then
\begin{equation*}
    \frac{1}{b_n}\sum_{m=1}^n \frac{Z_m}{a_m}\overset{a.s.}{\rightarrow}\zeta Z;
\end{equation*}
\item[ii)] if $b_n\sum_{m\ge n}(a_mb_m^2)^{-1}\rightarrow \zeta$ for some constant $\zeta$, then
\begin{equation*}
    b_n\sum_{m>n} \frac{Z_m}{a_mb_m^2}\overset{a.s.}{\rightarrow}\zeta Z;
\end{equation*}
\end{itemize}
\end{theorem}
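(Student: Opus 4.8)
The plan is to split each $Z_m$ into its predictable projection and an orthogonal martingale increment, and to treat the two pieces separately: the predictable sums by a deterministic Toeplitz (Cesàro-type) argument applied pathwise, and the martingale sums by $L^2$-martingale convergence combined with a Kronecker-type summation lemma. Write $M_m := \E\{Z_m\mid\mathcal{G}_{m-1}\}$ and $D_m := Z_m - M_m$, so that $(D_m)_{m\ge1}$ is a $\mathcal{G}$-martingale difference sequence and, by hypothesis, $M_m \to Z$ $\PP$-a.s. Both parts i) and ii) then follow by treating the $M_m$-sum and the $D_m$-sum and adding the limits.

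For part i), consider first the predictable sum $b_n^{-1}\sum_{m=1}^n a_m^{-1}M_m$. The nonnegative weights $w_{n,m} := (a_m b_n)^{-1}$ for $m \le n$ satisfy $w_{n,m}\to 0$ for each fixed $m$ (since $b_n\uparrow\infty$) and $\sum_{m=1}^n w_{n,m} = b_n^{-1}\sum_{m=1}^n a_m^{-1} \to \zeta$, so the Toeplitz lemma, applied $\omega$ by $\omega$ to the a.s.\ convergent sequence $M_m \to Z$, gives $b_n^{-1}\sum_{m=1}^n a_m^{-1}M_m\to \zeta Z$ $\PP$-a.s. For the martingale part, I would introduce the $\mathcal{G}$-martingale $N_n := \sum_{m=1}^n D_m(a_m b_m)^{-1}$ and bound $\E\{N_n^2\} = \sum_{m=1}^n (a_m b_m)^{-2}\E\{D_m^2\} \le \sum_{m=1}^n (a_m b_m)^{-2}\E\{Z_m^2\}$, using orthogonality of increments and $\E\{D_m^2\}\le \E\{Z_m^2\}$. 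Since here $Z_m$ is nonnegative and bounded, $\E\{Z_m^2\}\lesssim \E\{Z_m\}$, so the stated summability $\sum_m (a_m^2 b_m^2)^{-1}\E\{Z_m\}<\infty$ makes $N_n$ an $L^2$-bounded martingale, which therefore converges $\PP$-a.s. Kronecker's lemma (with $b_n\uparrow\infty$) then yields $b_n^{-1}\sum_{m=1}^n a_m^{-1}D_m = b_n^{-1}\sum_{m=1}^n b_m\cdot D_m(a_m b_m)^{-1}\to 0$ $\PP$-a.s., and adding the two limits proves i).

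For part ii), the predictable tail $b_n\sum_{m>n}M_m(a_m b_m^2)^{-1}$ is again handled by Toeplitz: the weights $v_{n,m} := b_n (a_m b_m^2)^{-1}$ for $m>n$ are nonnegative, vanish for each fixed $m$, and satisfy $\sum_{m>n} v_{n,m} = b_n\sum_{m\ge n}(a_m b_m^2)^{-1}\to \zeta$, so this term tends to $\zeta Z$. The martingale tail $b_n\sum_{m>n}D_m(a_m b_m^2)^{-1}$ is the delicate piece. Here I would reuse the $L^2$-bounded martingale $N_n = \sum_{m\le n}D_m(a_m b_m)^{-1}$, which converges a.s.\ to some $N_\infty$, and apply summation by parts to $\sum_{m>n}b_m^{-1}(N_m - N_{m-1})$. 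Using the telescoping identity $\sum_{m>n}(b_m^{-1}-b_{m+1}^{-1}) = b_{n+1}^{-1}$ together with $b_n/b_{n+1}\to 1$, the boundary term $-(b_n/b_{n+1})N_n$ and the Abel sum $b_n\sum_{m>n}(b_m^{-1}-b_{m+1}^{-1})N_m$ both converge to $N_\infty$ and cancel, giving $b_n\sum_{m>n}D_m(a_m b_m^2)^{-1}\to 0$ $\PP$-a.s.

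I expect the tail martingale term in part ii) to be the main obstacle: unlike the Cesàro sums in part i), it is not covered by the classical Kronecker lemma and requires the summation-by-parts cancellation above, which in turn needs the regularity condition $b_n/b_{n+1}\to 1$ on the normalizing sequence (satisfied, e.g., by $b_n\sim n^{c}$). A secondary technical point is the passage from the stated first-moment summability $\sum_m(a_m^2 b_m^2)^{-1}\E\{Z_m\}<\infty$ to the second-moment control $\sum_m (a_m^2b_m^2)^{-1}\E\{D_m^2\}<\infty$ that underlies the $L^2$-boundedness of $N_n$; this uses that the $Z_m$ are nonnegative and uniformly bounded in the relevant application, where they arise from kernels and probabilities taking values in $[0,1]$.
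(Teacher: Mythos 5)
The paper does not actually prove this statement: it is imported verbatim as a preliminary (Lemma 4.1 of \citealp{crimaldi2016fluctuation}) and used as a black box in the proof of Proposition~\ref{prop:asy_1}. Judged on its own merits, your argument---splitting $Z_m$ into the predictable part $M_m=\E\{Z_m\mid\mathcal{G}_{m-1}\}$ and the martingale difference $D_m$, handling the $M_m$-sums by a pathwise Toeplitz argument and the $D_m$-sums through the $L^2$-bounded martingale $N_n=\sum_{m\le n}D_m(a_mb_m)^{-1}$ followed by Kronecker's lemma (part i) and Abel summation (part ii)---is the standard proof of this lemma, and is essentially the argument of the cited source. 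So the route is sound; two points deserve comment.

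First, the moment condition. You correctly sensed that the displayed hypothesis $\sum_n(a_n^2b_n^2)^{-1}\E\{Z_n\}<\infty$ cannot control the martingale part, and patched this by assuming the $Z_m$ nonnegative and uniformly bounded. That patch is not cosmetic: as transcribed, the statement is false. Take independent $Z_n$ with $\PP(Z_n=n^2)=n^{-2}=1-\PP(Z_n=0)$, their natural filtration, $a_m=1$, $b_n=n$; then $\E\{Z_n\mid\mathcal{G}_{n-1}\}=1\to 1=Z$, $\sum_n n^{-2}\E\{Z_n\}<\infty$, and $\zeta=1$, yet Borel--Cantelli gives $n^{-1}\sum_{m\le n}Z_m\to 0\ne\zeta Z$ a.s. The original lemma has $\E\{Z_n^2\}$ in the summability condition (the paper's restatement drops the square; tellingly, in the proof of Proposition~\ref{prop:asy_1} the paper verifies summability of $\E\{T_k^4\}$, i.e.\ the second-moment condition for $Z_k=T_k^2$), and your proof goes through verbatim under that corrected hypothesis; the boundedness you invoke is exactly what holds in the application, where $Z_s=T_s^2\le(\sum_h|c_h|)^2$. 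Second, the regularity condition $b_n/b_{n+1}\to1$ you add for part ii) is avoidable. Instead of letting the boundary term and the Abel sum converge separately to $N_\infty$, group them: the Abel identity gives $b_n\sum_{m>n}b_m^{-1}(N_m-N_{m-1})=(b_n/b_{n+1})\,b_{n+1}\sum_{m>n}(b_m^{-1}-b_{m+1}^{-1})(N_m-N_n)$, and since the weights $b_{n+1}(b_m^{-1}-b_{m+1}^{-1})$, $m>n$, are nonnegative and sum to one, the modulus is at most $\sup_{m>n}|N_m-N_n|\to0$ a.s.\ by the Cauchy property, while $b_n/b_{n+1}\le1$; no regularity of $(b_n)$ is needed. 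The only place regularity could seem to matter is reconciling the hypothesis (sum over $m\ge n$) with the conclusion (sum over $m>n$) in the paper's transcription, and that gap closes for free: the (second-moment) summability condition yields $\sum_m(a_mb_m)^{-2}Z_m^2<\infty$ a.s., hence the single discrepant term $b_nZ_n(a_nb_n^2)^{-1}=Z_n(a_nb_n)^{-1}\to0$ a.s.
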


\subsection{Proof of Theorem~\ref{th:cid}}
To prove that the \textsc{mad} sequence $(Y_n)_{n\ge1}$ is \textsc{cid}, it is sufficient to show that, for all $n\ge0$ and all measurable sets $A$,
\begin{equation*}
    \mathds{P}\{Y_{n+2}\in A\mid y_{1:n}\} = \mathds{P}\{Y_{n+1}\in A\mid y_{1:n}\},
\end{equation*}
which is equivalent to write 
\begin{equation}
    \mathds{P}\{Y_{n+2}\in A\mid y_{1:n}\} \equiv \mathds{E}\{P_{n+1}(A)\mid y_{1:n}\} = P_n(A) \equiv \mathds{P}\{Y_{n+1}\in A\mid y_{1:n}\}.
    \label{eqn:proof_cid}
\end{equation}
By rewriting
\begin{align*}
    \mathds{E}\{P_{n+1}(A)\mid y_{1:n}\} & 
    = \mathds{E}\big\{(1-w_{n+1})P_n(A) + w_{n+1}K_{n}(A\mid Y_{n+1})\mid y_{1:n}\big\}\\
    & = (1-w_{n+1})P_n(A) + w_{n+1}\mathds{E}\{K_n(A\mid Y_{n+1})\mid y_{1:n}\},
\end{align*}
we see that \eqref{eqn:proof_cid} is true if $\mathds{E}\{K_n(A\mid Y_{n+1})\mid y_{1:n}\} = P_n(A)$.
This is true because
\begin{equation*}
    \mathds{E}\{K_n(A\mid Y_{n+1})\mid y_{1:n}\}
    = \sum_{t\in A}\sum_{z\in\mathcal{Y}} k_n(t\mid z)p_n(z)
    \overset{(*)}{=} \sum_{t\in A}\sum_{z\in\mathcal{Y}} k_n(z\mid t)p_n(t)
    = \sum_{t\in A} p_n(t)
    = P_n(A),
\end{equation*}
for every $A$ and $n\ge0$, where equivalence $(*)$ is a consequence of the detailed balance condition of \textsc{mh} kernels, that is
\begin{equation*}
    k_n(z\mid y) p_n(y) = k_n(y\mid z)p_n(z),
\end{equation*}
hence
\begin{equation*}
    \sum_{z\in\mathcal{Y}} k_n(y\mid z)p_n(z) = \sum_{z\in\mathcal{Y}} k_n(z\mid y)p_n(y) = p_n(y).
\end{equation*}

\subsection{Proof of Corollary~\ref{cor:cid}}
\begin{itemize}
\item[$(a)$] This is a consequence of \citet[][Theorem 2.5]{berti2004limit}.
\item[$(b)$] This is a consequence of \citet[][Lemma 2.4, Theorem 2.2]{berti2004limit}.
\item[$(c)$] Define $\mathcal{P}$ as the space of probability functions on $\mathcal{Y}$.
Let $\Pi(\cdot)$ and $\Pi(\cdot\mid y_{1:n})$ denote the prior and posterior law of $P$, respectively.
Then,
\begin{align*}
    \E(\theta) = \E\{P(f)\} = \E\bigg\{\sum_{y\in\mathcal{Y}}f(y)p(y)\bigg\} 
    = \sum_{y\in\mathcal{Y}}f(y)\int_{\mathcal{P}}p(y)\Pi(\dd P)
    = \sum_{y\in\mathcal{Y}}f(y)p_0(y)
    = P_0(f) = \theta_0
\end{align*}
and
\begin{align*}
    \E(\theta\mid y_{1:n}) = \E\{P(f)\mid y_{1:n}\} & = \E\bigg\{\sum_{y\in\mathcal{Y}}f(y)p(y)\mid y_{1:n}\bigg\}\\
    & = \sum_{y\in\mathcal{Y}}f(y)\int_{\mathcal{P}}p(y)\Pi(\dd P\mid y_{1:n})\\
    & = \sum_{y\in\mathcal{Y}}f(y)p_n(y)\\
    & = P_n(f) = \theta_n,
\end{align*}
where $\int_{\mathcal{P}}p(y)\Pi(\dd P) = p_0(y)$ and $\int_{\mathcal{P}}p(y)\Pi(\dd P\mid y_{1:n}) = p_n(y)$ follows because the sequence $(Y_n)_{n\ge1}$ is \textsc{cid}.
\end{itemize}

\subsection{Proof of Proposition~\ref{prop:asy_1}}
The proof specializes the proof of \citet[Theorem~7]{fortini_petrone_2020}, which is based on Theorem~\ref{th:crimaldi_2009} and Theorem~\ref{th:crimaldi_2016}.
By Cramer-Wold device, it is sufficient to show that, for every vector $\bm c=(c_1,\ldots,c_H$) such that $||\bm c||=1$ and every $t\in\R$,
\begin{equation}
\PP\left\{\sqrt{r_n}\sum_{h=1}^Hc_h\left[P(A_h)-P_n(A_h)\right]\le t\mid Y_{1:n}\right\}\longrightarrow\Phi\left\{(\bm c^\top\bm\Sigma\bm c)^{-1/2}t\right\}
\label{eqn:prop_1_goal}
\end{equation}
$\PP$-almost surely, where $\Phi(\cdot)$ denotes the standard normal cumulative distribution function.

The sequence $(\sum_{h=1}^H c_h P_n(A_h))_{n\ge1}$ is a bounded martingale coverging to $\sum_{h=1}^H c_h P(A_h)$ $\PP$-a.s. in $L^1$, as consequence of Lemma~\ref{lemma_1}.
Let
\begin{equation*}
    M_{n,j} = \sqrt{r_n}\left[\sum_{h=1}^H c_hP_n(A_h) - \sum_{h=1}^H c_h P_{n-j+1}(A_h)\right],
    \qquad \mathcal{F}_{n,j} = Y_{1:n+j-1}
\end{equation*}
for $j\ge1$ and $M_{n,0} = 0$, $\mathcal{F}_{n,0} = Y_{1:n}$.
$(M_{n,j})_{j\ge0}$ is a zero-mean martingale with respect to the filtration $(\mathcal{F}_{n,j})_{j\ge1}$ under $\PP$.
Let
\begin{equation*}
Z_{n,j} := M_{n,j} - M_{n,j-1} = \sqrt{r_n} w_{n+j-1} T_{n+j-1},\quad
U_n := \sum_{j\ge1} Z_{n,j}^2 = r_n \sum_{j\ge1}  w_{n+j-1}^2 T_{n+j-1}^2,\quad
Z_n^* := \sup_{j\ge1}|Z_{n,j}|, 
\end{equation*}
with
\begin{equation*}
T_{n+j-1} = \sum_{h=1}^H c_h \left[P_{n+j-2}(A_h) - K_{n+j-2}(A_h\mid Y_{n+j-1})\right].
\end{equation*}
Equation~\eqref{eqn:prop_1_goal} follows from Theorem~\ref{th:crimaldi_2009} if we can show that $(Z_n^*)_{n\ge1}$ is dominated in $L^1$ and $(U_n)_{n\ge1}$ converges $\PP$-a.s. to $\sum_{h,r =1}^H c_hc_r\Sigma_{hr}$.
$(Z_n^*)_{n\ge1}$ is dominated in $L^1$ because
\begin{equation*}
    \sup_{j\ge1}|Z_{n,j}| 
    = \sqrt{r_n}\sup_{j\ge1}|w_{n+j-1}||T_{n+j-1}| 
    \le \sqrt{r_n}\sup_{j\ge1}|w_{n+j-1}| \longrightarrow 0
\end{equation*}
by assumption.
To prove that $U_n$ converges $\PP$-a.s. to $\sum_{h,r =1}^H c_hc_r\Sigma_{hr}$ we employ part $ii)$ of Theorem~\ref{th:crimaldi_2016}, with
\begin{equation*}
    b_{m+1} = r_m, \qquad a_m = (b_m^2 w_m^2)^{-1}
\end{equation*}
for $m\ge1$ and $b_1=r_1$.
Then, we can write
\begin{equation*}
    U_n = b_{n+1} \sum_{s \ge n+1}(a_sb_s^2)^{-1}T_s^2.
\end{equation*}
By Lemma~\ref{lemma_3}, we have that, $\PP$-a.s.,
\begin{align*}
    \E\big\{T_s^2\mid Y_{1:s-1}\big\} = & \E\Bigg\{ \Bigg[\sum_{h=1}^H c_h P_{s-1}(A_h) - \sum_{h=1}^H c_h K_{s-1}(A_h\mid Y_s)\Bigg]\\
    & \hspace{5cm} \times \Bigg[\sum_{r=1}^H c_r P_{s-1}(A_r) - \sum_{r=1}^H c_r K_{s-1}(A_r\mid Y_s)\Bigg] \mid Y_{1:s-1} \Bigg\}\\
    = & \E\Bigg\{\sum_{h=1}^H c_h P_{s-1}(A_h) \sum_{r=1}^H c_r P_{s-1}(A_r)
    +\sum_{h=1}^H c_h K_{s-1}(A_h\mid Y_s) \sum_{r=1}^H c_r K_{s-1}(A_r\mid Y_s)\\
    & \hspace{0.5cm} -\sum_{h=1}^H c_h P_{s-1}(A_h) \sum_{r=1}^H c_r K_{s-1}(A_r\mid Y_s)
    - \sum_{h=1}^H c_h K_{s-1}(A_h\mid Y_s) \sum_{r=1}^H c_r P_{s-1}(A_r) \Big| Y_{1:s-1}\Bigg\}\\
    = & \E\Bigg\{\sum_{h=1}^H c_h K_{s-1}(A_h\mid Y_s)\sum_{r=1}^H K_{s-1}(A_r\mid Y_s)\mid Y_{1:s-1}\Bigg\} - \sum_{h=1}^H c_h P_{s-1}(A_h) \sum_{r=1}^H c_r P_{s-1}(A_r)\\
    = & \sum_{h,r=1}^H c_h c_r \left[\sum_{y\in\mathcal{Y}}K_{s-1}(A_h\mid y)K_{s-1}(A_r\mid y)p_{s-1}(y) - P_{s-1}(A_h)P_{s-1}(A_r)\right]\\
    \longrightarrow & \sum_{h,r=1}^H c_h c_r \left[\sum_{y\in\mathcal{Y}}K(A_h\mid y)K(A_r\mid y)p(y) - P(A_h)P(A_r)\right]\\
    = & \sum_{h,r=1}^H c_h c_r \Sigma_{hr}
\end{align*}
for $s\rightarrow\infty$.
Moreover, $\sum_{k\ge1}(a_k^2w_k^2)^{-1}\E\{T_k^4\}<\infty$ as $\sum_{k\ge1}(a_k^2 w_k^2)^{-1} = \sum_{k\ge1}w_k^4r_k^2 < \infty$
by assumption and $|T_s|\le|\sum_{h=1}^Hc_h| < \infty$.
Since, by assumption, $b_{n+1}\sum_{k\ge n+1}(a_k b_k^2)^{-1} = r_n\sum_{k>n}w_k^2\longrightarrow1$, then $U_n \rightarrow \sum_{h,r=1}^H c_r c_h \Sigma_{hr}$ $\PP$-a.s.
Therefore, for every $t\in \R$, $\PP$-a.s.,
\begin{equation*}
    \PP\Bigg\{\sqrt{r_n}\sum_{h=1}^H c_h [P(A_h) - P_n(A_h)]\le t \mid Y_{1:n}\Bigg\} = \PP\Big\{M_{n,\infty}\le t \mid Y_{1:n}\Big\} \longrightarrow \Phi\{(\bm c^\top\bm\Sigma\bm c)^{-1/2}t\}.
\end{equation*}

\subsection{Proof of Proposition~\ref{prop:asy_2}}

The proof specializes the proof of \citet[Proposition 2.6]{fortini2024exchangeability} for \textsc{mad} sequences.
By the properties of stable convergence and Proposition~\ref{prop:asy_1}, it is sufficient to show that $\bm\Sigma_n$ is positive definite for $n$ large and converges to $\bm\Sigma$ $\PP$-a.s. with respect to the operator norm.

Since $\bm\Sigma$ is positive definite by assumption, $\bm\Sigma_n$ is positive definite for $n$ large if $\bm\Sigma_n$ converges to $\bm\Sigma$ in the operator norm, which is equivalent to show that
\begin{equation}
    \bm c^\top\bm\Sigma_n\bm c \longrightarrow \bm c^\top\bm\Sigma\bm c
\label{eqn:prop_4_goal}
\end{equation}
$\PP$-a.s. for every $\bm c = (c_1,\ldots,c_H)$ such that $||\bm c|| = 1$.

We can write
\begin{equation*}
    \bm c^\top\bm\Sigma_n\bm c = \sum_{j,r=1}^H c_j c_r \left[\sum_{y\in\mathcal{Y}}K_n(A_j\mid y)K_n(A_r\mid y)p_n(y) - P_n(A_j)P_n(A_r)\right].
\end{equation*}
Then, \eqref{eqn:prop_4_goal} follows since
\begin{equation*}
    \sum_{y\in\mathcal{Y}} K_n(A_j\mid y)K_n(A_r\mid y) p_n(y)
    \longrightarrow \sum_{y\in\mathcal{Y}} K(A_j\mid y)K(A_r\mid y) p(y)
\end{equation*}
by Lemma~\ref{lemma_3} and
\begin{equation*}
    P_n(A_j)P_n(A_r)\longrightarrow P(A_j)P(A_r)
\end{equation*}
by Corollary~\ref{cor:cid}, for every $j,r=1,\ldots,H$.

The final result is obtained by applying the analogue of Slutsky Theorem and Cramer-Wold Theorem for stable convergence (see \citealp[Propositions A2 and A3]{fong2024asymptotics}; and \citealp[Theorem 3.18, Corollaries 3.19 and 6.3]{hausler2015stable}).

\section{Comparison with the discrete copula update}
\label{appendix_copula}

Although our updating using \textsc{mh} kernels shares similarities with the copula update proposed by \cite{fong_holmes_2023}, the two methods also have important differences. Their approach was designed for continuous data, but they proposed a discrete extension
\citep[Section E.1.3]{fong_holmes_2023}.
For $y \in \mathcal{Y}\subseteq\N$, the discrete copula update is
\begin{equation*}
    d_\rho(y,y_n) = 1 - \rho + \rho\frac{\one(y = y_n)}{p_{n-1}(y)},
\end{equation*}
with $\rho\in(0,1)$. Then, for every $n\ge1$ and every $y\in\mathcal{Y}$,
\begin{align*}
    p_{n}(y) & = \{1-w_n+w_n d_\rho(y,y_n)\}p_{n-1}(y)\\
    & = (1-w_n)p_{n-1}(y) + w_n\{(1-\rho)p_{n-1}(y) + \rho\one(y = y_n)\}.
\end{align*}
The induced kernel is a mixture between the old predictive distribution $p_{n-1}(y)$ and a point mass at $y_n$, with the \textsc{dp} update recovered as $\rho\rightarrow1$.
Therefore, the contribution of the new observation to the update is limited to the assigning of mass $w_n\rho$ to $y_n$.
This structure limits the flexibility of the kernel, as it does not allow the borrowing of information between nearby locations. This is clear from Figure~\ref{fig:ill_copula}, which shows the predictive distribution obtained using the discrete copula update for the illustrative example described in Section~\ref{sec:illustration}. 
We employ weights $w_n=(2-n^{-1})(n+1)^{-1}$ both for \textsc{mad} sequence and copula updates.
We denote as \textsc{cop-1} and \textsc{cop-2} the predictive distributions obtained when $\rho$ is selected by maximizing the corresponding prequential log-likelihood and $\rho=0.3$, respectively. The former approach produces a small value of $\hat{\rho}=0.034$, leading to poor learning from the data, too much dependence on the uniform base measure, and small posterior variability. For $\rho=0.3$, the influence of the base measure is reduced, but the predictive distribution lacks smoothing and posterior variability is large.
Conversely, since the \textsc{mh} kernel allows the borrowing of information between nearby values in the support, \textsc{mad} sequences perform better.

\begin{figure}[t]
\centering
\includegraphics[width=0.95\textwidth]{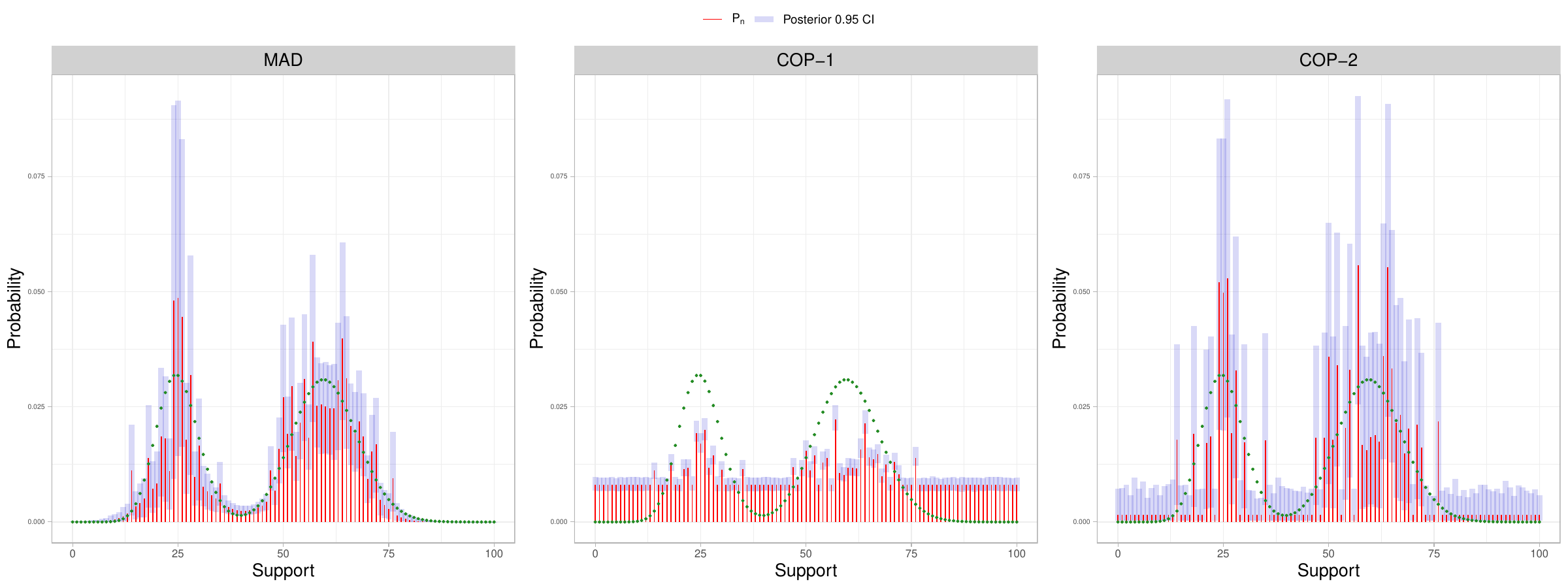}
\caption{\small Predictive distributions and 0.95 credible intervals obtained with \textsc{mad} sequences and two versions of copula updates.}
\label{fig:ill_copula}
\end{figure}

\begin{figure}[t]
\centering
\includegraphics[width=0.9\textwidth]{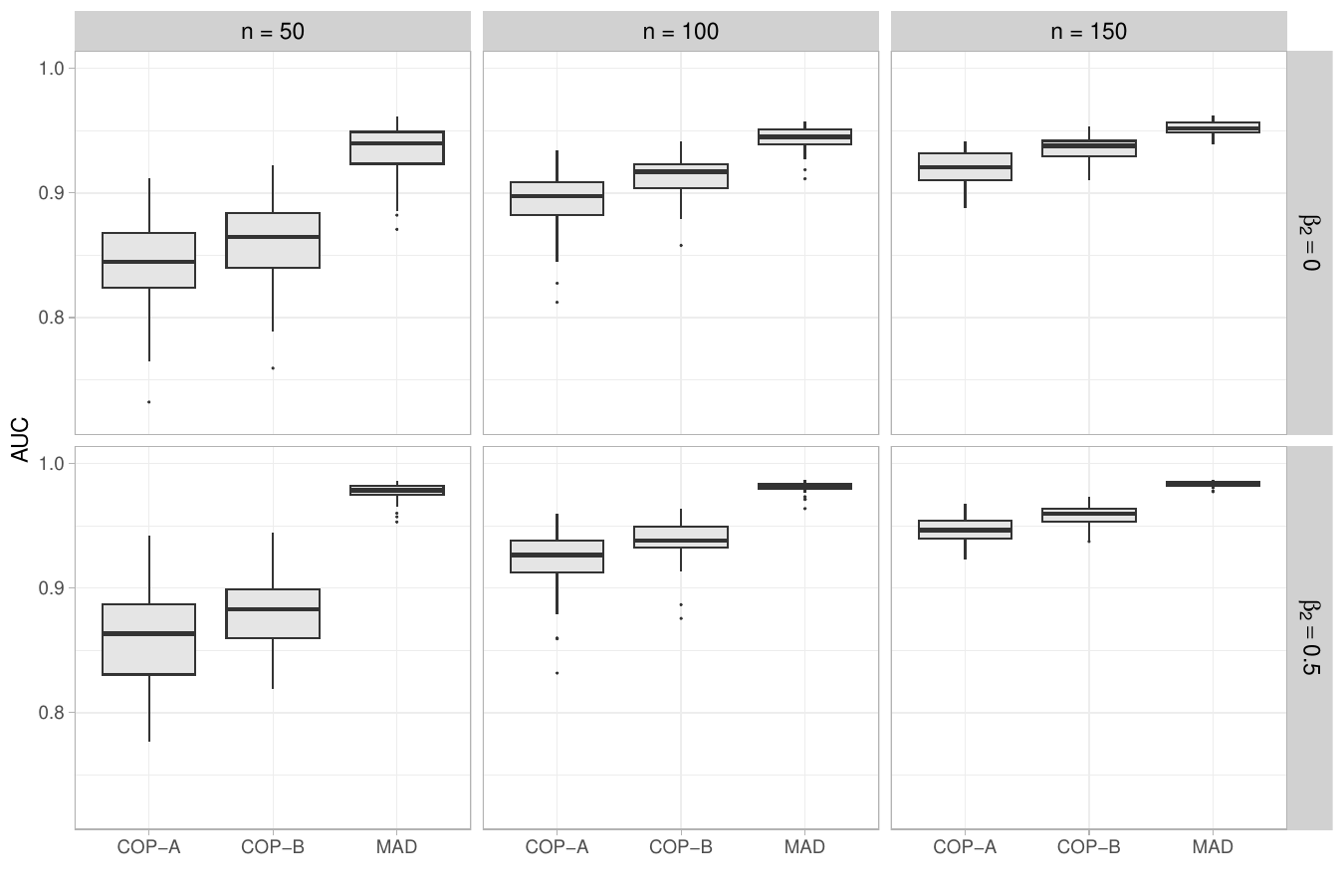}
\caption{\small Results of simulation study comparing \textsc{mad} sequences and two different copula approaches under different data orderings. Box plots show out-of-sample \textsc{auc} obtained on $10^5$ new data points for each scenario.}
\label{fig:ill_copula_2}
\end{figure}

\begin{table}[t]
\centering
\caption{\small Proportion of simulated datasets in which the \textsc{auc} obtained with \textsc{cop-a} is greater than the one obtained with \textsc{cop-b}.}
\label{tab:ill_copula_2}
\vspace{0.2cm}
\footnotesize
\begin{tabular}{l ccc}
& $n=50$ & $n=100$ & $n=150$\\
$\beta_2=0$ & $0.00$ & $0.00$ & $0.00$\\
$\beta_2=0.5$ & $0.06$ & $0.00$ & $0.02$\\
\end{tabular}
\end{table}

We provide a simulation study to compare copula updates with different variable orderings with \textsc{mad} sequences.
We consider a classification problem with two count covariates.
For each sample size $n=\{50,100,150\}$, we generate $50$ simulated datasets.
For each simulation, we sample $X_{i1}$ from the mixture $0.7\mathrm{Poisson}(3) + 0.3\mathrm{Poisson}(12)$, $X_{i2}\sim\mathrm{Poisson}(9)$, and $Y_i\mid x_{i1},x_{i2}\sim\mathrm{Bin}\{1,\mathrm{logit}^{-1}(\eta_i)\}$, where $\eta_i = 6 - 2.1x_{i1} + \beta_2 x_{i2}$, $i=1,\ldots,n$.
We consider settings with $\beta_2=0$ -- i.e. $X^{(2)}$ is a spurious covariate -- and $\beta_2=0.5$, respectively.
We consider two versions of the copula update, \textsc{cop-a} and \textsc{cop-b}, that sequentially update
\begin{equation*}
    X_{i1}, \quad X_{i2}\mid x_{i1}, \quad Y_i\mid x_{i1}, x_{i2},
    \qquad \mathrm{and}\qquad
    X_{i2}, \quad X_{i1}\mid x_{i2}, \quad 
    Y_i\mid x_{i2}, x_{i1},
\end{equation*}
respectively, for $i=1,\ldots,n$.
For all methods, we use weights $w_n=(2-n^{-1})(n+1)^{-1}$ and select the corresponding hyperparameters by maximizing the prequential log-likelihood.
We evaluated the predictive accuracy out of sample using the \textsc{roc} curve computed on $10^5$ new observations and compared the corresponding \textsc{auc}. As reported in Table~\ref{tab:ill_copula_2}, \textsc{cop-b} outperforms \textsc{cop-a} in almost every simulated dataset in the six scenarios. This suggests that predictive performance is sensitive to the ordering of predictor variables. As shown in Figure~\ref{fig:ill_copula_2}, \textsc{mad} sequences consistently have larger \textsc{auc} than both versions of the copula update in every scenario.

\section{Additional Figures}
\label{appendix_figures}

\begin{figure}[H]
\centering
\includegraphics[width=0.9\textwidth]{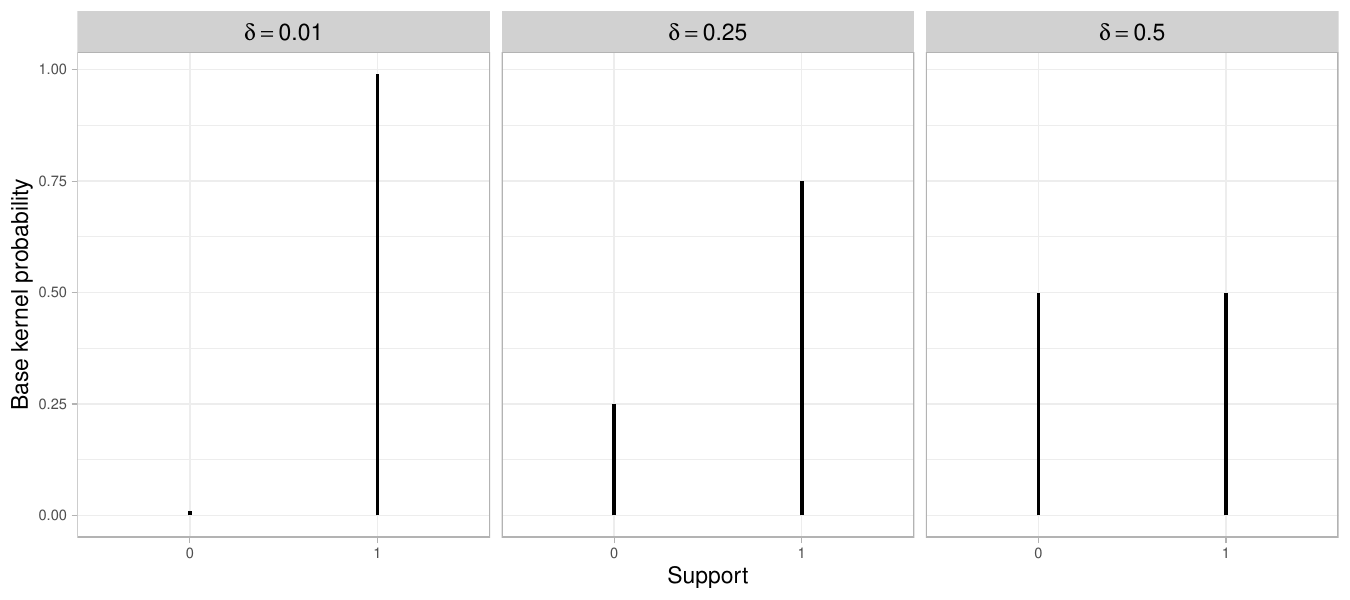}
\caption{\small Base kernel for binary data when $y_n = 1$ and $\delta=\{0.01, 0.25, 0.5\}$.}
\label{fig:binary_base_kernel}
\end{figure}

\begin{figure}[H]
\centering
\includegraphics[width=0.8\textwidth]{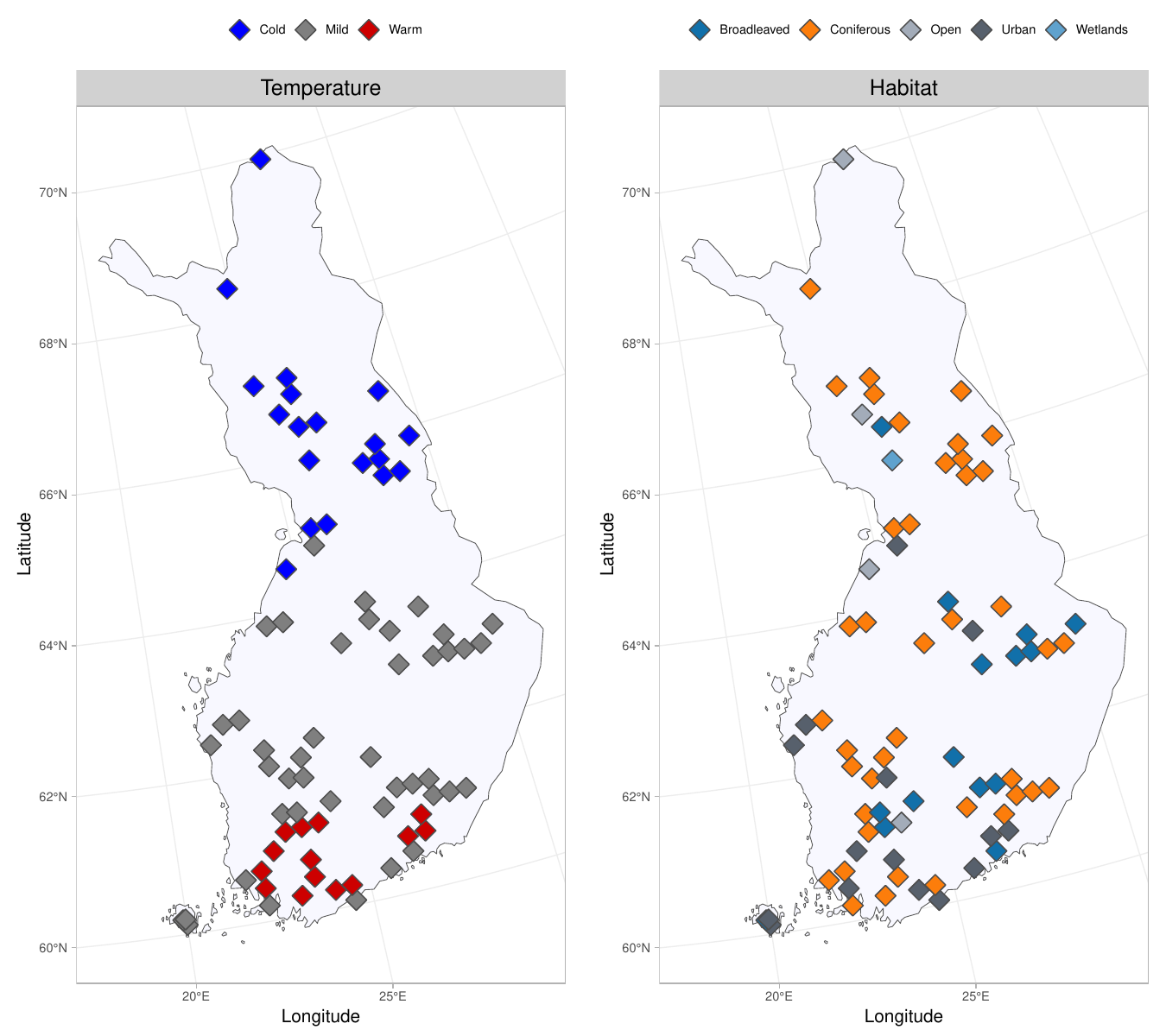}
\caption{\small 
Temperature in April-May (left) and habitat type (right) for each site location.}
\label{fig:corvids_map}
\end{figure}

\begin{figure}[H]
\centering
\includegraphics[width=0.7\textwidth]{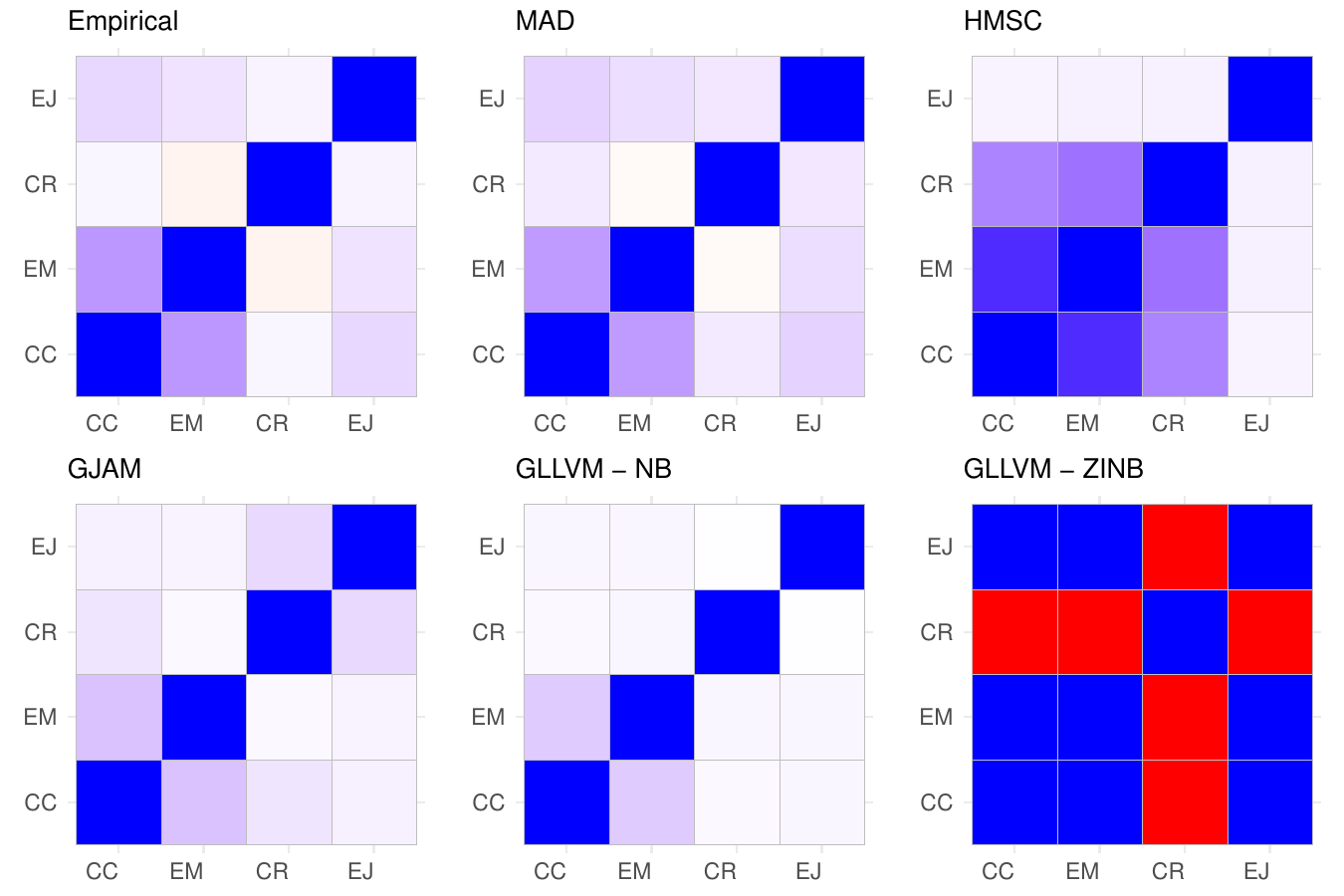}
\caption{\small Inferred correlation between corvid species: carrion crow (\textsc{cc}), Eurasian magpie (\textsc{em}), common raven (\textsc{cr}), and Eurasian jay (\textsc{ej}).  (blue = positive correlation, red = negative correlation, white = zero correlation).}
\label{fig:corvids_3}
\end{figure}

\begin{figure}[H]
\centering
\includegraphics[width=0.65\textwidth]{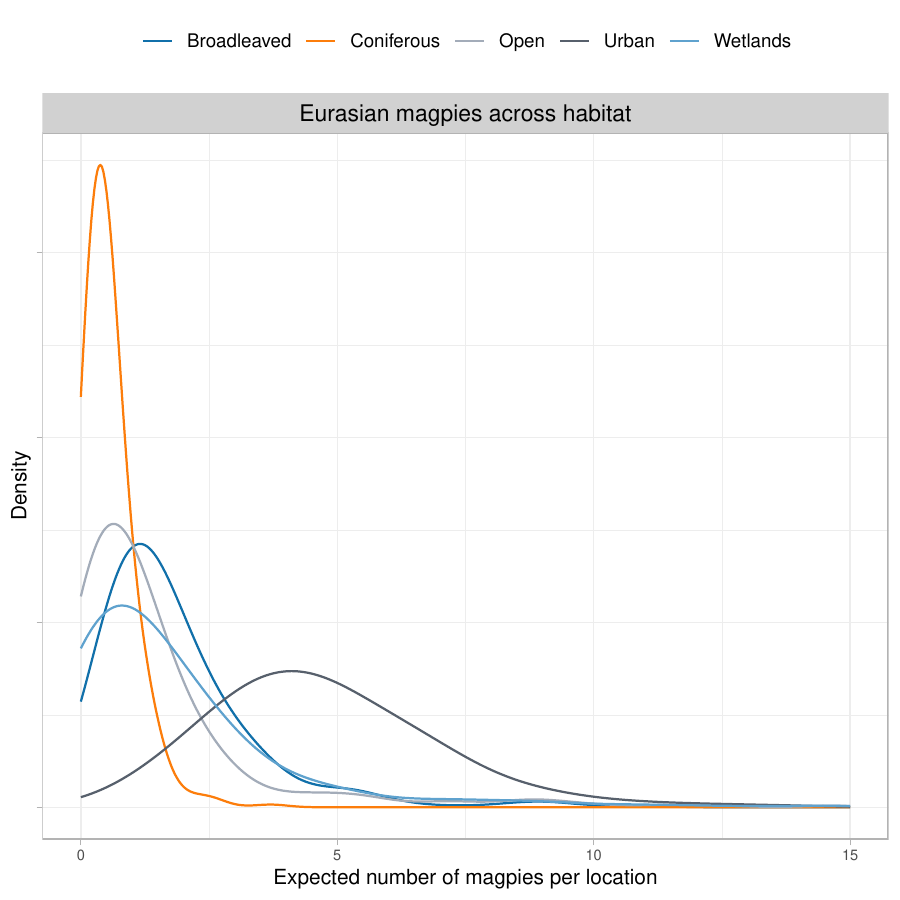}
\caption{\small 
Posterior distribution of the expected number of Eurasian magpies observed at each location in different habitats.}
\label{fig:magpies_habitats}
\end{figure}

\clearpage
\bibliography{biblio}

\end{document}